\long\def\@makecaption#1#2{}
\definecolor{bluekeywords}{rgb}{0.13, 0.13, 1}
\definecolor{greencomments}{rgb}{0, 0.5, 0}
\definecolor{redstrings}{rgb}{0.9, 0, 0}
\definecolor{graynumbers}{rgb}{0.5, 0.5, 0.5}
\lstdefinelanguage{upl}%
{morekeywords={%
		assume,assert,%
		while,if
	},%
	numbers=none, %
	mathescape=true,%
	sensitive,%
	commentstyle=\color{blue},
	morecomment=[l]//,%
	morecomment=[s]{/*}{*/},%
	morestring=[b]",%
	morestring=[b]',%
	showstringspaces=false
}[keywords,comments,strings]%
\newcommand{\terms}{\mathcal{T}}
\newcommand{\const}{\mathcal{C}}
\newcommand{\funs}{\mathcal{F}}
\newcommand{\consts}{\const}
\newcommand{\rename}{\rightarrowtail}
\newcommand{\final}{\bot}
\newcommand{\euf}{EUF\xspace}
\newcommand{\pv}[1]{\texttt{#1}}
\newcommand{\hg}[1]{\textcolor{blue}{[HG: #1]}}
\newcommand{\sharon}[1]{\textcolor{red}{[sharon: #1]}}
\newcommand{\ag}[1]{\textcolor{magenta}{[AG: #1]}}
\renewcommand{\hg}[1]{}
\renewcommand{\ag}[1]{}
\renewcommand{\sharon}[1]{}
\newcommand{\eqdef}{\triangleq}
\newcommand{\hole}{?}
\newcommand{\context}{Z}
\newcommand{\eq}{\approx}
\newcommand{\deq}{\not\approx}
\newcommand{\vr}{\vec{r}}
\newcommand{\vw}{\vec{w}}
\renewcommand{\gets}{\mathbin{:=}}
\newcommand{\pvV}{\pv{V}}
\newcommand{\pvv}{\pv{v}}
\newcommand{\pvx}{\pv{x}}
\newcommand{\pvy}{\pv{y}}
\newcommand{\pvc}{\pv{c}}
\newcommand{\exba}{\alpha_b}
\DeclareMathOperator{\base}{base}
\DeclareMathOperator{\depth}{depth}
\newcommand{\TS}{\mathcal{S}}
\newcommand{\Conf}{C}
\newcommand{\initconf}{c_0}
\newcommand{\Tr}{\mathcal{R}}
\newcommand{\tuple}[1]{\langle #1 \rangle}
\newcommand{\genAbs}{\sharp}
\newcommand{\ConfGenAbs}{\Conf}
\newcommand{\toGenAbs}{\to^{\genAbs}}
\newcommand{\TSGenAbs}{\genAbs(\TS)}
\newcommand{\TrGenAbs}{\Tr^{{\genAbs}}}
\newcommand{\initconfGenAbs}{\initconf^{{\genAbs}}}
\newcommand{\bbase}{base_\beta}
\newcommand{\rn}{\mathit{r}}
\newcommand{\cover}{\mathbb{C}}
\newcommand{\loc}{\mathcal{L}}
\newcommand{\limp}{\Rightarrow}
\newcommand{\inv}{\eta}
\newcommand{\Reach}{\mathit{Reach}} 
\newcommand{\qinit}{q_0}
\newcommand{\peuf}{\mathcal{P}_{\euf}}
\newtheorem{slemma}[lemma]{Lemma}
\newtheorem{alemma}{Lemma}[section]
\title{Logical Characterization of Coherent Uninterpreted Programs}
\author{
\IEEEauthorblockN{Hari Govind V K}
\IEEEauthorblockA{\textit{University of Waterloo}}
\and
\IEEEauthorblockN{Sharon Shoham}
\IEEEauthorblockA{\textit{Tel-Aviv University}}
\and
\IEEEauthorblockN{Arie Gurfinkel}
\IEEEauthorblockA{\textit{University of Waterloo}}
}
\date{May 2021}
\begin{document}
\pagestyle{plain}
\Crefname{algorithm}{Alg.}{Alg.}
\crefname{algorithm}{Alg.}{Alg.}
\Crefname{section}{Sec.}{Sec.}
\crefname{section}{Sec.}{Sec.}
\Crefname{figure}{Fig.}{Fig.}
\crefname{figure}{Fig.}{Fig.}
\Crefname{equation}{Eq.}{Eq.}
\crefname{equation}{Eq.}{Eq.}
\Crefname{definition}{Def.}{Def.}
\crefname{definition}{Def.}{Def.}
\Crefname{lemma}{Lm.}{Lm.}
\crefname{lemma}{Lm.}{Lm.}
\Crefname{slemma}{Lm.}{Lm.}
\crefname{slemma}{Lm.}{Lm.}
\Crefname{alemma}{Lm.}{Lm.}
\crefname{alemma}{Lm.}{Lm.}
\Crefname{theorem}{Thm.}{Thm.}
\crefname{theorem}{Thm.}{Thm.}
\Crefname{corollary}{Cor.}{Cor.}
\crefname{corollary}{Cor.}{Cor.}

\maketitle
\begin{abstract}
An uninterpreted program (UP) is a program whose semantics is defined over the theory of uninterpreted functions. This is a common abstraction used in equivalence checking, compiler optimization, and program verification. While simple, the model is sufficiently powerful to encode counter automata, and, hence, undecidable. Recently, a class of UP programs, called coherent, has been proposed and shown to be decidable. We provide an alternative, logical characterization, of this result. Specifically, we show that every coherent program is bisimilar to a finite state system. Moreover, an inductive invariant of a coherent program is representable by a formula whose terms are of depth at most 1. We also show that the original proof, via automata, only applies to programs over unary uninterpreted functions. 
While this work is purely theoretical, it suggests a novel abstraction that is complete for coherent programs but can be soundly used on \emph{arbitrary} uninterpreted (and partially interpreted) programs.
\end{abstract}
\section{Introduction}
\label{sec:intro}

The theory of Equality with Uninterpreted Functions (EUF) is an important
fragment of First Order Logic, defined by a set of functions,
equality axioms, and congruence axioms. Its satisfiability problem is decidable.
It is a core theory of most SMT solvers, used as a glue (or abstraction) for
more complex theories. A closely related notion is that of Uninterpreted
Programs (UP), where all basic operations are defined by uninterpreted
functions. Feasibility of a UP computation is characterized by satisfiability of its path condition in EUF.
UPs provide a natural abstraction layer
for reasoning about software. They have been used (sometimes without explicitly
being named), in equivalence checking of pipelined
microprocesors~\cite{DBLP:conf/cav/BurchD94}, and equivalence checking of C
programs~\cite{DBLP:conf/vstte/StrichmanG05}. They also provide the foundations
of Global Value Numbering (GVN) optimization in many modern
compilers\cite{DBLP:conf/popl/Kildall73,DBLP:conf/sas/GulwaniN04,dblp:conf/vmcai/muller-olmrs05}.

Unlike EUF, reachability in UP is undecidable. That is, in the \emph{lingua franca} of SMT,
the satisfiability of Constrained Horn Clauses over
EUF is undecidable. Recently, Mathur et
al.~\cite{DBLP:journals/pacmpl/MathurMV19}, have proposed a variant of UPs,
called \emph{coherent uninterpreted program} (CUPs). The precise definition of
coherence is rather technical (see Def.~\ref{def:cup}), but intuitively the
program is restricted from depending on arbitrarily deep terms. The key result
of~\cite{DBLP:journals/pacmpl/MathurMV19} is to show that both reachability of
CUPs and deciding whether an UP is coherent are decidable. This makes CUP an
interesting infinite state abstraction with a \emph{decidable} reachability
problem.

Unfortunately, as shown by our counterexample in \Cref{fig:cup} (and described in Sec.~\ref{sec:char}), the
key construction in~\cite{DBLP:journals/pacmpl/MathurMV19} is incorrect. More
precisely, the proofs of~\cite{DBLP:journals/pacmpl/MathurMV19} hold only of
CUPs restricted to unary functions. In this paper, we address this bug. 
We provide an alternative (in our view simpler) proof of decidability and extend the results from reachability to arbitrary model checking. The case of
 non-unary CUPS is much more complex than unary. This is not surprising, since
similar complications arise in related results on Uniform Interpolation~\cite{DBLP:conf/cilc/GhilardiGK20} and
Cover~\cite{DBLP:conf/esop/GulwaniM08} for EUF.

Our key result is a logical characterization of CUP. We show that the set of
reachable states (i.e., the strongest inductive invariant) of a CUP is definable
by an EUF formula, over program variables, with terms of depth at most 1. That is, the
most complex term that can appear in the invariant is of the form $v \eq
f(\Vec{w})$, where $v$ and $\Vec{w}$ are program variables, and $f$ a function.

This characterization has several important consequences since the number of
such bounded depth formulas is finite. Decidability of reachability, for example, follows
trivially by enumerating all possible candidate inductive invariants. More importantly from a practical perspective, it leads to an efficient
analysis of \emph{arbitrary} UPs. Take a UP $P$, and check whether it has a safe
inductive invariant of bounded terms. Since the number of terms is finite, this
can be done by implicit predicate abstraction~\cite{DBLP:conf/tacas/CimattiGMT14}. If no invariant is found,
and the counterexample is not feasible, then $P$ is not a CUP. At this point, the
process either terminates, or another verification round is done with predicates
over deeper terms. Crucially, this does not require knowing whether $P$ is a CUP
apriori -- a problem that itself is shown in~\cite{DBLP:journals/pacmpl/MathurMV19} to
be at least PSPACE.

We extend the results further and show that CUPs are bisimilar to a finite
state system, showing, in particular, that arbitrary model checking for
CUP (not just reachability) is decidable.

Our proofs are structured around a series of abstractions, illustrated in a commuting diagram in \cref{fig:cd}. Our key abstraction is the base abstraction $\alpha_b$. It forgets terms deeper than depth 1, while maintaining all their consequences (by using additional fresh variables).  We show that $\alpha_b$ is sound and complete (i.e., preserves all properties) for CUPs (while, sound, but not complete for UP).  It is combined with a cover abstraction $\alpha_\cover$, that we borrow from~\cite{DBLP:conf/esop/GulwaniM08}. The cover abstraction ensures that reachable states are always expressible over program variables. It serves the purpose of existential quantifier elimination, that is not available for EUF. Finally, a renaming abstraction $\alpha_r$ is a technical tool to bound the occurrences of constants in abstract reachable states.

The rest of the paper is structured as follows. We review the necessary background on EUF in \cref{sec:background}. 
We introduce our formalization of UPs and CUPs in \cref{sec:up}. \Cref{sec:abs} presents bisimulation inducing abstractions for UP. \Cref{sec:extabase} presents our base abstraction and shows that it induces a bisimulation for CUPs. \Cref{sec:char} develops logical characterization for CUPs, presents our decidability results, and shows that a finite state abstraction of CUPs is computable. We conclude the paper in \cref{sec:conclusion} with summary of results and a discussion of open challenges and future work.

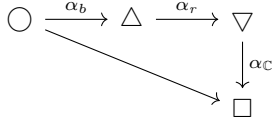
\begin{figure}
  \centering
\begin{tikzcd}
  \bigcirc \arrow{2-3} \arrow{r}{\exba} &
  \bigtriangleup \arrow{r}{\alpha_\rn}& \bigtriangledown \arrow{d}{\alpha_\cover}\\%
   & &\square
\end{tikzcd}
\caption{Sequence of abstractions used in our proofs. \label{fig:cd}
}
\end{figure}
\section{Background}
\label{sec:background}

We assume that the reader is familiar with the basics of First Order Logic
(FOL), and the theory of Equality and Uninterpreted Functions (EUF). We use
$\Sigma = (\consts, \funs, \{\eq, \deq\})$ to denote a FOL signature with
constants $\consts$, functions $\funs$, and predicates $\{\eq, \deq\}$,
representing equality and disequality, respectively. A term is a constant or
(well-formed) application of a function to terms. A literal is either $x \eq y$
or $x \deq y$, where $x$ and $y$ are terms. A formula is a Boolean combination
of literals. We assume that all formulas are quantifier free unless stated
otherwise. We further assume that all formulas are in Negation Normal Form
(NNF), so negation is defined as a shorthand: $\neg (x \eq y) \eqdef x \deq y$,
and $\neg (x \deq y) \eqdef x \eq y$. Throughout the paper, we use $\bowtie$ to
indicate a predicate in $\{\eq, \deq\}$. For example, $\{x \bowtie y\}$ means
$\{x \eq y, x \deq y\}$. We write $\bot$ for false, and $\top$ for true. We do
not differentiate between sets of literals $\Gamma$ and their conjunction
$(\bigwedge \Gamma)$. We write $\depth(t)$ for the maximal depth of function
applications in a term $t$. We write $\terms(\varphi)$, $\consts(\varphi)$, and
$\funs(\varphi)$ for the set of all terms, constants, and functions, in
$\varphi$, respectively, where $\varphi$ is either a formula or a collection of
formulas. Finally, we write $t[x]$ to mean that the term $t$ contains $x$ as a subterm. 

For a formula $\varphi$, we write $\Gamma \models \varphi$ if $\Gamma$
\emph{entails} $\varphi$, that is every model of $\Gamma$ is also a model of
$\varphi$. For any literal $\ell$, we write $\Gamma \vdash \ell$, pronounced
$\ell$ is \emph{derived} from $\Gamma$, if $\ell$ is derivable from
$\Gamma$ by the usual EUF proof system $\mathcal{P}_{\euf}$.\footnote{Shown in
  Appendix~\ref{sec:euf_extra}.} By refutational completeness of $\peuf$,
$\Gamma$ is unsatisfiable iff $\Gamma \vdash \bot$.



Given two \euf formulas $\varphi_1$ and $\varphi_2$ and a set of constants $V \subseteq \consts$,
we say that the formulas are $V$-equivalent, denoted $\varphi_1 \equiv_V
\varphi_2$, if, for all quantifier free \euf formulas $\psi$ such that
$\const(\psi) \subseteq V$, $(\varphi_1 \wedge \psi) \models \bot$ if and only
if $(\varphi_2 \wedge \psi) \models \bot$.

\begin{example}
  Let $\varphi_1 = \{x_1 \eq f(a_0, x_0), y_1 \eq f(b_0, y_0), x_0 \eq y_0\}$,
  $\varphi_2 = \{x_1 \eq f(a_0, w), y_1 \eq f(b_0, w)\}$, $\varphi_3 = \{x_1 \eq
  f(a_0, x_0), y_1 \eq f(b_0, y_0)\}$, and $V = \{x_1, y_1, a_0, b_0\}$. Then,
  $\varphi_1 \equiv_V \varphi_2$ but $\varphi_1 \not\equiv_V \varphi_3$.
\end{example}
While EUF does not admit quantifier elimination, it does admit elimination of
constants while preserving quantifier free consequences. Formally, a
\emph{cover}~\cite{DBLP:conf/esop/GulwaniM08,
  DBLP:conf/cade/CalvaneseGGMR19,DBLP:conf/cilc/GhilardiGK20}
of an EUF formula $\varphi$ w.r.t. a set of constants $V$ is an EUF formula
$\psi$ such that $\consts(\psi) \subseteq \consts(\varphi) \setminus V$ and
$\varphi \equiv_{\consts(\varphi) \setminus V} \psi$.
By~\cite{DBLP:conf/esop/GulwaniM08}, such $\psi$ exists and is unique up to
equivalence; we denote it by $\cover V \cdot \varphi$.



\section{Uninterpreted Programs}
\label{sec:up}

An \emph{uninterpreted program (UP)} is a program in the \emph{uninterpreted
  programming language (UPL)}. The \emph{syntax} of UPL is shown in
\Cref{fig:syntax}. Let $\pvV$ denote a fixed set of program variables.
We use lower case letters in a special font: $\pvx$, $\pvy$, etc. to denote
individual variables in $\pvV$. We write $\Vec{\pvy}$ for a list of program
variables. Function symbols are taken from a fixed set $\funs$.
As in~\cite{DBLP:journals/pacmpl/MathurMV19}, w.l.o.g., UPL does not
allow for Boolean combination of conditionals and relational symbols.


\begin{figure}[t]
\begin{align*}
    \langle stmt \rangle ::=\; &\mathbf{skip} \mid \langle var \rangle \gets \langle var \rangle \mid \langle var \rangle \gets f(\Vec{\langle var \rangle}) \mid \\
               &\mathbf{assume}\;(\langle cond \rangle) \mid \langle stmt \rangle \mathbin{;} \langle stmt \rangle \mid \\
               &\mathbf{if}\;(\langle cond \rangle ) \;\mathbf{then} \;\langle stmt \rangle  \;\mathbf{else}\; \langle stmt \rangle \mid \\
               &\textbf{while}\;(\langle cond \rangle) \;\langle stmt \rangle\\
    \langle cond\rangle ::=\; &\langle var \rangle = \langle  var \rangle \mid \langle var  \rangle \neq \langle  var \rangle\\
  \langle var \rangle ::=\; & \pvx \mid \pvy \mid \cdots
\end{align*}
\caption{Syntax of the programming language UPL.}
\label{fig:syntax}
\end{figure}

The small step symbolic operational semantics of UPL is defined with respect to
a FOL signature $\Sigma = (\consts, \funs, \{\eq, \deq\})$ by the rules shown in
Fig.~\ref{fig:ssesem}. A program \emph{configuration} is a triple $\langle s, q,
pc \rangle$, where $s$, called a statement, is a UP being executed, $q :
\pvV \to \consts$ is a \emph{state} mapping program variables to constants in
$\const$, and $pc$, called the \emph{path condition}, is a EUF formula
over $\Sigma$.
We use $\consts(q) \eqdef \{c \mid \exists \pvv \cdot q(\pvv)
= c\}$ to denote the set of all constants that represent current variable
assignments in $q$. With abuse of notation, we use $\const(q)$ and $q$
interchangebly. We write $\equiv_q$ to mean $\equiv_{\const(q)}$.

\renewcommand{\final}{\mathbf{skip}}
\begin{figure}
\centering
\begin{gather*}
\langle \mathbf{skip}\mathbin{;}s, q, pc \rangle \to \langle s, q, pc \rangle\\[1ex]
\prftree{\langle s_1, q, pc\rangle \to \langle s_1', q', pc'\rangle}%
{\langle s_1\mathbin{;}s_2, q, pc \rangle \to \langle s_1'\mathbin{;}s_2, q', pc' \rangle}\\[1ex]
\prftree{\langle c, q\rangle\Downarrow v\qquad}{(pc \land v) \not \models \bot}%
{\langle \mathbf{assume}(c), q, pc \rangle \to \langle \final, q, pc\wedge v\rangle}\\[1ex]
\prftree{\langle e, q\rangle \Downarrow v\qquad}{\text{$x' \in \consts(\Sigma)$ is fresh in $pc$}}{\langle \pvx \gets e, q, pc \rangle \to \langle\final, q[\pvx \mapsto x'], pc \land x' = v\rangle}\\[1ex]
\begin{aligned}
\langle \mathbf{if}\;(c)\;\mathbf{then}\;s_1\;\mathbf{else}\;s_2, q, pc \rangle &\to \langle \mathbf{assume}(c)\mathbin{;}s_1, q, pc\rangle\\[1ex]
\langle \mathbf{if}\;(c)\;\mathbf{then}\;s_1\;\mathbf{else}\;s_2, q, pc \rangle &\to \langle \mathbf{assume}(\neg c)\mathbin{;}s_2, q, pc\rangle
\end{aligned}\\[1ex]
\begin{multlined}
    \langle \mathbf{while}\;(c)\;s, q, pc \rangle \to{} \\ \qquad\qquad \langle \mathbf{if}\;(c)\;\mathbf{then}\;(s\mathbin{;} \mathbf{while}\;(c)\;s)\;\mathbf{else}\;\mathbf{skip}, q, pc \rangle
\end{multlined}
\end{gather*}
\caption{Small step symbolic operational semantics of UPL, where $\neg c$ denotes $\pvx \neq \pvy$ when $c$ is $\pvx = \pvy$, and
$\pvx = \pvy$ when $c$ is $\pvx \neq \pvy$.}
\label{fig:ssesem}
\end{figure}
For a state $q$, we write $q[\pvx \mapsto x']$ for a state $q'$ that is
identical to $q$, except that it maps $\pvx$ to $x'$.
We write $\langle e, q
\rangle \Downarrow v$ to denote that $v$ is the value of the expression $e$ in
state $q$, i.e., the result of substituting each program variable $\pv{x}$ in
$e$ with $q(\pv{x})$, and replacing functions and predicates with their FOL
counterparts.
The value of $e$ is an FOL term or an FOL formula over $\Sigma$.
For example, $\langle \pvx = \pvy, [\pvx \mapsto x, \pvy \mapsto y]
\rangle \Downarrow x \eq y$. 

Given two configurations $c$ and $c'$, we write $c
\to c'$ if $c$ reduces to $c'$ using one of the rules in \Cref{fig:ssesem}.
Note that there is no rule for $\textbf{skip}$ -- the program terminates once it gets into a configuration $\langle \textbf{skip}, q, pc\rangle$.

Let $\consts_0 = \{v_0 \mid \pvv \in \pvV\} \subseteq \consts$ be a set of initial constants. In
the initial state $\qinit$ of a program, every variable is mapped to
the corresponding initial constant, i.e., $\qinit(\pvv) = v_0$.

The operational semantics induces, for an UP $P$, a transition system $\TS_P =
\tuple{\Conf,\initconf,\Tr}$, where $\Conf$ is the set of configurations,
$\initconf \eqdef \langle P, \qinit, \top \rangle$ is the initial
configuration, and $\Tr \eqdef \{(c, c') \mid c \to c'\}$.
A configuration $c$ of $P$ is \emph{reachable} if $c$ is reachable from
$\initconf$ in $\TS_P$. 
We denote the set of all reachable configurations in $\TS_P$ using $\Reach(\TS_P)$. The set of all statements in the semantics of $P$, including the intermediate statements, are called
\emph{locations} of $P$, and are denoted by $\loc(P)$.
We often use $P$ and $\TS_P$ interchangeably. 

Our semantics of UPL differs in some respects from the one
in~\cite{DBLP:journals/pacmpl/MathurMV19}. First, we follow a more traditional
small-step operational semantics presentation, by providing semantics rules and
the corresponding transition system. However, this does not change the semantics
conceptually. More importantly, we ensure that the path condition remains
satisfiable in all reachable configurations (by only allowing an assume statement to execute when
it results in a satisfiable path condition). We believe this is a more natural
choice that is also consistent with what is typically used in other symbolic
semantics.
UP reachability under our semantics coincides with the definition of~\cite{DBLP:journals/pacmpl/MathurMV19}.
\begin{definition}[UP Reachability]
  \label{def:reach}
  \hg{modified}
   Given an UP $P$, determine whether there exists a state $q$ and a path condition $pc$ s.t., the configuration
  $\langle\textbf{skip}, q, pc\rangle$ is reachable in $P$.
\end{definition}

A certificate for unreachability of location $s$, is an inductive assertion map $\inv$ (or an inductive invariant) s.t. $\inv(s) = \bot$.

\begin{definition}[Inductive Assertion Map]
  \label{def:inductive}
  Let $\Sigma_0 \eqdef (\consts_0, \funs, \{\eq, \deq\})$, be restriction of $\Sigma$ to $\consts_0$.
  An \emph{inductive assertion map} of an UP $P$, is
  a map $\inv : \loc(P) \to EUF(\Sigma_0)$ s.t. (a) $\inv(P) = \top$,
  and (b) if $\langle s, \qinit, \inv(s) \rangle \to \langle s', q', pc' \rangle$,
  then $pc' \models (\inv(s')[v_0 \mapsto q'(\pvv) \mid \pvv \in \pvV])$.
\end{definition}


In~\cite{DBLP:journals/pacmpl/MathurMV19}, a special sub-class of UPs has been introduced with a decidable reachability problem.

\begin{definition}[Coherent Uninterpreted Program~\cite{DBLP:journals/pacmpl/MathurMV19}]
  \label{def:cup}%
An UP $P$ is \emph{coherent} (CUP) if all of the reachable
configurations of $P$ satisfy the following two properties:
\begin{description}[topsep=0pt,noitemsep]
\item[Memoizing] for any configuration $\langle \pvx \gets f(\Vec{\pvy}), q,
  pc \rangle$, if there is a term $t \in \terms(pc)$ s.t. $pc \models t
  \eq f(q(\Vec{\pvy}))$, then there is $\pvv \in \pvV$ s.t. $pc \models
  q(\pvv) \eq t$. 
\item[Early assume] for any configuration\\\mbox{$\langle \mathbf{assume}(\pvx =
    \pvy), q, pc \rangle$}, if there is a term $t \in \terms(pc)$ s.t. $pc
  \models t \eq s$ where $s$ is a superterm of either $q(\pvx)$ or $q(\pvy)$,
  then, there is $\pvv \in \pvV$ s.t. $pc \models q(\pvv) \eq t$. 
\end{description}
\end{definition}
 Intuitively, memoization ensures that if a term is recomputed, then it is already stored in a program variable; early assumes ensures that whenever an equality between variables is assumed, any of their superterms that was ever computed is still stored in a program variable. Note that unlike the original definition of CUP in~\cite{DBLP:journals/pacmpl/MathurMV19}, we do not require the notion of an
\emph{execution}. The path condition accumulates the history of the execution in
a configuration, which is sufficient. 

\begin{figure}
\centering
\begin{subfigure}[t]{0.2\textwidth}
  \begin{lstlisting}[numbers=left]
  x := t;
  y := t;
  while (c != d) {
    x := n(x);
    y := n(y);
    c := n(c);
  };
  x := f(a, x);
  y := f(b, y);
  assume(a == b);
  assume(x != y);
  \end{lstlisting}
\end{subfigure}
\begin{subfigure}[t]{0.2\textwidth}
\[\footnotesize
    \begin{array}{l}
      x_0 \eq t_0\\
      x_0 \eq t_0 \land y_0 \eq t_0\\
      x_0 \eq y_0\\
      x_0 \eq n(y_0) \land c_0 \deq d_0\\
      x_0 \eq y_0 \land c_0 \deq d_0\\
      x_0 \eq y_0\\
      \\
      x_0 \eq f(a_0, y_0) \land c_0 \eq d_0\\
      (a_0 \eq b_0 \limp x_0 \eq y_0) \land c_0 \eq d_0\\
      a_0 \eq b_0 \land x_0 \eq y_0 \land c_0 \eq d_0 \\
      \bot
    \end{array}
\]
\end{subfigure}
\caption{An example CUP program and its inductive assertions.}
\label{fig:cup}
\end{figure}
\begin{example}\NoEndMark
An example of a CUP is shown in Fig.~\ref{fig:cup}. Some reachable states in the first iteration of the loop are shown below, where line numbers are used as locations, and $pc_i$ stands for the path condition at line $i$:
\[
\begin{lgathered}
\langle 2, q_0[\pvx \mapsto x_1, \pvy \mapsto y_1], x_1 \eq t_0 \land y_1 \eq t_0\rangle\\
\begin{multlined}[t]
      \langle 6, q_0[\pvx \mapsto x_2, \pvy \mapsto y_2, \pvc \mapsto c_1], pc_2
      \land{}\\ \quad c_0 \deq d_0 \land x_2 \eq n(x_1) \land y_2 \eq n(y_1) \land c_1 \eq n(c_0)\rangle
\end{multlined}\\
\begin{multlined}[t]
      \langle 9, q_0[\pvx \mapsto x_3, \pvy \mapsto y_3, \pvc \mapsto c_1]\rangle,
       pc_6 \land{}\\ \qquad\qquad c_1 \eq d_0 \land x_3 \eq f(a_0, x_2) \land y_3 \eq
                                      f(b_0, y_2) \rangle
\end{multlined}
\end{lgathered}
\]
The program is coherent because (a)~no term is recomputed; (b)~for the assume at line~10, the only superterms of $a_0$ and $b_0$ are $f(a_0, x_n)$ and $f(b_0, y_n)$, and they are stored in $\pvx$ and $\pvy$, respectively; and \hg{added}(c)~for the assume $(c_n = d_0)$ introduced by the exit condition of the while loop, no superterms of $c_n$, $d_0$ are ever computed.
The program does not reduce to 
$\final$ (i.e., it does not reach a final configuration). Its inductive assertion map is shown in
Fig.~\ref{fig:cup} (right).
\hfill\ExampleSymbol

\end{example}

Note that UP are closely related, but are not equivalent, to the
Herbrand programs of~\cite{dblp:conf/vmcai/muller-olmrs05}. While Herbrand
programs use the syntax of UPL, they are interpreted over a fixed universe of
Herbrand terms. In particular, in Herbrand programs $f(x) \eq g(x)$ is always
false (since $f(x)$ and $g(x)$ have different top-level functions),
while in UP, it is satisfiable. 
%

\section{Abstraction and Bisimulation for UP}
\label{sec:abs}
In this section, we review abstractions for transition systems. We then define two abstraction for UP: cover and renaming, and show that they induce bisimulation. That is, for UP, these abstractions preserve all properties. Finally, we show a simple logical characterization result for UP to set the stage for our main results in the following sections.

\begin{definition}
\label{def:abstract-TS-general}
Given a transition system $\TS = (\Conf, \initconf,\Tr)$
and a (possibly partial) abstraction function $\genAbs: \Conf \to \ConfGenAbs$, 
the induced \emph{abstract transition system}
is $\TSGenAbs = (\ConfGenAbs, \initconfGenAbs, \TrGenAbs)$, where
\begin{align*}
\initconfGenAbs &\eqdef \genAbs(\initconf) \\
\TrGenAbs &\eqdef \{(c_\sharp,c_\sharp') \mid \exists c, c'.~c \to c'~\land~c_\sharp = \genAbs(c)~\land~c_\sharp' = \genAbs(c') \}
\end{align*}
We write $c \toGenAbs c'$ when $(c,c') \in \TrGenAbs$. Note that $\genAbs$ must be defined for $\initconf$.
\end{definition}

Throughout the paper, we construct several abstract transition systems. All transition systems considered are \emph{attentive}. Intuitively, this means that their transitions do not distinguish between configurations that have $q$-equivalent path conditions. We say that two configurations $c_1 = \langle s, q, pc_1\rangle$ and $c_2 = \langle s, q, pc_2\rangle$ are equivalent, denoted $c_1 \equiv c_2$ if $pc_1 \equiv_q pc_2$. 
\begin{definition}[Attentive TS]
\label{def:attentive}
A transition system $\TS = (\Conf, \initconf, \Tr)$ is \emph{attentive} if for any two configurations $c_1, c_2 \in \Conf$ s.t. $c_1 \equiv c_2$, if there exists $c_1'\in \Conf$ s.t. $(c_1, c_1') \in \Tr$, then there exists $c_2'\in \Conf$, s.t. $(c_2, c_2')\in \Tr$ and $c_1' \equiv c_2'$ and vice versa. 
\end{definition}

Weak, respectively strong, preservation of properties between the abstract and the concrete transition systems
are ensured by the notions of \emph{simulation}, respectively \emph{bisimulation}.

\begin{definition}[\cite{DBLP:books/daglib/0067019}]
Let $\TS = (\Conf, \initconf, \Tr)$ and $\TSGenAbs = (\ConfGenAbs, \initconfGenAbs, \TrGenAbs)$
be transition systems. A relation  $\rho \subseteq \Conf \times \ConfGenAbs$ is a \emph{simulation} from $\TS$ to $\TSGenAbs$,
if for every $(c,c_\sharp) \in \rho$:
\begin{itemize}
\item if $c \to c'$ then there exists $c_\sharp'$ such that $c_\sharp \toGenAbs c_\sharp'$ and $(c',c_\sharp') \in \rho$.
\end{itemize}
$\rho \subseteq \Conf \times \ConfGenAbs$ is a \emph{bisimulation} from $\TS$ to $\TSGenAbs$ if
$\rho$ is a simulation from $\TS$ to $\TSGenAbs$ and
$\rho^{-1}\eqdef \{(c_\sharp,c) \mid (c,c_\sharp) \in \rho \}$ is a simulation from $\TSGenAbs$ to $\TS$.
We say that $\TSGenAbs$ \emph{simulates}, respectively \emph{is bisimilar to}, $\TS$ if there exists
a simulation, respectively, a bisimulation, $\rho$ from $\TS$ to $\TSGenAbs$ such that $(\initconf, \initconfGenAbs) \in \rho$.
\end{definition}

We say that a bisimulation $\rho \subseteq \Conf \times \ConfGenAbs$ is \emph{finite} if its range, $\{ \rho(c) \mid c \in \Conf \}$, is finite. A finite bisimulation relates a (possibly infinite) transition system with a finite one.

Next, we define two abstractions for UP programs and show that they result in bisimilar abstract transition systems.
The first abstraction eliminates all constants that are not assigned to program variables from the path condition, using the cover operation.
The second abstraction renames the constants assigned to program variables back to the initial constants $\consts_0$.
Both abstractions together ensure that all reachable configurations in the abstract transition system are defined over $\Sigma_0$ (i.e., the only constants that appear in states, as well as in path conditions, are from $\consts_0$). There may still be infinitely many such configurations since the depth of terms may be unbounded.
We show that whenever the obtained abstract transition system has finitely many reachable configurations, the concrete one has an inductive assertion map that characterizes the set of reachable configurations.
\begin{definition}[Cover abstraction]
\label{def:cover-abs}
The cover abstraction function $\alpha_\cover: \Conf \to \Conf$ is defined by
\[
\alpha_\cover(\langle s, q, pc\rangle) \eqdef \langle s,q,\cover (\consts \setminus \consts(q)) \cdot pc\rangle
\]
\end{definition}
Since $pc \equiv_{q} \cover (\consts \setminus \consts(q)) \cdot pc$, the cover
abstraction also results in a bisimilar abstract transition system.
\begin{theorem}
\label{thm:cover-bisimilar}
For any attentive transition system $\TS = (\Conf, \initconf, \Tr)$, the relation $\rho =
\{(c, \alpha_\cover(c)) \mid c \in \Reach(\TS)\}$ is a bisimulation from $\TS$ to
$\alpha_\cover(\TS)$.
\end{theorem}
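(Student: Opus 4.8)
The plan is to exhibit $\rho = \{(c, \alpha_\cover(c)) \mid c \in \Reach(\TS)\}$ as a bisimulation by checking the two simulation conditions directly, exploiting (i) the key property that $pc \equiv_q \cover(\consts \setminus \consts(q)) \cdot pc$, i.e.\ $\alpha_\cover$ changes a configuration only into an equivalent one (same statement $s$, same state $q$, $q$-equivalent path condition), and (ii) attentiveness of $\TS$, which says transitions respect $\equiv$. First I would record the trivial observations: $\alpha_\cover(c) \equiv c$ for every $c$, and $(\initconf, \alpha_\cover(\initconf)) \in \rho$ since $\initconf \in \Reach(\TS)$. I also need that $\rho$ is well-defined as a relation into $\alpha_\cover(\TS)$; this is immediate from the definition of the induced abstract transition system.

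For the forward (simulation) direction, take $(c, \alpha_\cover(c)) \in \rho$ with $c \in \Reach(\TS)$ and suppose $c \to c'$. Then $c' \in \Reach(\TS)$, so $(c', \alpha_\cover(c')) \in \rho$, and by definition of $\TrGenAbs$ (Def.~\ref{def:abstract-TS-general}) the pair $(\alpha_\cover(c), \alpha_\cover(c'))$ is in $\Tr^{\alpha_\cover}$, i.e.\ $\alpha_\cover(c) \toGenAbs \alpha_\cover(c')$. This side is essentially bookkeeping and uses nothing beyond the definitions.

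For the backward direction I must show $\rho^{-1}$ is a simulation from $\alpha_\cover(\TS)$ to $\TS$: given $(c, \alpha_\cover(c)) \in \rho$ and a step $\alpha_\cover(c) \toGenAbs d_\sharp$, I need $c' \in \Reach(\TS)$ (or at least $c' \in \Conf$ with $(c', d_\sharp) \in \rho$) with $c \to c'$ and $(c', d_\sharp) \in \rho$. Unwinding the definition of $\TrGenAbs$, the step $\alpha_\cover(c) \toGenAbs d_\sharp$ comes from \emph{some} concrete $a \to a'$ with $\alpha_\cover(a) = \alpha_\cover(c)$ and $\alpha_\cover(a') = d_\sharp$. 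The crux is that $a$ need not equal $c$ and $a$ need not be reachable — but $\alpha_\cover(a) = \alpha_\cover(c)$ forces $a$ and $c$ to have the same statement and state and $q$-equivalent path conditions (because $\cover$ preserves $\equiv_q$ and $\alpha_\cover$ only rewrites the $pc$), hence $a \equiv c$. Attentiveness of $\TS$ then yields $c' \in \Conf$ with $c \to c'$ and $c' \equiv a'$. Since $c \in \Reach(\TS)$ and $c \to c'$, we get $c' \in \Reach(\TS)$; and since $c' \equiv a'$ — i.e.\ same statement, same state, $q$-equivalent path conditions — and $\cover$ depends only on the $\equiv_q$-class of the path condition, $\alpha_\cover(c') = \alpha_\cover(a') = d_\sharp$, so $(c', d_\sharp) \in \rho$.

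The main obstacle is precisely this last direction: one must resist assuming the witness transition $a \to a'$ for the abstract edge starts at $c$ itself, and instead argue that $\alpha_\cover(a) = \alpha_\cover(c)$ already pins down enough of $a$ (statement, state, and path condition up to $\equiv_q$) to invoke attentiveness and transfer the step back to $c$. A small supporting lemma worth isolating is: if $\alpha_\cover(c_1) = \alpha_\cover(c_2)$ then $c_1 \equiv c_2$, and conversely $c_1 \equiv c_2$ implies $\alpha_\cover(c_1) = \alpha_\cover(c_2)$ (the latter because $\cover V \cdot \varphi$ is determined up to equivalence by the $\equiv_q$-class of $\varphi$, and if we fix a canonical representative it is literally equal); both follow from uniqueness of the cover up to equivalence together with $\varphi \equiv_q \cover(\consts \setminus \consts(q)) \cdot \varphi$. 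Everything else is routine unfolding of Def.~\ref{def:abstract-TS-general} and Def.~\ref{def:attentive}.
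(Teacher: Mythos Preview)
Your proposal is correct and follows exactly the approach the paper has in mind: the paper's entire argument is the single sentence ``Since $pc \equiv_{q} \cover (\consts \setminus \consts(q)) \cdot pc$, the cover abstraction also results in a bisimilar abstract transition system,'' and you have spelled out precisely how that $q$-equivalence combines with attentiveness (Def.~\ref{def:attentive}) and the definition of the induced abstract system (Def.~\ref{def:abstract-TS-general}) to yield both simulation directions. In particular, your handling of the backward direction---noting that the concrete witness $a \to a'$ for an abstract edge need not start at $c$, but that $\alpha_\cover(a) = \alpha_\cover(c)$ forces $a \equiv c$ so attentiveness transfers the step---is exactly the point the paper leaves implicit.
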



To introduce the renaming abstraction, we need some notation.
Given a quantifier free formula $\varphi$, constants $a, b \in \const(\varphi)$ such that $a\neq b$, let $\varphi[a \rename b]$ denote $\varphi[b\mapsto x][a \mapsto b]$, where $x$ is a constant not in $\const(\varphi)$. For example, if $\varphi = (a \eq c \land b \eq d)$, $\varphi[a\rename b] = (b \eq c \wedge x \eq d)$.


Given a path condition $pc$ and a state $q$, let $\rn_{0}(pc, q)$ denote the
formula obtained by renaming all constants in $\const(q)$ using their initial
values. $\rn_{0}(pc, q) = pc[q(\pv{v}) \rename v_0]$ for all $\pvv \in \pvV$ such
that $q(\pv{v}) \neq v_0$.
\begin{definition}[Renaming abstraction]
\label{def:rename-abstraction}
The renaming abstraction function $\alpha_r: \Conf \to \Conf$ is defined by
\[
\alpha_r(\langle s, q, pc\rangle) \eqdef \langle s,q_{0},\rn_{0}(pc, q)\rangle
\]
\end{definition}
\begin{theorem}
\label{thm:renaming-bisimilar}
For any attentive transition system $\TS = (\Conf, \initconf, \Tr)$, the relation $\rho =
\{(c, \alpha_r(c)) \mid c \in \Reach(\TS)\}$ is a bisimulation from $\TS$ to
$\alpha_{\rn}(\TS)$.
\end{theorem}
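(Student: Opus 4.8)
The argument parallels the one for \cref{thm:cover-bisimilar}. We must check three things: $(\initconf, \alpha_r(\initconf)) \in \rho$; that $\rho$ is a simulation from $\TS$ to $\alpha_{\rn}(\TS)$; and that $\rho^{-1}$ is a simulation from $\alpha_{\rn}(\TS)$ to $\TS$. The first is trivial, since $\initconf \in \Reach(\TS)$ and, because $\qinit(\pvv) = v_0$ for every $\pvv$, the relettering $\rn_0(\cdot, \qinit)$ is the identity, so $\alpha_r(\initconf) = \initconf$. The forward simulation is also immediate and uses nothing special about $\alpha_r$: if $(c, \alpha_r(c)) \in \rho$ and $c \to c'$, then $c' \in \Reach(\TS)$, hence $(c', \alpha_r(c')) \in \rho$, and $(\alpha_r(c), \alpha_r(c')) \in \Tr^{\alpha_r}$ holds directly from the definition of the induced abstract system, with $c$ and $c'$ themselves as the witnesses. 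All the content is in the backward direction.

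For the backward direction, the key observation is that $\alpha_r$ is a canonical \emph{relettering} of constants. In any reachable configuration $c = \tuple{s, q, pc}$ the state $q$ is injective (a small invariant, proved by induction on reachability using that each assignment step picks a constant fresh for the current configuration), so $\rn_0(\cdot, q)$ acts on $pc$ as a bijection $\pi_c$ of $\consts$ with $\pi_c \circ q = \qinit$, and $\alpha_r(c) = \pi_c(c)$. The bijection $\pi_c$ is pinned down only up to the auxiliary fresh constants that $\rename$ introduces when some $v_0$ already occurs in $pc$; this is precisely the kind of ambiguity that attentiveness absorbs, so we work throughout with configurations up to $\equiv$. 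Two facts then follow: (i) $\alpha_r$ is insensitive to relettering, i.e.\ $\alpha_r(\tau(e)) \equiv \alpha_r(e)$ for the reletterings $\tau$ that arise, because the two standardizing maps differ by $\tau$ and both send the state to $\qinit$; and (ii) if $\alpha_r(d) = \alpha_r(c)$ then $d = \sigma(c)$ for the bijection $\sigma = \pi_d^{-1} \circ \pi_c$, which satisfies $\sigma \circ q_c = q_d$.

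Now let $(c_\sharp, c) \in \rho^{-1}$, so $c \in \Reach(\TS)$ and $c_\sharp = \alpha_r(c)$, and suppose $c_\sharp \to^{\alpha_r} c_\sharp'$, witnessed by a transition $d \to d'$ of $\TS$ with $\alpha_r(d) = c_\sharp$ and $\alpha_r(d') = c_\sharp'$. By (ii), $d = \sigma(c)$ for a relettering $\sigma$. Since the transition systems we consider treat constants merely as names -- their rules constrain newly introduced constants only through freshness, which is stable under bijective relettering -- $\TS$ is equivariant under $\sigma$, so $c = \sigma^{-1}(d) \to \sigma^{-1}(d') =: c'$, and $c'$ is reachable as a successor of $c$. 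By (i), $\alpha_r(c') = \alpha_r(\sigma^{-1}(d')) \equiv \alpha_r(d') = c_\sharp'$, hence $(c_\sharp', c') \in \rho^{-1}$, which closes the simulation and with it the bisimulation. I expect the genuine work -- and the main pitfall -- to be the fresh-constant bookkeeping behind $\rename$: showing that ``$\alpha_r(d) = \alpha_r(c)$'' really delivers a clean relettering $\sigma$ with $\sigma(c) \equiv d$, and that relettering-equivariance (together with (i)) holds for each abstract system along the way, since attentiveness by itself only supplies the up-to-$\equiv$ slack, not the relettering.
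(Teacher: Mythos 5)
The paper states \cref{thm:renaming-bisimilar} without a written proof, so your attempt can only be judged on its own merits; your route---reachable states are injective, $\alpha_r$ acts as a bijective relettering, the semantics is equivariant under such reletterings, and attentiveness absorbs the residual slack---is the natural one and very likely close to the intended argument. However, two things you mention and then wave through are genuine gaps, not bookkeeping. First, the backward simulation uses equivariance of $\Tr$ under bijections of $\consts$ and injectivity of $q$ on reachable configurations; neither follows from attentiveness. \Cref{def:attentive} only constrains pairs of configurations with the \emph{same} state $q$, whereas two $\alpha_r$-preimages $c,d$ of one abstract configuration generally have \emph{different} states, so attentiveness alone cannot drive the argument for ``any attentive transition system''. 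What you prove is a statement about the concrete UP semantics and the specific abstract systems derived from it, and that restriction (plus the equivariance and injectivity claims for each such system) needs to be stated and checked, since the theorem is later applied to already-abstracted systems.

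Second, your closing step yields only $\alpha_r(c') \equiv c_\sharp'$, while membership in $\rho^{-1}$ requires $\alpha_r(c') = c_\sharp'$, and this mismatch is real, not cosmetic: $\rn_0(pc,q)$ touches only constants in $\consts(q)$ (plus displaced members of $\consts_0$), so constants of $pc$ outside $\consts(q)\cup\consts_0$ survive verbatim. Concretely, after $\pvx \gets f(\pvx)$ two reachable configurations with path conditions $x_1 \eq f(x_0)$ and $x_9 \eq f(x_0)$ (different fresh-constant choices) have the same $\alpha_r$-image, yet after a further $\pvx \gets g(\pvx)$ their images are $x_1 \eq f(u) \land x_0 \eq g(x_1)$ versus $x_9 \eq f(u) \land x_0 \eq g(x_9)$: $\equiv_{\qinit}$-equivalent but never syntactically equal, so no successor of the first configuration lands exactly on the abstract successor witnessed by the second. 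Hence, under the literal reading, $\rho^{-1}$ is not a simulation, and the ``up to $\equiv$'' slack you invoke is not optional. To close the proof you must either (a) redefine $\rho$ and/or the abstract system up to $\equiv$ (fixing in particular a canonical choice of the fresh constants introduced by $\rename$, without which $\alpha_r$ is not even a function) and re-verify the simulation conditions at that level, or (b) exploit that the paper only applies $\alpha_\rn$ after cover or base abstraction, whose images satisfy $\consts(pc)\subseteq\consts(q)$, so that $\rn_0$ is a total relettering and exact equality becomes attainable. As written, the proposal correctly identifies this pitfall but does not resolve it.
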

Finally, we denote by $\alpha_{\cover,\rn}$ the composition of the renaming and
cover abstractions: $\alpha_{\cover,\rn} \eqdef \alpha_\cover \circ \alpha_\rn$
(i.e., $\alpha_{\cover,\rn}(c) = \alpha_\rn(\alpha_\cover(c))$). Since the
composition of bisimulation relations is also a bisimulation,
$\alpha_{\cover, \rn}(\TS)$ is bisimilar to $\TS$.

\begin{theorem}[Logical Characterization of UP]\label{th:lcup}
  If $\alpha_{\cover,\rn}$ induces a finite bisimulation on an UP $P$, then, there
  exists an inductive assertion map $\inv$ for $P$ that characterizes the reachable configurations of $P$.
\end{theorem}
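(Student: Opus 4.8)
The plan is to extract the inductive assertion map directly from the finite abstract transition system $\alpha_{\cover,\rn}(\TS_P)$. Since $\alpha_{\cover,\rn}$ induces a finite bisimulation, the set $R^{\genAbs} \eqdef \Reach(\alpha_{\cover,\rn}(\TS_P))$ is finite, and by Theorems~\ref{thm:cover-bisimilar} and~\ref{thm:renaming-bisimilar} (and closure of bisimulation under composition) it is bisimilar to $\TS_P$ via $\rho = \{(c,\alpha_{\cover,\rn}(c)) \mid c \in \Reach(\TS_P)\}$. Crucially, every configuration in $R^{\genAbs}$ has the form $\langle s, \qinit, \psi\rangle$ with $\psi \in EUF(\Sigma_0)$: the renaming abstraction forces the state component to be $\qinit$ and the constants in the path condition into $\consts_0$, and the cover abstraction keeps them there. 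For each location $s \in \loc(P)$, define
\[
\inv(s) \;\eqdef\; \bigvee \{\, \psi \mid \langle s, \qinit, \psi\rangle \in R^{\genAbs} \,\},
\]
with the empty disjunction being $\bot$. This is a finite disjunction of $EUF(\Sigma_0)$ formulas, hence an $EUF(\Sigma_0)$ formula, so $\inv : \loc(P) \to EUF(\Sigma_0)$ is well-typed.

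Next I would verify the three requirements of Definition~\ref{def:inductive}. For condition (a), $\inv(P) = \top$: the only abstract configuration with location $P$ reachable in $\alpha_{\cover,\rn}(\TS_P)$ is $\alpha_{\cover,\rn}(\initconf) = \langle P, \qinit, \top\rangle$ (the initial configuration is the unique one at location $P$ since $P$ does not appear as a sub-statement of any reachable statement), so the disjunction is just $\top$. For condition (b), suppose $\langle s, \qinit, \inv(s)\rangle \to \langle s', q', pc'\rangle$. I want to show $pc' \models \inv(s')[v_0 \mapsto q'(\pvv)]$. Since the transition rules branch on the structure of $s$ and the guard evaluation only depends on $pc'$ up to $\equiv_q$-equivalence (attentiveness of $\TS_P$), it suffices to show this for each disjunct $\psi$ of $\inv(s)$ separately, i.e. for $\langle s, \qinit, \psi\rangle \to \langle s', q', pc''\rangle$ where the disequality/equality side conditions are witnessed by $\inv(s)$; here one argues that $pc''$ and $pc'$ are $\equiv_{q'}$-equivalent, which is enough since $\inv(s')$ is a formula over $\consts_0$ and we substitute the same $q'$. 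The key point: $\langle s, \qinit, \psi\rangle$ is a reachable abstract configuration, and because $R^{\genAbs}$ is closed under $\toGenAbs$, any abstract successor $\alpha_{\cover,\rn}(\langle s', q', pc''\rangle) = \langle s', \qinit, \rn_0(\cover(\cdots)\cdot pc'', q')\rangle$ lies in $R^{\genAbs}$, hence is one of the disjuncts of $\inv(s')$ after renaming; tracing the renaming back gives $pc'' \models \inv(s')[v_0 \mapsto q'(\pvv)]$.

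Finally, for the ``characterizes the reachable configurations'' clause I would use the bisimulation in both directions: a configuration $\langle s, q, pc\rangle$ is reachable in $\TS_P$ iff (by $\rho$) $\alpha_{\cover,\rn}(\langle s, q, pc\rangle) \in R^{\genAbs}$, iff $\langle s, \qinit, \rn_0(\cover(\cdots)\cdot pc, q)\rangle$ is a disjunct of $\inv(s)$, iff $pc \models \inv(s)[v_0 \mapsto q(\pvv)]$ (using that $pc \equiv_q \cover(\consts\setminus\consts(q))\cdot pc$ and that renaming is invertible on $\consts(q)$). I expect the main obstacle to be the bookkeeping in condition (b): reconciling the path condition $pc'$ produced by a concrete step from $\langle s, \qinit, \inv(s)\rangle$ (whose path condition is the full disjunction, not a single abstract state) with the per-disjunct abstract transitions, and checking that attentiveness genuinely lets one reduce to the per-disjunct case despite the $\mathbf{assume}$ rule's satisfiability side condition. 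The cover/renaming round-trip identities ($pc \equiv_q \cover\cdots$, and invertibility of $\rename$ on $\consts(q)$) are the technical glue, and I would state them as small lemmas before assembling the argument.
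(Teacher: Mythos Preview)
Your proposal is correct and takes essentially the same approach as the paper: define $\inv(s)$ as the finite disjunction of path conditions appearing at location $s$ in $\Reach(\alpha_{\cover,\rn}(\TS_P))$ and argue it is an inductive assertion map characterizing the reachable configurations. The paper's proof is in fact just this definition together with the assertion that it works, so your write-up is a strictly more detailed version of the same argument.
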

\begin{proof}
  Define $\inv(s) \eqdef \bigvee \{ pc \mid \langle s, q, pc \rangle \in
  \Reach(\alpha_{\cover,\rn}(P))\}$. Then, $\inv(s)$ is such an inductive
  assertion map.
\end{proof}
%

Intuitively, \cref{th:lcup} says that inductive invariant of UP, whenever it exists, can be described using EUF formulas over program variables. That is, any extra variables that are added to the path condition during program execution can be abstracted away (specifically, using the cover abstraction). There are, of course, infinitely many such invariants since the depth of terms is not bounded (only constants occurring in them). 
In the sequel, we systematically construct a similar result for CUP.
\section{Bismulation of CUP}
\label{sec:extabase}

The first step in extending~\cref{th:lcup} to CUP is to design an abstraction function 
that bounds the depth of terms that appear in any reachable (abstract) state. It is easy to 
design such a function while maintaining soundness -- simply forget literals that have terms that are too 
deep. However, we want to maintain precision as well. That is, we want the abstract transition
system to be bisimilar to the concrete one. 
Just like cover abstraction, the base abstraction
function also eliminates all constants that are not assigned to program
variables. Unlike cover abstraction, the base abstraction does not maintain
$\consts(q)$-equivalence of the path conditions, but, rather, forgets most literals that cannot be expressed over program variables. 

In this section, we focus on the definition of the base abstraction and prove that it induces bisimulation for CUP. This result is used in \cref{sec:char}, to logically characterize CUPs.

Intuitively, the base abstraction ``truncates'' the congruence graph induced by
a path condition in nodes that have no representative in the set of constants
assigned to the program variables ($V$ in the following definition), and assigns
to the truncated nodes fresh constants (from $W$ in the following definition).

Congruence closure procedures for EUF use a \emph{congruence graph} to concisely
represent the deductive closure of a set of EUF
literals~\cite{DBLP:journals/jacm/NelsonO80,DBLP:conf/rta/NieuwenhuisO05}. Here,
we use a logical characterization of a congruence graph, called a
\emph{$V$-basis}. Let $\Gamma$ be a set of EUF literals. A triple $\langle W,
\beta, \delta \rangle$ is a $V$-basis of $\Gamma$ relative to a set of constants
$V$, written $\langle W, \beta, \delta \rangle \in \base(\Gamma, V)$, iff (a)
$W$ is a set of fresh constants not in $\consts(\Gamma)$, and $\beta$ and
$\delta$ are conjunctions of EUF literals; (b) ($\exists W\cdot \beta \land \delta) \equiv
\Gamma$; (c) $\beta \eqdef \beta_{\eq} \cup \beta_{\deq} \cup \beta_{\funs}$ and
$\delta \eqdef \delta_{\eq} \cup \delta_{\deq} \cup \delta_{\funs}$, where
\begin{align*}
  \beta_{\eq} &\subseteq \{u \eq v \mid u, v \in V\} \qquad \beta_{\deq}\subseteq \{u \deq v \mid u, v \in V\} \\
  \beta_{\funs} &\subseteq \{v \eq f(\Vec{w}) \mid v \in V, \Vec{w} \subseteq V \cup W, \Vec{w} \cap V \neq \emptyset\}\\
  \delta_{\eq} &\subseteq \{ w \eq u \mid w \in V \cup W, u \not \in V \cup W\}
  \\
  \delta_{\deq} &\subseteq \{u \deq w \mid u \in W, w \in W \cup V\}\\
  \delta_{\funs} &\subseteq \{v \eq f(\Vec{w}) \mid v,\Vec{w}\subseteq V\cup W, v \in V \limp \Vec{w} \subseteq W\}
\end{align*}
(d) $\beta \land \delta \nvdash v \eq w$ for any $v \in V$, $w \in W$; and 
(e) $\beta \land \delta \nvdash w_1 \eq w_2$ for any $w_1, w_2 \in W$ s.t. $w_1 \neq
w_2$. 

Note that we represent both equalities and disequalities in the $V$-basis as
common in implementations (but not in the theoretical presentations) of the
congruence closure algorithm. Intuitively, $V$ are constants in
$\consts(\Gamma)$ that represent equivalence classes in $\Gamma$, and $W$ are
constants added to represent equivalence classes that do not have a
representative in $V$. A $V$-basis, of any satisfiable set $\Gamma$, is unique up to renaming of constants in $W$
and ordering of equalities between constants in $V$.
\begin{example}
\label{ex:basis}
  Let $\Gamma = \{x \eq f(a, v_1), y \eq f(b, v_2), v_1 \eq v_2\}$ and $V = \{a,
  b, x, y\}$. A $V$-basis of $\Gamma$ is $\langle W, \beta, \delta\rangle$,
  where $W = \{w\}$, $\beta = \{ x \eq f(a, w), y \eq f(b, w)\}$, $\delta = \{w \eq v_1, w \eq v_2\}$. Renaming $w$ to $w'$ is a different $V$-basis:
  $\langle W', \beta', \delta'\rangle \in \base(\Gamma, V)$ where $W' = \{w'\}$,
  $\beta' = \beta[w \mapsto w']$ and $\delta' = \delta[w \mapsto w']$. 
  
  As another example, consider $\Gamma = \{x \eq f(a, p), x \eq f(a, n(p)), y = f(b, p), y = f(c, n(p))\}$ and $V =\{x, y, a, b, c\}$. A $V$-basis of $\Gamma$ is $\langle W, \beta, \delta\rangle$,
  where $W = \{w_0, w_1\}$,  $\delta_2 = \{w_0 \eq p, w_1 \eq n(w_0)\}$, and
  \[
  \beta_2 = \left\{\begin{aligned} 
               x &\eq f(a, w_0) & 
               x &\eq f(a, w_1) \\
               y &\eq f(b, w_0) &
               y &\eq f(c, w_1)
               \end{aligned}\right\}
  \]
\end{example}

 While a basis maintains all consequences of $\Gamma$ (since $(\exists W\cdot\beta \wedge \delta) \equiv \Gamma$), the $V$-base abstraction of $\Gamma$, defined next, is weaker. It preserves consequences of $\beta$ only: 
 
\begin{definition}[$V$-base abstraction]
\label{def:vabst}
The $V$-base abstraction $\alpha_V$ for a set of constants $V$, is a function between sets of literals s.t.
for any sets of literals $\Gamma$ and $\Gamma'$:
\begin{enumerate}[(1)]
    \item $\alpha_V(\Gamma) \eqdef \beta$,  where  $\langle W, \beta, \delta\rangle \in \base(\varphi, V)$,
    \item if there exists a $\beta$ s.t. $\langle W_1, \beta, \delta_1\rangle \in\base(\Gamma, V)$ and $\langle W_2, \beta, \delta_2\rangle \in \base(\Gamma', V)$, then
      $\alpha_V(\Gamma) = \alpha_V(\Gamma')$.
\end{enumerate}
\end{definition}
The second requirement of Def.~\ref{def:vabst} ensures that two formulas that have the same
$V$-consequences, have the same $V$-abstraction. For example,
for a set of constants $V = \{u, v\}$, the formulas $\varphi_1 = \{v \eq f(u,
x)\}$ and $\varphi_2 = \{v \eq f(u, y)\}$, have the same $V$-base abstraction:
$v \eq f(u, w)$. Note that at this point, we only require that $\alpha_V$ is well defined (for example, it does not have to be computable.)


We now extend $V$-base abstraction to program configuration, calling it simply \emph{base abstraction}, since the set of preserved constants is determined by the configuration:
\begin{definition}[Base abstraction]
  \label{def:alpha-abstraction}
  The base abstraction $\exba: \Conf \to \Conf$ is defined for
  configurations  $\langle s, q, pc \rangle \in \Conf$, where $pc$ is a
  \emph{conjunction} of literals: $\exba(\langle
  s, q, pc\rangle) \eqdef \langle s,q,\alpha_{\const(q)}(pc)\rangle$.
\end{definition}

Namely, the base abstraction $\alpha_{\const(q)}$ applied to the
path condition is determined by the state $q$ in the configuration. We often write $\alpha_q(\varphi)$ as a shorthand for
$\alpha_{\consts(q)}(\varphi)$.

We are now in position to state the main result of this section.
Given a CUP $P$, the abstract transition system
$\exba(\TS_P) = (\Conf, \initconf^\exba, \Tr^\exba)$ is bisimilar to the
concrete transition system $\TS_P = (\Conf, \initconf, \Tr)$. Note that at this point, we do not claim that
$\exba(\TS_P)$ is finite, or that it is computable. We focus only on the fact that the literals that are forgotten by the base abstraction do not matter for any future transitions. The key technical step is summarized in the following theorem:
\begin{theorem}
  \label{thm:main}
  Let $\langle s, q, pc\rangle$ be a reachable configuration of a CUP $P$. Then,
  \begin{enumerate}[(1)]
  \item $\langle s, q, pc\rangle \to \langle s', q', pc \land pc'\rangle$ iff\\
    $\langle s, q, \alpha_q(pc)\rangle \to \langle s', q', \alpha_{q}(pc) \land
    pc'\rangle$, and
  \item $\alpha_{q'}(pc \land pc') = \alpha_{q'}(\alpha_q(pc) \land pc')$.
  \end{enumerate}
\end{theorem}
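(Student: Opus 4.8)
The plan is to case-split on which rule of \cref{fig:ssesem} justifies the step $\langle s,q,pc\rangle\to\langle s',q',pc\land pc'\rangle$. Two facts are used throughout: every reachable $pc$ is a conjunction of literals (so $\alpha_q$ is defined), and whenever the head statement driving the step is an assignment or an assume, the configuration obtained by peeling off the outer sequential compositions is itself a reachable configuration of $P$, so it satisfies the Memoizing / Early-assume clauses of \cref{def:cup} with exactly this $q$ and $pc$. The control rules — $\mathbf{skip};s\to s$, sequencing, the two $\mathbf{if}$ rules, and the $\mathbf{while}$ rule — have $q'=q$, $pc'=\top$, and are unconditionally enabled, so (1) is immediate (for sequencing by structural induction on the step, reducing to the inner step), and (2) becomes idempotence $\alpha_q(\alpha_q(pc))=\alpha_q(pc)$, which follows from \cref{def:vabst}(2) since $\alpha_q(pc)=\beta$ and $pc$ share a $\consts(q)$-basis whose $\beta$-component is $\beta$ (an already-truncated formula is its own $\beta$-part, up to renaming of the $W$-constants).

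For an assignment $\pvx\gets e$, the value $v$ of $e$ in $q$ satisfies $\consts(v)\subseteq\consts(q)$ and $\depth(v)\le 1$; the step is unconditional, $q'=q[\pvx\mapsto x']$ with $x'$ fresh (which we also take fresh for $\beta$, legitimate since the $W$-constants may avoid it), and $pc'=(x'\eq v)$, giving (1). For (2) we must exhibit, via \cref{def:vabst}(2), a common $\consts(q')$-basis of $pc\land x'\eq v$ and of $\beta\land x'\eq v$. Build it from a $\consts(q)$-basis $\langle W,\beta,\delta\rangle$ of $pc$: since $\consts(v)\subseteq\consts(q)\subseteq\consts(q')$ and $\depth(v)\le 1$, the literal $x'\eq v$ already has the shape of a $\beta_{\eq}$- or $\beta_{\funs}$-literal over $\consts(q')$ and mentions no constant of $\delta$ outside $\consts(q)$, so $\delta$ carries over and the rest is routine basis re-normalization. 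The single place coherence is needed is this: $v=f(q(\Vec{\pvy}))$ might be $pc$-equal to a term buried in $\delta$, but Memoizing then forces $pc\models q(\pvv)\eq v$ for some $\pvv\in\pvV$, so $v$ is already represented in $\beta$, and the two bases end with the same $\beta$-component.

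The assume rules are the crux. Let $v$ be $q(\pvx)\eq q(\pvy)$ or $q(\pvx)\deq q(\pvy)$, a literal over $\consts(q)$; then $q'=q$ and the step appends $pc'=v$ provided $pc\land v\not\models\bot$. The forward directions of (1) hold because $pc\models\beta$; the content is the converse, i.e. $pc\land v\models\bot\Rightarrow\beta\land v\models\bot$, together with (2). If $v$ is a disequality, $pc\land v\models\bot$ says $pc\models q(\pvx)\eq q(\pvy)$, and the defining conditions (d),(e) of the basis plus the shapes of $\delta_{\eq},\delta_{\deq}$ imply $\beta$ entails every equality between $\consts(q)$-constants that $pc$ entails; hence $\beta\models q(\pvx)\eq q(\pvy)$, which settles both (1) and (2). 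If $v$ is the equality $q(\pvx)\eq q(\pvy)$, consider the EUF derivation witnessing $pc\land v\vdash\bot$ (resp. the computation of a $\consts(q)$-basis of $pc\land v$). The key lemma, proved by induction on the derivation using the Early-assume clause at each congruence step, is that every term that ever joins the congruence class of a superterm of $q(\pvx)$ or $q(\pvy)$ is $pc$-equal to some $q(\pvv)$; thus all merges caused by adding $v$ take place among $\consts(q)$-constants and the depth-$\le 1$ literals already recorded in $\beta_{\funs}$, so the derivation of $\bot$ (resp. the resulting $\beta$-component) is reproduced from $\beta\land v$ alone.

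I expect this last induction — that congruence propagation out of $q(\pvx),q(\pvy)$ never escapes the part of $pc$ already visible in $\beta$ — to be the main obstacle, and precisely the point at which non-unary functions bite: for unary $f$ the superterms of $q(\pvx)$ form a chain, whereas for higher arity they branch and share arguments, which is what forces the fresh $W$-constants and the $\delta_{\funs}$ component into the notion of $V$-basis and makes the bookkeeping delicate. The two auxiliary facts — idempotence of $\alpha_q$, and that $\beta$ captures all $\consts(q)$-equalities of $pc$ — are routine consequences of conditions (a)--(e) defining $\base(\cdot,\cdot)$.
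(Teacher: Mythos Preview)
Your case split and the placement of the coherence hypotheses are exactly the paper's approach: Memoizing is used only in the function-assignment case, Early-assume only in the $\mathbf{assume}(\pvx=\pvy)$ case, idempotence handles the control rules, and the disequality assume is discharged by the fact that $\beta$ already records every $\consts(q)$-equality of $pc$. The paper organises the EUF reasoning you sketch inline into a sequence of standalone lemmas (\cref{lm:sprtrm}--\cref{lm:propequiv}), but the content is the same.

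There is one concrete error. In the assignment case you write $\consts(v)\subseteq\consts(q)\subseteq\consts(q')$; the second inclusion is false. After $\pvx\gets e$ we have $q'=q[\pvx\mapsto x']$, so if $\pvx$ was the only variable mapped to $q(\pvx)$ then $q(\pvx)\notin\consts(q')$. In particular, for $\pvx\gets f(\pvx,\pvy)$ the literal $x'\eq f(q(\pvx),q(\pvy))$ is \emph{not} a $\beta$-literal over $\consts(q')$, so you cannot directly build a $\consts(q')$-basis as you describe. The paper's fix is a two-step argument: first establish $\alpha_{\consts(q)\cup\{x'\}}(pc\land x'\eq v)=\alpha_{\consts(q)\cup\{x'\}}(\alpha_q(pc)\land x'\eq v)$ over the \emph{larger} set $\consts(q)\cup\{x'\}$ (this is \cref{lm:propequiv}), and then use the monotonicity lemma $U\subseteq V\Rightarrow(\alpha_V(\varphi_1)=\alpha_V(\varphi_2)\Rightarrow\alpha_U(\varphi_1)=\alpha_U(\varphi_2))$ (\cref{lm:subset}) to descend to $\consts(q')\subseteq\consts(q)\cup\{x'\}$.

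A second, softer point: your ``key lemma'' for the equality-assume case is adequate for part~(1)---it is essentially \cref{lm:sprtrm} plus Early-assume, yielding \cref{lm:Visenough} and \cref{lm:Visenoughdeq}---but for part~(2) you need more than reproducing a refutation: you must show the entire new $\beta$-component, including the $\beta_\funs$ literals that mention fresh $W$-constants, is determined by $\beta\land x\eq y$. This is \cref{lm:eqpres1v} in the paper, and its proof requires a genuine case analysis on how a new literal $v\eq f(u,w)$ arises (in particular the sub-case where the congruence step matches two $f$-applications whose non-$V$ arguments were already equal in $pc$). Your sketch ``all merges take place among $\consts(q)$-constants and the depth-$\le 1$ literals already recorded in $\beta_\funs$'' points in the right direction but elides precisely the branching that you correctly flag as the place where non-unary functions bite.
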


The proof of Thm.~\ref{thm:main} is not complicated, but it is tedious and technical. It depends on many basic properties of EUF. We summarize the key results that we require in the following lemmas. The proofs of the lemmas are provided in App.~\ref{sec:proofs}.

We begin by defining a \emph{purifier} -- a set of constants  sufficient to represent a set of EUF literals with terms of depth one.
\begin{definition}[Purifier]
  \label{def:purifier}
  We say that a set of constants $V$ is a \emph{purifier} of a constant $a$ in
  a set of literals $\Gamma$, if $a \in V$ and for every term $t \in \terms(\Gamma)$ s.t. $\Gamma \vdash t \eq s[a]$,  $\exists v \in V$ s.t. $\Gamma \vdash v \eq t$.
\end{definition}

For example, if $\Gamma = \{ c \eq f(a), d \eq f(b), d \deq e\}$. Then, $V = \{a, b, c\}$ is  a purifier for $a$, but  not a purifier for $b$, even though $b \in V$.

In all the following lemmas,  $\Gamma$, $\varphi_1$, $\varphi_2$ are sets
of literals; $V$ a set constants; $a, b \in \const(\Gamma)$; $u, v, x, y \in V$; $V$ is a purifier for $\{x, y\}$ in $\Gamma$, $\varphi_1$, and in $\varphi_2$; $\beta = \alpha_V(\Gamma)$; and $\alpha_V(\varphi_1) = \alpha_V(\varphi_2)$.

Lemma~\ref{lm:sprtrm} says that anything newly derivable from $\Gamma$ and a new equality $a \eq b$ is derivable using superterms of $a$ and $b$:
\begin{slemma}\label{lm:sprtrm}
  Let $t_1$ and $t_2$ be two terms in $\terms(\Sigma)$ s.t. $\Gamma
  \not\vdash (t_1 \eq t_2)$. Then, $(\Gamma \land a \eq b) \vdash (t_1 \eq
  t_2)$, for some constants $a$ and $b$ in $\consts(\Gamma)$, iff there are two
  superterms, $s_1[a]$ and $s_2[b]$, of $a$ and $b$, respectively, s.t. (i)
  $\Gamma \vdash (t_1 \eq s_1[a])$, (ii) $\Gamma \vdash (t_2 \eq s_2[b])$, and
  (iii) $(\Gamma \land a \eq b ) \vdash (s_1[a] \eq s_2[b])$.
\end{slemma}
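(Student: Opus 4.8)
The plan is to handle the two directions of the \emph{iff} separately, with $(\Leftarrow)$ being immediate and $(\Rightarrow)$ carrying all the work. For $(\Leftarrow)$: from (i)--(iii) and monotonicity of $\vdash$ we obtain $\Gamma \land a \eq b \vdash t_1 \eq s_1[a]$ and $\Gamma \land a \eq b \vdash t_2 \eq s_2[b]$, so by symmetry and transitivity of $\eq$ in $\peuf$, together with (iii), $\Gamma \land a \eq b \vdash t_1 \eq t_2$.

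For $(\Rightarrow)$ I would first record a standard structural fact about $\peuf$: a ground equality $t_1 \eq t_2$ is derivable from a set of ground equalities $G$ iff there is a finite ``rewrite chain'' $t_1 = r_0, r_1, \dots, r_m = t_2$ in which each $r_{k+1}$ is obtained from $r_k$ by replacing a single occurrence of a subterm $\ell$ by $\ell'$, or of $\ell'$ by $\ell$, for some equality $(\ell \eq \ell') \in G$ --- the one-step-replacement presentation of the least congruence generated by $G$, which I would prove once as a preliminary lemma about $\peuf$. I apply it with $G = \Gamma_{\eq} \cup \{a \eq b\}$, where $\Gamma_{\eq}$ is the equality fragment of $\Gamma$; this is legitimate because, under the standing assumption that path conditions stay satisfiable, both $\Gamma$ and $\Gamma \land a \eq b$ are consistent, so the disequalities of $\Gamma$ are irrelevant to deriving equalities. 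Fix such a chain from $t_1$ to $t_2$. Since $\Gamma \nvdash t_1 \eq t_2$, not every step can use an equality of $\Gamma$ alone, so at least one step rewrites an occurrence of $a$ by $b$ or of $b$ by $a$.

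Let step $j$ be the \emph{first} step that uses $a \eq b$ and step $l$ the \emph{last}, with $j \le l$. Steps $0,\dots,j-1$ use only equalities of $\Gamma$, so $\Gamma \vdash t_1 \eq r_j$; dually $\Gamma \vdash r_{l+1} \eq t_2$. The term $r_j$ contains the occurrence of $a$ (or $b$) rewritten at step $j$, hence is a superterm of $a$ (or of $b$); dually $r_{l+1}$ contains the occurrence produced at step $l$, hence is a superterm of $b$ (or of $a$). Taking $s_1 \eqdef r_j$ and $s_2 \eqdef r_{l+1}$ yields (i) and (ii) --- modulo which of $a, b$ occurs in which --- and (iii) is then free, since $\Gamma \land a \eq b \vdash s_1 \eq t_1 \eq t_2 \eq s_2$.

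I expect the main obstacle to be precisely that ``modulo'' clause: the chain argument delivers, for each of $t_1$ and $t_2$, a $\Gamma$-equal superterm \emph{containing $a$ or $b$}, whereas the statement asks specifically for a superterm of $a$ paired with $t_1$ and a superterm of $b$ paired with $t_2$. Pinning this down requires a case analysis on the directions of the first and last applications of $a \eq b$ along the chain, together with the freedom to traverse the chain from $t_2$ to $t_1$; when the first and last applications point in opposite directions this is the genuinely delicate case and must be argued carefully (e.g. by re-choosing a chain with the fewest applications of $a \eq b$). The remaining ingredients --- the one-step-replacement lemma for $\peuf$ and the elementary congruence manipulations --- are routine. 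An essentially equivalent alternative is to build the congruence closure of $\Gamma \land a \eq b$ incrementally from that of $\Gamma$ by merging $[a]$ with $[b]$ and then repeatedly closing under congruence, proving by induction on the merges that every newly equated pair is $\Gamma$-equal to the required superterms; there the crux is to choose an inductive invariant robust under transitivity of the growing relation.
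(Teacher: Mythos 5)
Your $(\Leftarrow)$ direction is fine, and so is the reduction to ground rewrite chains: that argument does establish exactly what you say it does, namely that $t_1$ is $\Gamma$-equal to some superterm of $a$ \emph{or} of $b$ (your $r_j$), that $t_2$ is $\Gamma$-equal to some superterm of $a$ \emph{or} of $b$ (your $r_{l+1}$), and that condition (iii) then comes for free. The genuine gap is precisely the step you flag and postpone: converting this into the pairing the lemma asserts, an $a$-superterm for $t_1$ and a $b$-superterm for $t_2$ (even allowing the roles of $a$ and $b$ to be swapped). No case analysis on the directions of the first and last applications of $a \eq b$, and no re-choice of a chain minimizing the number of such applications, can close this step. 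Consider $\Gamma = \{c \eq f(a),\ d \eq g(a),\ f(b) \eq g(b)\}$ with $t_1 = c$, $t_2 = d$. Then $\Gamma \nvdash c \eq d$, while $\Gamma \land a \eq b \vdash c \eq f(a) \eq f(b) \eq g(b) \eq g(a) \eq d$; every chain must apply $a \eq b$ at least twice, with the first and last applications in opposite directions (your ``delicate case''). But the $\Gamma$-congruence classes of $c$ and $d$ are $\{c, f(a)\}$ and $\{d, g(a)\}$, so neither $t_1$ nor $t_2$ is $\Gamma$-equal to \emph{any} superterm of $b$, and the required $s_2[b]$ (under either assignment of roles) does not exist. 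So the delicate case is not merely delicate: the strict conclusion is unobtainable there, and what your chain argument yields is only the disjunctive fact stated above --- which is the form in which the lemma is actually invoked in the proof of \Cref{thm:main}, but is weaker than items (i)--(ii) as written and than what the argument for \Cref{lm:Visenough} consumes (there the two superterms are moreover taken with a common head symbol and shared arguments). Any completed proof has to either weaken the statement to the disjunctive form or add hypotheses ruling out configurations like the one above; your plan as written cannot terminate successfully.

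A secondary point: you justify discarding the disequalities of $\Gamma$ by appealing to consistency of $\Gamma$ and of $\Gamma \land a \eq b$. Neither is among the lemma's hypotheses, and in the main application ($\mathbf{assume}(\pvx = \pvy)$ in \Cref{thm:main}) the case of interest is exactly $pc \land x \eq y \models \bot$. The discarding is nevertheless legitimate, but for a different reason: in $\peuf$ every rule that produces an equality (Refl, Symm, Trans, Cong, PMod with an equality premise) has only equalities as premises, and EqNeq produces only $\bot$, so the derivable equalities depend only on the equality fragment of $\Gamma$ irrespective of satisfiability. With that repair the reduction to ground rewriting is sound; the unresolved pairing step above remains the real obstruction.
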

Lemma~\ref{lm:Visenough} and Lemma~\ref{lm:Visenoughdeq} say that all consequences of $\Gamma$ that are relevant to $V$ are present in $\beta = \alpha_V(\Gamma)$ as well.
\begin{slemma}\label{lm:Visenough}
  $(\Gamma \land x \eq y \vdash u \eq v) \iff (\beta \land x \eq y \vdash u \eq
  v) $.
\end{slemma}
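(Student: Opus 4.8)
The plan is to prove the two directions separately, with the $\Leftarrow$ direction being essentially immediate and the $\Rightarrow$ direction carrying all the weight. For $\Leftarrow$: since $\langle W, \beta, \delta\rangle \in \base(\Gamma, V)$ we have $(\exists W \cdot \beta \wedge \delta) \equiv \Gamma$, hence $\Gamma \models \beta$, so $\Gamma \vdash \ell$ for every literal $\ell$ of $\beta$ (by completeness of $\peuf$ on the relevant entailments); therefore $\Gamma \wedge x \eq y$ derives everything $\beta \wedge x \eq y$ does, and in particular if $\beta \wedge x \eq y \vdash u \eq v$ then $\Gamma \wedge x \eq y \vdash u \eq v$. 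Note this direction does not need the purifier hypothesis.

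For the $\Rightarrow$ direction, assume $\Gamma \wedge x \eq y \vdash u \eq v$ with $u,v \in V$. If already $\Gamma \vdash u \eq v$, then since $\Gamma \models \beta$ and... more carefully: I would first dispatch the case $\Gamma \vdash u \eq v$ by a sub-argument that $\beta$ alone already derives $u \eq v$ — this is where I need that $\beta$ captures all $V$-to-$V$ equality consequences of $\Gamma$, which follows from the structure of a $V$-basis (the equivalence classes of $\Gamma$ restricted to representatives in $V$ are exactly recorded by $\beta_{\eq}$ together with the $\beta_{\funs}$ congruences, while $\delta$ only connects $W$-nodes and non-$V\cup W$ nodes, and conditions (d),(e) guarantee $\delta$ introduces no fresh $V$-$V$ collapses). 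So assume $\Gamma \not\vdash u \eq v$. Now apply Lemma~\ref{lm:sprtrm} with $t_1 = u$, $t_2 = v$, $a = x$, $b = y$: there exist superterms $s_1[x]$ of $x$ and $s_2[y]$ of $y$ with (i) $\Gamma \vdash u \eq s_1[x]$, (ii) $\Gamma \vdash v \eq s_2[y]$, (iii) $\Gamma \wedge x \eq y \vdash s_1[x] \eq s_2[y]$. Because $V$ is a purifier for $\{x,y\}$ in $\Gamma$, the terms $s_1[x]$ and $s_2[y]$ — being superterms of $x$ and $y$ respectively that are entailed equal to $u, v \in \terms(\Gamma)$ — are each $\Gamma$-equal to some constant in $V$: say $\Gamma \vdash u' \eq s_1[x]$ and $\Gamma \vdash v' \eq s_2[y]$ with $u',v' \in V$. (Here I use that $u, v$ themselves are terms of $\Gamma$, so $s_1[x], s_2[y]$ are $\Gamma$-equal to terms of $\Gamma$; strictly I should argue $s_1[x] \in \terms(\Gamma)$ or is $\Gamma$-equal to one — this needs a small argument from the shape of derivations in Lemma~\ref{lm:sprtrm}, or a slightly strengthened reading of the purifier condition.) Then $\Gamma \vdash u \eq u'$ and $\Gamma \vdash v \eq v'$ with all four constants in $V$, so by the previous case $\beta \vdash u \eq u'$ and $\beta \vdash v \eq v'$; and the key remaining fact is that $\beta \wedge x \eq y \vdash u' \eq v'$, which I would get by chasing the congruence-closure derivation of (iii) through $\beta$: each function application step used in deriving $s_1[x] \eq s_2[y]$ over $\Gamma$ corresponds to a $\beta_{\funs}$ literal (the relevant function terms, being superterms of $x,y$ with constant representatives in $V$ by purity, are recorded in $\beta$), and equality steps among $V$-constants are in $\beta_{\eq}$. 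Combining, $\beta \wedge x \eq y \vdash u \eq u' \eq v' \eq v$.

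The main obstacle is the bookkeeping in the last step: showing that the congruence-closure proof of $s_1[x] \eq s_2[y]$ from $\Gamma \wedge x\eq y$ can be replayed inside $\beta \wedge x \eq y$. This requires knowing that every intermediate term appearing in that proof which is a superterm of $x$ or $y$ has a $V$-representative that $\beta$ knows about, and that the function literals connecting them survive in $\beta_{\funs}$ — i.e., that the purifier property propagates along the relevant portion of the congruence graph, and that the truncation performed by the $V$-basis does not cut any edge needed for this particular derivation. I expect to handle this by induction on the structure of the $\peuf$ derivation, using the purifier hypothesis to ensure each superterm of $x$ (resp.\ $y$) encountered is collapsed to a $V$-constant, and using conditions (a)--(e) of the $V$-basis definition to locate the corresponding $\beta$-literal. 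The cases $\Gamma \vdash u \eq v$ and the reduction of ``$\beta$ captures $V$-$V$ consequences of $\Gamma$'' to the basis structure are the other pieces that, while routine, must be stated carefully since Lemma~\ref{lm:Visenough} is itself invoked (implicitly) in that sub-argument — so I would prove the $\Gamma \not\vdash u \eq v$ reduction first and derive the $\Gamma \vdash u \eq v$ case as a degenerate instance.
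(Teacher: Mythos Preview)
Your plan follows the same route as the paper --- invoke Lemma~\ref{lm:sprtrm}, exploit the purifier hypothesis, and induct --- but the inductive step you sketch does not actually decrease. Taking purifier witnesses $u', v' \in V$ for the \emph{whole} superterms $s_1[x], s_2[y]$ is a no-op: since already $\Gamma \vdash u \eq s_1[x]$ with $u \in V$, the witness may simply be $u' = u$, and you are back at the original goal. The reduction has to happen \emph{inside} $s_1$ and $s_2$. The paper inducts on $\max(\depth(s_1),\depth(s_2))$: in the inductive case one writes $s_1 = f(t_1[x], \vec r)$ and $s_2 = f(t_2[y], \vec r)$ with $(\Gamma \land x \eq y) \vdash t_1[x] \eq t_2[y]$, uses an auxiliary lemma to locate equivalent terms $f(t_1',\vec r_1), f(t_2',\vec r_2) \in \terms(\Gamma)$, and only \emph{then} applies the purifier to the arguments $t_1', t_2'$ to obtain $x', y' \in V$. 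By construction of the basis this gives $\beta \vdash u \eq f(x', \vec w)$ and $\beta \vdash v \eq f(y', \vec w)$ for a common tuple $\vec w$; the inductive hypothesis, now at strictly smaller depth, yields $(\beta \land x \eq y) \vdash x' \eq y'$, and congruence closes the case. Your ``chase the $\peuf$ derivation'' idea would likely converge to exactly this once the right decreasing measure is named, but as written the step from $(u,v)$ to $(u',v')$ carries no progress and the induction does not get off the ground.

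A minor point on the easy direction: $\beta$ may contain the fresh constants $W$, so strictly $\Gamma \not\models \beta$; what holds (and suffices, since $u,v,x,y \in V$ do not mention $W$) is $\Gamma \models \exists W \cdot \beta$.
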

\begin{slemma}\label{lm:Visenoughdeq}
$(\Gamma \land x \eq y \vdash u \deq v) \iff ( \beta \land x \eq y \vdash u \deq v)$.
\end{slemma}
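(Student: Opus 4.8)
The plan is to derive both directions from the standard EUF characterization of disequality proofs and then reduce the nontrivial direction to the equality version, \cref{lm:Visenough}. Since $x \eq y$ is an equality, the disequality literals of $\Gamma \land x \eq y$ coincide with those of $\Gamma$, so $\Gamma \land x \eq y \vdash u \deq v$ holds iff there is a disequality $a \deq b \in \Gamma$ (hence $a, b \in \terms(\Gamma)$) with $\Gamma \land x \eq y \vdash u \eq a$ and $\Gamma \land x \eq y \vdash v \eq b$, possibly after swapping $a$ and $b$; I will apply this characterization to both $\Gamma$ and $\beta$. The $(\Leftarrow)$ direction is then routine: $\beta \land x \eq y \vdash u \deq v$ gives, by soundness of $\peuf$, $\beta \land \delta \land x \eq y \models u \deq v$, and since $W$ does not occur in $u \deq v$ or in $x \eq y$ and $(\exists W \cdot \beta \land \delta) \equiv \Gamma$, this yields $\Gamma \land x \eq y \models u \deq v$, so $\Gamma \land x \eq y \vdash u \deq v$ by refutational completeness of $\peuf$.

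For $(\Rightarrow)$, fix a witnessing disequality $a \deq b \in \Gamma$ as above. Because $\Gamma \vdash a \deq b$ and $(\exists W \cdot \beta \land \delta) \equiv \Gamma$, we have $\beta \land \delta \models a \deq b$, so the same characterization (applied to $\beta \land \delta$, which is satisfiable since $\Gamma$ is) produces a disequality literal $r \deq r' \in \beta_{\deq} \cup \delta_{\deq}$ with $\beta \land \delta \vdash a \eq r$ and $\beta \land \delta \vdash b \eq r'$; by the shape of $\beta_{\deq}$ and $\delta_{\deq}$, $r, r' \in V \cup W$. The crux is to show $r, r' \in V$, so that $r \deq r' \in \beta_{\deq} \subseteq \beta$; I argue it for $r'$, the case of $r$ being symmetric. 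Suppose $r' \in W$. Then the $\Gamma$-class of $b$ has no representative in $V$: if $\Gamma \vdash b \eq w$ for some $w \in V$, then $\beta \land \delta \vdash b \eq w$ (again using $(\exists W \cdot \beta \land \delta) \equiv \Gamma$ and $w, b \notin W$), hence $\beta \land \delta \vdash w \eq r'$, contradicting basis condition (d). On the other hand $\Gamma \land x \eq y \vdash v \eq b$ with $v \in V$: if $\Gamma \vdash v \eq b$ already, the class of $b$ contains $v \in V$, a contradiction; and if $\Gamma \not\vdash v \eq b$, then \cref{lm:sprtrm} yields a superterm $s[y]$ of $y$ with $\Gamma \vdash b \eq s[y]$, so, since $b \in \terms(\Gamma)$ and $V$ is a purifier for $\{x,y\}$ in $\Gamma$, some $w \in V$ satisfies $\Gamma \vdash w \eq b$ --- again a contradiction. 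Hence $r' \in V$.

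It then remains to transfer the two equalities to $\beta$: from $\beta \land \delta \vdash a \eq r$ with $r \in V \subseteq \consts(\Gamma)$ we get $\Gamma \vdash a \eq r$, so $\Gamma \land x \eq y \vdash u \eq r$, and since $u, r, x, y \in V$, \cref{lm:Visenough} gives $\beta \land x \eq y \vdash u \eq r$, and symmetrically $\beta \land x \eq y \vdash v \eq r'$; combining these with $r \deq r' \in \beta$ and congruence reasoning in $\peuf$ yields $\beta \land x \eq y \vdash u \deq v$. The hard part is the crux step of the second paragraph --- excluding $r' \in W$ --- which is exactly where the purifier hypothesis and \cref{lm:sprtrm} are used, and where one must appeal to basis conditions (d)--(e) to see that a $\Gamma$-class meeting $V$ cannot be named by a $W$-constant; everything else is bookkeeping with soundness and completeness of $\peuf$ and the defining equivalence $(\exists W \cdot \beta \land \delta) \equiv \Gamma$ of the $V$-basis.
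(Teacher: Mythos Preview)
Your overall strategy for $(\Rightarrow)$ is sound and takes a different route from the paper's. The paper stays inside $\Gamma$: it first asserts there is a term $t$ with $\Gamma \vdash u \deq t$ and $\Gamma \land x \eq y \vdash v \eq t$, then uses the purifier property and \cref{lm:trmingamdeq} to replace $t$ by some $y' \in V$, and finishes with \cref{lm:Visenough} plus the fact that $\beta$ records all $V$-disequalities of $\Gamma$. You instead route through the full basis $\beta \land \delta$, locate the disequality witness $r \deq r'$ there, and argue $r, r' \in V$ via \cref{lm:sprtrm}, the purifier hypothesis, and basis condition~(d). Your route makes the basis structure do the work and sidesteps the paper's opening claim about $t$, which as stated is not obvious without first passing to a $V$-representative on one side.

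There is, however, a recurring technical slip: you conflate $\models$ and $\vdash$ for disequality literals. The paper's appendix explicitly notes that $\Gamma \models a \deq b$ does \emph{not} imply $\Gamma \vdash a \deq b$, since $\peuf$ has no contraposition rule; e.g.\ $\{f(u) \deq f(v)\} \models u \deq v$ but $\{f(u) \deq f(v)\} \not\vdash u \deq v$. This bites twice. In $(\Leftarrow)$, refutational completeness only yields $\Gamma \land x \eq y \land u \eq v \vdash \bot$, which is strictly weaker than $\Gamma \land x \eq y \vdash u \deq v$. In $(\Rightarrow)$, from $\beta \land \delta \models a \deq b$ you invoke the $\vdash$-characterization to extract $r \deq r'$, but that characterization needs $\beta \land \delta \vdash a \deq b$. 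Both are repairable: for $(\Leftarrow)$, apply your own characterization to $\beta$ instead (the witnessing disequality must lie in $\beta_{\deq}$, so both sides are in $V$; then the easy direction of \cref{lm:Visenough} together with $\Gamma \vdash c \deq d$ for $c \deq d \in \beta_{\deq}$ finishes); for $(\Rightarrow)$, argue from the congruence-closure reading of the basis that each literal $a \deq b \in \Gamma$ is recorded between the $V \cup W$-representatives of $a$ and $b$, so that $\beta \land \delta \vdash a \deq b$ holds outright rather than merely semantically.
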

Lemma~\ref{lm:pur} says that $\beta = \alpha_V(\Gamma)$ can be described using terms of depth one using constants in $V$.
\begin{slemma}\label{lm:pur}
  $V$ is a purifier for $x \in V$ in $\beta$.
\end{slemma}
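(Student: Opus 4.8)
### Proof plan for Lemma~\ref{lm:pur}

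$\;$

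The plan is to unpack the definitions and show directly that $\beta = \alpha_V(\Gamma)$ satisfies the purifier condition for an arbitrary $x \in V$: for every term $t \in \terms(\beta)$ with $\beta \vdash t \eq s[x]$ for some superterm $s[x]$ of $x$, we must exhibit $v \in V$ with $\beta \vdash v \eq t$. First I would record the structural facts about $\beta$ that come straight from the $V$-basis definition (Def.\ of $\base(\Gamma,V)$, clause (c)): $\beta = \beta_{\eq} \cup \beta_{\deq} \cup \beta_{\funs}$, where $\beta_{\eq}$ and $\beta_{\deq}$ only relate constants of $V$, and every literal in $\beta_{\funs}$ has the form $v \eq f(\Vec{w})$ with $v \in V$, $\Vec{w} \subseteq V \cup W$, and $\Vec{w} \cap V \neq \emptyset$. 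In particular, the only non-constant terms occurring in $\beta$ are the depth-one terms $f(\Vec{w})$ appearing on the right-hand sides of $\beta_{\funs}$, and each such term is asserted equal to a constant of $V$.

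$\;$

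Next I would do the case analysis on the shape of the term $t \in \terms(\beta)$. If $t$ is a constant, then $t \in \consts(\beta) \subseteq V \cup W$; if moreover $\beta \vdash t \eq s[x]$ for a proper superterm $s[x]$ of $x$ (so $s$ has depth $\geq 1$), I would argue that $t$ cannot be in $W$: by clauses (d) and (e) of the $V$-basis definition no constant of $W$ is $\beta\land\delta$-equal (hence not $\beta$-equal) to a constant of $V$ or to a distinct constant of $W$, and a short congruence-closure argument shows $W$-constants are ``sink'' nodes with no outgoing equalities to terms containing $x$ unless forced through a $V$-representative — so $t \in V$ and we take $v = t$. (If $s[x] = x$ itself the statement is about $t \eq x$ and again $t$ must be $V$-equal to $x$, but this degenerate case may not even be required depending on how ``superterm'' is read.) If $t$ is not a constant, then as noted $t = f(\Vec{w})$ is one of the depth-one terms on the right of some literal $v' \eq f(\Vec{w})$ in $\beta_{\funs}$ with $v' \in V$; then $\beta \vdash v' \eq t$ and we take $v = v'$. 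This exhausts the cases, so $\beta$ purifies $x$.

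$\;$

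The main obstacle I expect is the constant case: ruling out $t \in W$. The intuition is clear — constants in $W$ were introduced precisely to name equivalence classes with no $V$-representative, so they sit ``below'' everything and cannot be proved equal to a term built over $x \in V$ without that term already collapsing to a $V$-constant — but turning this into a clean argument requires reasoning about the shape of $\peuf$ derivations (or equivalently the congruence graph) from $\beta$ alone. I would handle it by observing that in any congruence-closure of $\beta$, an equivalence class containing a $W$-constant $w$ can contain another constant only if that constant is also in $W$ (excluded by (e)) — wait, this is too strong; rather: by clause (d), $w$'s class contains no $V$-constant, and the only terms $\beta$ forces into $w$'s class are via $\beta_{\funs}$-literals $v' \eq f(\Vec w)$ where $w$ appears as an argument, which put $f(\Vec w)$ into $v'$'s ($V$-)class, not $w$'s. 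Hence the class of any term $\beta$-equal to a superterm $s[x]$ (which contains the $V$-constant $x$ as a subterm and so, by congruence, lies in a class reachable from $V$) is a $V$-class, giving the required $v \in V$. I would make this precise either via the uniqueness of the congruence closure or by induction on the structure of a $\peuf$-derivation, citing the basic EUF facts assembled for the other lemmas in App.~\ref{sec:proofs}.
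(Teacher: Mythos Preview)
The paper does not actually include a proof of this lemma: the appendix proves Lemmas~\ref{lm:Visenough}, \ref{lm:Visenoughdeq}, and \ref{lm:eqpres1v}--\ref{lm:propequiv}, but Lemmas~\ref{lm:sprtrm}, \ref{lm:pur}, and \ref{lm:idem} are stated without proof. So there is no paper argument to compare against, and your direct unpacking of the $V$-basis structure is the natural route. Your case analysis is correct: for $t \in V$ take $v = t$; for $t$ a depth-one function term in $\terms(\beta)$, the shape of $\beta_{\funs}$ guarantees some literal $v' \eq t$ with $v' \in V$; and for $t = w \in W$ you rightly aim to show the hypothesis is vacuous.

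Your key observation for that last case --- that $w$ occurs in $\beta$ only as an argument inside $\beta_{\funs}$, never as a full side of an equality, so no $\peuf$ rule can place anything else into $w$'s congruence class --- is exactly right and, together with clauses~(d) and~(e) of the basis definition, yields that $w$'s class under $\beta$ is the singleton $\{w\}$. Since $w$ does not contain $x$ as a subterm, $\beta \vdash w \eq s[x]$ is impossible, and the case is vacuous. The one misstep is your closing ``Hence'' sentence: the claim that the class of any term $\beta$-equal to some superterm $s[x]$ is necessarily a $V$-class is false for arbitrary $s[x] \in \terms(\Sigma)$ (take $\beta = \{v \eq f(x)\}$ and $s[x] = f(f(x))$; its class is $\{f(f(x)), f(v)\}$, which contains no $V$-constant). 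You do not need this detour --- the singleton-class fact about $w$ that you already established closes the $W$ case directly, so just stop there.
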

Lemma~\ref{lm:idem} says that $\alpha_V$ is idempotent.
\begin{slemma}\label{lm:idem}
  $\alpha_V(\Gamma) = \alpha_V(\alpha_V(\Gamma))$.
\end{slemma}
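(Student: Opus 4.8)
The plan is to obtain idempotence as a single application of the second requirement of Definition~\ref{def:vabst}. Fix $\langle W, \beta, \delta \rangle \in \base(\Gamma, V)$, so that $\alpha_V(\Gamma) = \beta$ and $\beta$ is satisfiable (since $\beta \land \delta$ is, as $(\exists W\cdot\beta\land\delta)\equiv\Gamma$). Let $W_\beta \eqdef W \cap \consts(\beta)$ be the fresh constants that actually occur in $\beta$, choose a set $W'$ disjoint from $\consts(\Gamma)\cup\consts(\beta)$ with a bijection $\pi : W_\beta \to W'$, and write $\beta' \eqdef \pi(\beta)$. Renaming the fresh constants $W_\beta$ to $W'$ inside the basis is legitimate (all of (a)--(e) are renaming-invariant and $W'$ is fresh for $\Gamma$), so $\langle W', \beta', \pi(\delta)\rangle \in \base(\Gamma, V)$. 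Hence it suffices to exhibit $\delta'$ with $\langle W', \beta', \delta'\rangle \in \base(\beta, V)$: then $\Gamma$ and $\beta$ have $V$-bases sharing the $\beta$-component $\beta'$, and Definition~\ref{def:vabst}(2) gives $\alpha_V(\beta) = \alpha_V(\Gamma)$, i.e.\ $\alpha_V(\alpha_V(\Gamma)) = \alpha_V(\Gamma)$.

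The claim is that $\delta' \eqdef \bigwedge_{w \in W_\beta}\bigl(\pi(w) \eq w\bigr)$ works, where each original $w \in W_\beta$ is now treated as an ordinary constant (it lies in $\consts(\beta)$, hence outside $V$ and outside $W'$). Checking conditions (a)--(e) for $\langle W', \beta', \delta'\rangle$ relative to $\beta$ is routine: (a) holds by choice of $W'$; for (b), the equalities in $\delta'$ pin each $\pi(w)$ to $w$, so eliminating $W'$ from $\beta'\land\delta'$ returns exactly $\beta$; for the shape constraints (c), $\pi$ fixes $V$ and maps $W_\beta$ into $W'$, so $\beta'_\eq,\beta'_\deq$ stay over $V$ and each $v\eq f(\vec w)\in\beta_\funs$ becomes $v\eq f(\pi(\vec w))$ with $\pi(\vec w)\subseteq V\cup W'$ and $\pi(\vec w)\cap V=\vec w\cap V\neq\emptyset$, while $\delta'$ consists only of equalities $\pi(w)\eq w$ with $\pi(w)\in W'$ and $w\notin V\cup W'$, matching the shape of $\delta_\eq$ (and $\delta'_\deq=\delta'_\funs=\emptyset$); finally (d) and (e) transfer from the original basis: applying the de-renaming substitution $[\pi(w)\mapsto w]$ turns $\beta'$ into $\beta$ and $\delta'$ into a tautology, and since EUF derivability is preserved under substitution of constants for constants, a derivation $\beta'\land\delta'\vdash v\eq\pi(w)$, resp.\ $\beta'\land\delta'\vdash \pi(w_1)\eq\pi(w_2)$, would yield $\beta\vdash v\eq w$, resp.\ $\beta\vdash w_1\eq w_2$, contradicting (d), resp.\ (e), for $\langle W,\beta,\delta\rangle$ since $\beta\subseteq\beta\land\delta$.

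The main obstacle is precisely the bookkeeping around fresh constants: since a $V$-basis is only required to be fresh relative to its own input, the $W$-constants embedded in $\beta=\alpha_V(\Gamma)$ cannot stay ``fresh'' when $\beta$ is re-abstracted and must instead be re-introduced as genuine constants through $\delta'$; one must then verify that this reintroduction produces no new $V$-relevant consequences, which is exactly what the transfer of conditions (d) and (e) guarantees (and it is also why the equality in the statement is naturally read up to the renaming of $W$ under which $V$-bases, and hence $\alpha_V$, are unique). Everything else is a direct unwinding of the definitions.
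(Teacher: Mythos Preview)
The paper does not actually prove \Cref{lm:idem}; the appendix supplies proofs only for \Cref{lm:Visenough,lm:Visenoughdeq,lm:eqpres1v,lm:subset,lm:deqpres1v,lm:propequiv}, so there is no paper proof to compare against. Your argument is correct and is exactly in the paper's style: you exhibit a common $\beta$-component $\beta'$ shared by a $V$-basis of $\Gamma$ and a $V$-basis of $\beta=\alpha_V(\Gamma)$, then invoke condition~(2) of \Cref{def:vabst}, which is the same device (packaged as \Cref{lm:alpheq}) that the appendix uses in every neighbouring proof.

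The only loose end is notational: you write $\langle W',\beta',\pi(\delta)\rangle\in\base(\Gamma,V)$, but $\pi$ is defined only on $W_\beta$, so constants of $W\setminus W_\beta$ that occur in $\delta$ are untouched and the fresh-constant set of the renamed basis should be $W'\cup(W\setminus W_\beta)$ rather than $W'$. This is harmless, since all you actually need at that step is $\beta'\in\bbase(\Gamma,V)$, which follows directly from the renaming-invariance of $V$-bases.
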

Lemma~\ref{lm:eqpres1v} and Lemma~\ref{lm:subset} say that $\alpha_V$ preserves addition of new literals and dropping of constants.
\begin{slemma}\label{lm:eqpres1v}
   $\alpha_V(\varphi_1 \land x \eq y) = \alpha_V(\varphi_2 \land x \eq y)$.
\end{slemma}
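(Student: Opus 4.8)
The goal is to show that adding a $V$-$V$ equality $x \eq y$ commutes with the $V$-base abstraction in the sense that equal abstractions stay equal: if $\alpha_V(\varphi_1) = \alpha_V(\varphi_2)$, then $\alpha_V(\varphi_1 \land x \eq y) = \alpha_V(\varphi_2 \land x \eq y)$. My plan is to reduce this, via the definition of $\alpha_V$ (Def.~\ref{def:vabst}), to a purely proof-theoretic statement about $V$-consequences, and then discharge that statement using Lemmas~\ref{lm:Visenough} and \ref{lm:Visenoughdeq} together with the purifier property of $\beta = \alpha_V(\varphi_i)$ from Lemma~\ref{lm:pur}.

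The first step is to unpack what $\alpha_V(\varphi_1 \land x \eq y) = \alpha_V(\varphi_2 \land x \eq y)$ means. By clause~(2) of Def.~\ref{def:vabst}, it suffices to exhibit a single $\beta'$ and decompositions $\langle W_1, \beta', \delta_1'\rangle \in \base(\varphi_1 \land x \eq y, V)$ and $\langle W_2, \beta', \delta_2'\rangle \in \base(\varphi_2 \land x \eq y, V)$. Since a $V$-basis of a satisfiable set is unique up to renaming of $W$ and reordering of $V$-$V$ equalities, the $\beta$-component is determined (up to that renaming/reordering) by the set of $V$-relevant consequences of the formula — namely the derivable literals of the forms $u \bowtie v$ with $u,v \in V$, and $v \eq f(\vec w)$ patterns. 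So the real content is: $\varphi_1 \land x \eq y$ and $\varphi_2 \land x \eq y$ have the ``same'' $V$-basis $\beta$-part. Now here is the key move: by Lemma~\ref{lm:pur}, $V$ is a purifier for (elements of) $V$ in $\beta$, so $\beta$ already contains all of $\varphi_i$'s $V$-relevant structure in depth-one form; and by Lemmas~\ref{lm:Visenough} and \ref{lm:Visenoughdeq}, for all $u,v \in V$ we have $\varphi_i \land x \eq y \vdash u \bowtie v$ iff $\beta \land x \eq y \vdash u \bowtie v$, where the right-hand side no longer mentions $\varphi_i$ at all — it only mentions the common $\beta$. Hence both $\varphi_1 \land x \eq y$ and $\varphi_2 \land x \eq y$ have exactly the $V$-$V$ consequences of $\beta \land x \eq y$.

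It remains to also match the function-literal part ($\beta_\funs$ and $\delta_\funs$) of the two bases, i.e. the derivable facts $v \eq f(\vec w)$. Here I would argue that, since $V$ is a purifier for $V$ in $\beta$, every term of $\varphi_i$ equal (modulo $\varphi_i$) to a superterm of some $v \in V$ is already represented by a constant of $V$ in $\beta$; adding $x \eq y$ can only merge equivalence classes and, by Lemma~\ref{lm:sprtrm}, any newly derivable equality between terms factors through superterms of $x$ and $y$ — both of which lie in $V$, so again their relevant superterms are captured in $\beta$. Consequently the function literals needed to build a $V$-basis of $\varphi_i \land x \eq y$ can all be read off from $\beta \land x \eq y$, independently of $i$. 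Assembling these observations, $\beta \land x \eq y$ (after taking its own $V$-basis, using idempotence from Lemma~\ref{lm:idem}) yields a common $\beta'$-component for both, which is precisely $\alpha_V(\beta \land x \eq y)$; so $\alpha_V(\varphi_1 \land x \eq y) = \alpha_V(\beta \land x \eq y) = \alpha_V(\varphi_2 \land x \eq y)$.

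The main obstacle I anticipate is the bookkeeping around the function-literal components and the fresh constants $W$: one must be careful that the fresh witnesses $W$ introduced for the deep subterms truncated away in $\varphi_1 \land x \eq y$ can be chosen to coincide (up to renaming) with those for $\varphi_2 \land x \eq y$, which ultimately rests on the claim that $\beta$ — the shared abstraction — already determines, via its purifier structure, which equivalence classes lack a $V$-representative. Making that ``only $\beta$ matters'' claim fully rigorous for the $\delta_\funs$ part, rather than just the constant-equality part handled cleanly by Lemmas~\ref{lm:Visenough}--\ref{lm:Visenoughdeq}, is where the argument needs the most care; I would isolate it as a sublemma stating that $\base(\varphi \land x \eq y, V)$ depends on $\varphi$ only through $\alpha_V(\varphi)$ when $V$ is a purifier for $\{x,y\}$.
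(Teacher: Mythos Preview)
Your plan is correct and matches the paper's proof: both reduce, via the characterization of $\alpha_V$-equality as sharing a common $\beta$-component (the paper packages this as an auxiliary lemma equating $\alpha_V(\varphi_1)=\alpha_V(\varphi_2)$ with $\bbase(\varphi_1,V)\cap\bbase(\varphi_2,V)\neq\emptyset$), to showing that every literal in a $V$-basis $\beta'\supseteq\beta$ of $\varphi_1\land x\eq y$ is already derivable from $\beta\land x\eq y$, dispatching the $V$--$V$ (dis)equalities with Lemmas~\ref{lm:Visenough}--\ref{lm:Visenoughdeq} and the function literals by a case analysis on the superterms supplied by Lemma~\ref{lm:sprtrm} together with the purifier hypothesis on $\varphi_i$. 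The obstacle you flag is exactly where the paper does the work: it carries out the $\beta_\funs$ case split explicitly (several sub-cases on the shapes of $s_1[x]$ and $s_2[y]$), so your proposed sublemma is indeed the core of the argument.
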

\begin{slemma}\label{lm:subset}
   If $U \subseteq V$, then \[(\alpha_V(\varphi_1)= \alpha_V(\varphi_2)) \Rightarrow (\alpha_U(\varphi_1) = \alpha_U(\varphi_2))\]
\end{slemma}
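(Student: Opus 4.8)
The plan is to reduce the statement to a ``transitivity'' identity for the base abstraction, generalizing idempotence (Lemma~\ref{lm:idem}): for every set of literals $\Gamma$ and every $U \subseteq V$, $\alpha_U(\Gamma) = \alpha_U(\alpha_V(\Gamma))$ (the case $U = V$ being exactly Lemma~\ref{lm:idem}). Granting this identity, the lemma follows in one line: writing $\gamma \eqdef \alpha_V(\varphi_1) = \alpha_V(\varphi_2)$, and using that $\gamma$ is a single set of literals while $\alpha_U$ is a function, we get $\alpha_U(\varphi_1) = \alpha_U(\alpha_V(\varphi_1)) = \alpha_U(\gamma) = \alpha_U(\alpha_V(\varphi_2)) = \alpha_U(\varphi_2)$, where the two outer equalities are the identity applied to $\varphi_1$ and to $\varphi_2$ respectively.

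To prove $\alpha_U(\Gamma) = \alpha_U(\alpha_V(\Gamma))$ I would invoke the second clause of Def.~\ref{def:vabst}: it suffices to exhibit a single set of literals $\beta'$ that is simultaneously the $\beta$-component of some $U$-basis of $\Gamma$ and of some $U$-basis of $\alpha_V(\Gamma)$. Fix a $V$-basis $\langle W, \beta, \delta\rangle$ of $\Gamma$, so $\alpha_V(\Gamma) = \beta$ and $\Gamma \equiv \exists W \cdot (\beta \land \delta)$, and let $\langle W', \beta', \delta'\rangle$ be a $U$-basis of $\Gamma$ obtained by the standard re-truncation of the congruence structure of $\Gamma$ at level $U$ (each class meeting $U$ keeps a representative in $U$; every other class gets a fresh name from $W'$). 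The heart of the argument is that $\beta'$ depends only on $\beta$ and $U$, never on $\delta$. Indeed, since $U \subseteq V$ the ``$U$-visible'' part of the congruence structure already lives inside $\beta$: every constant of $U$ lies in a class that has a $V$-representative (namely itself), so all $U$-classes and all (dis)equalities between their members are recorded in $\beta_\eq \cup \beta_\deq$ (this follows directly from the definition of a $V$-basis and, in particular, from side conditions (d)--(e); Lemmas~\ref{lm:Visenough} and~\ref{lm:Visenoughdeq}, read with the set $V$, say the same thing at the level of derivations); and any depth-one term $f(\vec t)$ with $\Gamma \vdash u \eq f(\vec t)$ for some $u \in U$ and having an argument whose class meets $U$ also has an argument whose class meets $V$, so its $\beta$-reduced form occurs in $\beta_\funs$. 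Dually, every literal of $\delta$ sits ``below'' the $U$-surface and is absorbed into the $\delta'$-component of a $U$-basis, not into $\beta'$. Hence the same $\beta'$ is the $\beta$-component of a $U$-basis of $\beta = \alpha_V(\Gamma)$ as well, and the second clause of Def.~\ref{def:vabst} yields $\alpha_U(\Gamma) = \alpha_U(\alpha_V(\Gamma))$.

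I expect the main obstacle to be making ``$\beta'$ depends only on $\beta$'' precise, which is essentially a bookkeeping exercise about representatives and the two disjoint batches of fresh constants ($W$ for the $V$-basis, $W'$ for the $U$-basis). One must show that class representatives can be chosen coherently at both granularities --- a representative in $U$ when the class meets $U$, otherwise one in $V$, otherwise a fresh name --- so that re-truncating $\Gamma$ and re-truncating $\beta$ produce literally the same $\beta'$ up to renaming of fresh constants; in particular, the $\delta$-literals whose result constant lies in $V \setminus U$ must be rewritten into the shape permitted for a $U$-basis $\delta$-component, and one checks this rewriting leaves $\beta'$ untouched. The remaining obligations are routine: freshness of $W \cup W'$, the defining equivalence of a $U$-basis (which on the $\Gamma$ side is $\exists W' \cdot (\beta' \land \delta') \equiv \Gamma$ and on the $\beta$ side factors through $\exists W' \cdot (\beta' \land \delta'') \equiv \beta$ since the extra $\delta$-literals are free of $W'$), and the non-derivability side conditions (d)--(e), all of which concern the constructed triples alone.
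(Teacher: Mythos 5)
Your proposal takes essentially the same route as the paper: the paper's own proof is exactly the two-step reduction, namely applying $\alpha_U$ to the hypothesis $\alpha_V(\varphi_1)=\alpha_V(\varphi_2)$ and then using the absorption identity $\alpha_U(\alpha_V(\varphi_i))=\alpha_U(\varphi_i)$ for $U \subseteq V$. The only difference is that the paper asserts this identity without further argument, while you additionally sketch a basis-level justification for it via clause~(2) of Def.~\ref{def:vabst}; that added detail is consistent with the paper's machinery and is correct.
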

Lemma~\ref{lm:deqpres1v} extends the preservation results to disequalities. $V$ is a set of constants, $x, y \in V$. $V$ is not required to be a purifier (as it was in the previous
 lemmas). 
\begin{slemma}\label{lm:deqpres1v}
 $\alpha_V(\varphi_1 \land x \deq y) = \alpha_V(\varphi_2 \land x \deq y)$.
\end{slemma}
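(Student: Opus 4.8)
The plan is to exploit an asymmetry between equalities and disequalities in \euf: adding a disequality between two $V$-constants triggers no new congruence merges, so, unlike the equality case in \cref{lm:eqpres1v}, the structure recorded by a $V$-basis is essentially untouched, and no purifier assumption is needed. Write $\beta \eqdef \alpha_V(\varphi_1) = \alpha_V(\varphi_2)$ and fix, by \cref{def:vabst}(1), $V$-bases $\langle W_i, \beta, \delta_i \rangle \in \base(\varphi_i, V)$ for $i = 1, 2$. First dispose of the degenerate case: if $\varphi_1 \land x \deq y$ is unsatisfiable then $\varphi_1 \vdash x \eq y$; since $\exists W_i \cdot (\beta \land \delta_i) \equiv \varphi_i$ and $x, y \notin W_i$, this is equivalent to $\beta \vdash x \eq y$ (uniqueness of the $V$-basis rules out $\delta_i$ contributing fresh $V$--$V$ equalities), a condition symmetric in $i$, so $\varphi_2 \land x \deq y$ is unsatisfiable as well and both sides reduce to $\alpha_V$ of an inconsistent set. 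Hence we may assume both $\varphi_1 \land x \deq y$ and $\varphi_2 \land x \deq y$ are satisfiable, and by \cref{def:vabst}(2) it then suffices to exhibit one formula $\beta'$ that is the middle component of some $V$-basis of $\varphi_1 \land x \deq y$ \emph{and} of some $V$-basis of $\varphi_2 \land x \deq y$.

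The core claim is: if $\langle W, \beta, \delta\rangle \in \base(\varphi, V)$, $x, y \in V$, and $\varphi \land x \deq y$ is satisfiable, then $\langle W, \beta \cup \{x \deq y\}, \delta\rangle \in \base(\varphi \land x \deq y, V)$. Conditions (a)--(c) are immediate: $W$ stays fresh, $x \deq y$ is a legal member of $\beta_{\deq}$ because $x, y \in V$, and, since $x \deq y$ is $W$-free, $\exists W \cdot (\beta \land \delta \land x \deq y) \equiv (\exists W \cdot \beta \land \delta) \land x \deq y \equiv \varphi \land x \deq y$. For (d) and (e) -- that $\beta \cup \{x \deq y\}$ together with $\delta$ still derives no $V$--$W$ and no $W$--$W$ equality -- note that $\beta \land \delta \land x \deq y$ is satisfiable (it is $V$-equivalent to the satisfiable $\varphi \land x \deq y$) and that in $\peuf$ a disequality literal enters a derivation of an equality only by first deriving $\bot$; hence every equality derivable from $\beta \land \delta \land x \deq y$ is already derivable from $\beta \land \delta$, so (d) and (e) for $\langle W, \beta, \delta\rangle$ carry over verbatim.

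Applying the core claim to $\langle W_i, \beta, \delta_i\rangle$ for $i = 1, 2$ gives $\langle W_i, \beta \cup \{x \deq y\}, \delta_i\rangle \in \base(\varphi_i \land x \deq y, V)$, so $\beta' = \beta \cup \{x \deq y\}$ is a common middle component, and \cref{def:vabst}(2) yields $\alpha_V(\varphi_1 \land x \deq y) = \alpha_V(\varphi_2 \land x \deq y)$. The one delicate point -- and precisely the reason this lemma, unlike \cref{lm:eqpres1v}, needs no purifier -- is condition (d)/(e) in the core claim: showing that re-adding a disequality between $V$-constants creates no new $V$--$W$ equality. This is short once the basic fact that disequalities do not propagate under congruence is in hand; the only other bookkeeping is the treatment of inconsistent path conditions, needed so that the statement holds for arbitrary $\varphi_1, \varphi_2$ rather than only satisfiable ones.
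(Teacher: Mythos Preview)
Your proof is correct and follows essentially the same route as the paper: both take a common $\beta \in \bbase(\varphi_i, V)$, augment it with disequality literals between $V$-constants, verify the result lies in $\bbase(\varphi_i \land x \deq y, V)$ for each $i$, and conclude via the shared-basis criterion (your \cref{def:vabst}(2), the paper's \cref{lm:alpheq}). The only cosmetic differences are that the paper adds the propagated set $L = \{x \deq u \mid y \eq u \in \beta,\ u \in V\}$ where you add just $\{x \deq y\}$, and you handle the unsatisfiable case explicitly whereas the paper leaves it implicit.
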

Lemma~\ref{lm:propequiv} extends the preservation results for equalities involving a fresh constant
$x'$ s.t. $x' \not \in \const(\varphi_1) \cup
 \const(\varphi_2)$. $\Vec{y}\subseteq V$, $V' = V \cup\{x'\}$, and $f(\Vec{y})$ be a term s.t
  there does not exists a term $t \in \terms(\varphi_1) \cup \terms(\varphi_2)$
  s.t. $\varphi_1 \vdash t \eq f(\Vec{y})$ or $\varphi_2 \vdash t \eq
  f(\Vec{y})$.
\begin{slemma}\label{lm:propequiv}
  \begin{align*}
  \tag{1}\alpha_{V'}(\varphi_1 \land x' \eq y) &= \alpha_{V'}(\varphi_2 \land x' \eq y)\\
  \tag{2}\alpha_{V'}(\varphi_1 \land x' \eq f(\Vec{y})) &= \alpha_{V'}(\varphi_2 \land x' \eq f(\Vec{y}))
  \end{align*}
\end{slemma}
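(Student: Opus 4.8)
The plan is to reduce both equalities to a single structural fact about bases: extending the preserved set from $V$ to $V' = V \cup \{x'\}$ and conjoining the defining literal for the fresh constant $x'$ changes a $V$-basis only by inserting that one literal into its $\beta$-part, leaving $W$ and $\delta$ untouched. Once this is established, the conclusion is immediate from Definition~\ref{def:vabst}(2). The hypothesis $\alpha_V(\varphi_1) = \alpha_V(\varphi_2)$ furnishes $V$-bases $\langle W_i, \beta, \delta_i\rangle \in \base(\varphi_i, V)$ for $i \in \{1,2\}$ sharing a common $\beta$-part (up to renaming of $W$). The extension then produces $V'$-bases of $\varphi_i \land x' \eq y$ (resp.\ $\varphi_i \land x' \eq f(\Vec{y})$) that again share a common $\beta$-part, namely $\beta \cup \{x' \eq y\}$ (resp.\ $\beta \cup \{x' \eq f(\Vec{y})\}$). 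Exhibiting one common $\beta$ for both formulas is exactly the trigger of Definition~\ref{def:vabst}(2), which then forces their $\alpha_{V'}$-images to coincide.

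For part~(1), I would show $\langle W_i, \beta \cup \{x' \eq y\}, \delta_i\rangle \in \base(\varphi_i \land x' \eq y, V')$. Conditions~(a)--(c) are routine: $W_i$ stays fresh, $(\exists W_i \cdot (\beta \cup \{x' \eq y\}) \land \delta_i) \equiv \varphi_i \land x' \eq y$ since $x'$ is not in $W_i$ and $x' \eq y$ mentions only $V'$-constants, and $x' \eq y$ is a legal $\beta_\eq$ literal over $V'$ while every old literal keeps its classification because $V \subseteq V'$ and $x'$ occurs nowhere in $\delta_i$. The only real content is conditions~(d) and~(e): that no new entailment $v \eq w$ with $v \in V'$, $w \in W_i$, nor $w_1 \eq w_2$ with distinct $w_1, w_2 \in W_i$, is created. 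Here the point is that $x'$ is a pendant, occurring solely in $x' \eq y$. By Lemma~\ref{lm:sprtrm} applied to the constant equality $x' \eq y$, anything newly derivable must pass through a superterm of $x'$ or $y$; but the only superterm of the fresh $x'$ is $x'$ itself, so any such derivation can be rewritten over $\beta \land \delta_i$ by replacing $x'$ with $y \in V$, and a fresh merge would already have been derivable from the original basis, contradicting its~(d)/(e).

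For part~(2), the same extension $\langle W_i, \beta \cup \{x' \eq f(\Vec{y})\}, \delta_i\rangle$ is claimed to be a $V'$-basis of $\varphi_i \land x' \eq f(\Vec{y})$; note $x' \eq f(\Vec{y})$ is a legal $\beta_\funs$ literal since $x' \in V'$ and $\Vec{y} \subseteq V \subseteq V'$ gives $\Vec{y} \cap V' \neq \emptyset$. Conditions~(d)/(e) are again the crux, and this is where the newness hypothesis on $f(\Vec{y})$ enters: since no term of $\varphi_i$ is provably equal to $f(\Vec{y})$, the term is absent from $\terms(\varphi_i)$ and congruent to none of its terms, so conjoining $x' \eq f(\Vec{y})$ adds exactly one new congruence class $\{x', f(\Vec{y})\}$ and leaves every previously recorded class intact. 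I would make this precise by first naming the new term with a fresh constant $z'$ (so $z' \eq f(\Vec{y})$, which by newness isolates $z'$) and then adding the constant equality $x' \eq z'$, to which Lemma~\ref{lm:sprtrm} applies directly; since the only superterm of $x'$ is $x'$ and $f(\Vec{y})$ has no proper superterm among the terms of $\varphi_i$, no equality between previously present constants becomes derivable, so the class named by $x'$ neither collapses onto $W_i$ nor merges two $W_i$-constants.

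The main obstacle is the verification of conditions~(d) and~(e) in part~(2): unlike part~(1), where $x'$ is literally an alias of the $V$-constant $y$, here $x'$ names the value of a new depth-one term built from $V$, and one must argue that this term stays congruence-closed away from every class already in the $V$-basis. The newness hypothesis is precisely what excludes an accidental congruence $f(\Vec{y}) \eq f(\Vec{w})$ for an argument tuple $\Vec{w}$ already present --- any such $\Vec{w}$ would require $\Vec{w} \eq \Vec{y}$, but basis condition~(d) forbids any $V$-constant from being equated to a $W$-constant, so $\Vec{w}$ could only consist of the same $V$-constants as $\Vec{y}$, which newness rules out. Turning this informal ``no silent merge'' statement into a checked proof that the augmented triple satisfies~(d) and~(e) is the technical heart; part~(1) and the closing appeal to Definition~\ref{def:vabst}(2) are then straightforward.
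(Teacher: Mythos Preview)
Your proposal is correct and follows the same strategy as the paper: from a shared $\beta \in \bbase(\varphi_1,V) \cap \bbase(\varphi_2,V)$ (guaranteed by the hypothesis), build a common $\beta'$ that belongs to the $V'$-bases of both extended formulas, then invoke Definition~\ref{def:vabst}(2). The paper phrases this via the auxiliary Lemma~\ref{lm:alpheq} ($\alpha_V(\varphi_1)=\alpha_V(\varphi_2) \Leftrightarrow \bbase(\varphi_1,V)\cap\bbase(\varphi_2,V)\neq\emptyset$), but the content is identical.

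The only noteworthy difference is the size of the extension. You add the single defining literal ($\{x' \eq y\}$ or $\{x' \eq f(\vec{y})\}$) and then work to verify conditions~(d)/(e) via \Cref{lm:sprtrm}; the paper instead adds a saturated set---$L=\{\ell\mid \ell[x'\mapsto y]\in\beta\}$ in case~(1) and $X_{\mathit{def}}=\{x'\bowtie t\mid \beta\vdash f(\vec{y})\bowtie t,\ \depth(t)\le 1,\ \consts(t)\cap\consts(\beta)\subseteq V\}$ in case~(2)---and simply asserts that the result is a basis ``by definition''. Your minimal extension is legitimate (nothing in the basis definition demands $\beta$ be closed under derivability), and the extra literals the paper adds depend only on $\beta$, hence are the same for $\varphi_1$ and $\varphi_2$; so either choice yields the required common $\beta'$. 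Your treatment of~(d)/(e) is in fact more careful than the paper's, which leaves that step implicit.
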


We are now ready to present the proof of Thm.~\ref{thm:main}:
\begin{proof}[Theorem~\ref{thm:main}]
  In the proof, we use $x = q(\pv{x})$, and $y =
  q(\pv{y})$.
  For part (1), we only show the proof for $s = \textbf{assume}(\pv{x} \bowtie
  \pv{y})$ since the other cases are trivial.

  The only-if direction follows since $\alpha_q(pc)$ is weaker than $pc$.
  For the if direction, $pc \not\vdash \bot$ since it is part of a reachable configuration. Then, there are two cases:
\begin{itemize}
  \item case $s = \textbf{assume}(\pv{x}=\pv{y})$.
  Assume $(pc \land x \eq y) \models \bot$. Then, $(pc \land x \eq y) \vdash t_1 \eq t_2$ and $pc \vdash t_1 \deq t_2$ for some $t_1, t_2 \in \terms(pc)$. By Lemma~\ref{lm:sprtrm}, in
    any new equality $(t_1 \eq t_2)$ that is implied by $pc \land (x \eq y)$
    (but not by $pc$), $t_1$ and $t_2$ are equivalent (in $pc$) to superterms of
    $x$ or $y$. By the early assume property of CUP, $\consts(q)$ purifies $\{x,
    y\}$ in $pc$. Therefore, every superterm of $x$ or $y$ is equivalent (in
    $pc$) to some constant in $\consts(q)$. Thus,  $(pc \land x \eq y) \vdash u \eq v$ and $(pc \land x \eq y)  \vdash u \deq v$ for some $u, v \in \consts(q)$.
    By Lemma~\ref{lm:Visenough}, $(\alpha_q(pc) \land x \eq y) \vdash u \eq v$.
    By Lemma~\ref{lm:Visenoughdeq}, $(\alpha_q(pc) \land x \eq y) \vdash u \deq v$. Thus, $(\alpha_q(pc)\land x \eq y) \models \bot$.
  \item case $s = \textbf{assume}(\pv{x}\neq\pv{y})$. $(pc \land x \deq y)
    \models \bot$ if and only if $pc \vdash x \eq y$. Since $x, y \in \consts(q)$,
    $\alpha_q(pc) \vdash x \eq y$.
\end{itemize}

For part (2), we only show the cases for assume and assignment statements, the
other cases are trivial.
\begin{itemize}
\item case $s = \textbf{assume}(\pv{x} = \pv{y})$, Since $q' = q$, we need to
  show that $\alpha_{q}(pc \land x \eq y) = \alpha_{q}(\alpha_q(pc)
  \land x \eq y)$. From the early assumes property, $\const(q)$
  purifies $\{x, y\}$ in $pc$.
  By Lemma~\ref{lm:pur}, $\consts(q)$ purifies $\{x, y\}$ in $\alpha_q(pc)$ as well.
  By Lemma~\ref{lm:idem}, $\alpha_q(pc) = \alpha_q(\alpha_q(pc))$.
  By Lemma~\ref{lm:eqpres1v}, $\alpha_q(pc \land x \eq y) = \alpha_q(\alpha_q(pc) \land x \eq y)$.
\item case $s = \textbf{assume}(\pv{x} \neq \pv{y})$, Since $q' = q$, we need to
  show that $\alpha_{q}(pc \land x \deq y) = \alpha_{q}(\alpha_q(pc)
  \land x \deq y)$. By Lemma~\ref{lm:idem}, $\alpha_q(pc) = \alpha_q(\alpha_q(pc))$. By Lemma~\ref{lm:deqpres1v}, $\alpha_{q}(pc \land x \deq y) = \alpha_{q}(\alpha_q(pc)
  \land x \deq y)$.

\item case $s = \pv{x} \gets \pv{y}$. W.l.o.g., assume $q' = q[\pv{x}\mapsto
  x']$, for some constant $x' \not\in \const(pc)$.
  By Lemma~\ref{lm:idem}, $\alpha_q(pc) = \alpha_q(\alpha_q(pc))$.
  By Lemma~\ref{lm:propequiv} (case~1), $\alpha_{\consts(q)
    \cup \{x'\}}(pc \land x' \eq y) = \alpha_{\consts(q)
    \cup \{x'\}}(\alpha_q(pc) \land x' \eq y)$. By Lemma~\ref{lm:subset}, $\alpha_{q'}(pc \land x' \eq y ) = \alpha_{q'}(\alpha_q(pc) \land x' \eq y)$,
  since $\consts(q') \subseteq (\consts(q) \cup \{x'\})$.
\item case $s = \pv{x} \gets f(\Vec{y})$. W.l.o.g., $q' = q[\pv{x}\mapsto
  x']$ for some constant $x' \not\in \const(pc)$. There are two cases: (a) there
  is a term $t \in \terms(pc)$ s.t. $pc \vdash t \eq f(\Vec{y})$, (b) there is no
  such term $t$.
  \begin{enumerate}[label=(\alph*)]
  \item By the memoizing property of CUP, there is a program variable $\pv{z}$
    s.t. $q(\pv{z}) = z$ and $pc \vdash z \eq f(\Vec{y})$. Therefore, by
    definition of $\alpha_q$, $\alpha_q(pc) \vdash z \eq f(\Vec{y})$. The rest
    of the proof is identical to the case of $s = \pv{x} \gets \pv{z}$.
  \item Since there is no term $t\in\terms(pc)$ s.t. $pc \vdash t \eq f(\Vec{y})$, there is also no such term in $\terms(\alpha_q(pc))$ as well. By Lemma~\ref{lm:idem}, $\alpha_q(pc) = \alpha_q(\alpha_q(pc))$.
    By Lemma~\ref{lm:propequiv} (case~2), $\alpha_{\const(q)\cup \{x'\}}(pc \land x \eq f(\Vec{y})) = \alpha_{\const(q)\cup \{x'\}}(\alpha_q(pc) \land x \eq f(\Vec{y}))$.
    By Lemma~\ref{lm:subset}, $\alpha_{q'}(pc \land x \eq f(\Vec{y}))
    = \alpha_{q'}(\alpha_q(pc) \land x \eq f(\Vec{y}))$ since $\consts(q') \subseteq (\consts(q) \cup \{x'\})$.\hfill\ProofSymbol
    \end{enumerate}
    \end{itemize}
  \end{proof}

\begin{corollary}
\label{cor:main}
  For a CUP $P$, the relation $\rho \eqdef \{(c, \exba(c)) \mid c \in
  \Reach(\TS_P)\}$ is a bisimulation from $\TS_P$ to $\exba(\TS_P)$.
\end{corollary}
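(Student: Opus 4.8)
The goal is to verify that $\rho$ satisfies the two conditions of a bisimulation, together with $(\initconf,\exba(\initconf))\in\rho$. The latter is immediate: $\initconf$ is reachable, and its path condition $\top$ is an (empty) conjunction of literals, so $\exba$ is defined on it and $\exba(\initconf)=\initconf^\exba$; moreover a routine induction on the semantic rules of \cref{fig:ssesem} shows that every reachable configuration has a path condition that is a conjunction of literals, hence $\exba$ is defined on all of $\Reach(\TS_P)$. The forward condition --- that $\rho$ is a simulation from $\TS_P$ to $\exba(\TS_P)$ --- is purely definitional: if $c\in\Reach(\TS_P)$ and $c\to c'$, then $c'\in\Reach(\TS_P)$ and, by \cref{def:abstract-TS-general}, $(\exba(c),\exba(c'))\in\Tr^{\exba}$, so $\exba(c)\toGenAbs\exba(c')$ with $(c',\exba(c'))\in\rho$; this uses nothing about coherence.

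All the work is in the backward condition --- that $\rho^{-1}$ is a simulation from $\exba(\TS_P)$ to $\TS_P$. Fix $(\exba(c),c)\in\rho^{-1}$ with $c=\langle s,q,pc\rangle\in\Reach(\TS_P)$, and suppose $\exba(c)\toGenAbs c_\sharp'$. Unfolding \cref{def:abstract-TS-general}, there exist configurations $d\to d'$ with $\exba(d)=\exba(c)$ and $\exba(d')=c_\sharp'$. Since $\exba$ preserves the statement and the state, $d=\langle s,q,pc_d\rangle$ with $\alpha_q(pc_d)=\alpha_q(pc)$, and $d'=\langle s',q',pc_d\wedge pc'\rangle$, where $d'$ --- hence $s'$, $q'$, and $pc'$ --- is obtained by firing some rule applicable to statement $s$ in state $q$ (with the caveat that for $\textbf{assume}$ the rule additionally requires satisfiability of the path condition conjoined with the guard, and for an assignment $pc'=x'\eq v$, $q'=q[\pvx\mapsto x']$ with $x'$ fresh, which we may take fresh for $pc$ as well). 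The plan is then: (a) show $c$ can fire the same rule, so that $c\to c'=\langle s',q',pc\wedge pc'\rangle$ with $c'\in\Reach(\TS_P)$; and (b) show $\exba(c')=c_\sharp'$, i.e.\ $\alpha_{q'}(pc\wedge pc')=\alpha_{q'}(pc_d\wedge pc')$. For (a) the only non-trivial case is $\textbf{assume}$: by \cref{thm:main}(1) applied to the \emph{reachable} $c$, satisfiability of $pc\wedge\mathit{guard}$ is equivalent to satisfiability of $\alpha_q(pc)\wedge\mathit{guard}=\alpha_q(pc_d)\wedge\mathit{guard}$, and the latter cannot be unsatisfiable, since $pc_d\models\alpha_q(pc_d)$ and $d\to d'$ witnesses that $pc_d\wedge\mathit{guard}$ is satisfiable. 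For (b) I would use \cref{thm:main}(2) applied to $c$ to get $\alpha_{q'}(pc\wedge pc')=\alpha_{q'}(\alpha_q(pc)\wedge pc')=\alpha_{q'}(\alpha_q(pc_d)\wedge pc')$, and then the purely EUF preservation lemmas --- Lemma~\ref{lm:idem} (idempotence, giving $\alpha_q(\alpha_q(pc_d))=\alpha_q(pc_d)$) together with Lemma~\ref{lm:deqpres1v}, Lemma~\ref{lm:eqpres1v}, or Lemma~\ref{lm:propequiv} according to the shape of $pc'$ --- to obtain $\alpha_{q'}(\alpha_q(pc_d)\wedge pc')=\alpha_{q'}(pc_d\wedge pc')$; the purifier / ``not-yet-computed-term'' hypotheses these lemmas need are re-derived from $\alpha_q(pc_d)=\alpha_q(pc)$ and the coherence of $c$ (via Lemma~\ref{lm:pur}), not from any property of $d$. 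Once (a) and (b) hold, $(c',c_\sharp')\in\rho$, as required.

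The step I expect to be the main obstacle is exactly the one hidden in (b): when we unfold $\exba(c)\toGenAbs c_\sharp'$ we obtain a witness $d$ that is only known to be an $\exba$-preimage of $\exba(c)$, not a reachable --- hence not necessarily a coherent --- configuration, so \cref{thm:main}, which is stated for reachable configurations, cannot be applied to $d$ directly. What must be shown is that $c_\sharp'$ is independent of the representative: the literals of $pc_d$ that $\alpha_q$ discards are precisely the ``deep'' ones with no representative in $\consts(q)$, and I expect they are discarded again in $\exba(d')$, so $d$ can be replaced by $\exba(c)$ --- equivalently, by $c$ itself via \cref{thm:main}(1) --- without affecting $c_\sharp'$. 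Idempotence (Lemma~\ref{lm:idem}) and the preservation lemmas are tailored for precisely this; once representative-independence is in place, the two parts of \cref{thm:main} and the routine fresh-constant bookkeeping for assignments close the argument. Combined with \cref{thm:cover-bisimilar,thm:renaming-bisimilar} and the fact that bisimulations compose, this yields a transition system over $\Sigma_0$ bisimilar to $\TS_P$, which is what \cref{sec:char} builds on.
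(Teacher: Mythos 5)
Your overall route is the paper's: \cref{cor:main} is meant to fall out of \cref{thm:main}, with the forward simulation purely definitional and the backward simulation supplied by parts (1) and (2); your step (a) (enabledness of the matching concrete transition from $c$, via \cref{thm:main}(1) and the fact that $pc_d \models \alpha_q(pc_d)$) is fine. The genuine gap is exactly the step you yourself flag as the ``main obstacle'' and then claim to discharge: the representative-independence equality $\alpha_{q'}(\alpha_q(pc_d)\land pc') = \alpha_{q'}(pc_d\land pc')$ for an \emph{arbitrary} preimage $d$ of $\exba(c)$. The preservation lemmas you invoke (Lemma~\ref{lm:eqpres1v}, Lemma~\ref{lm:propequiv}) require their purifier / ``no previously computed term'' hypotheses for \emph{both} formulas, in particular for $pc_d$ itself; Lemma~\ref{lm:pur} only transfers the purifier property from $\Gamma$ to $\alpha_V(\Gamma)$, not the other way around, so coherence of $c$ tells you nothing about $pc_d$. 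And the equality can genuinely fail for a non-coherent preimage: take a state $q$ with $q(\pvx)=x$, $q(\pvy)=y$, $q(\pv{u})=u$, $q(\pv{v})=v$, and let $pc=\top$, $pc_d = (u \eq g(f(x)) \land v \eq g(f(y)))$, $pc' = (x \eq y)$. Then $\alpha_q(pc_d)=\alpha_q(pc)=\top$ (the inner terms $f(x),f(y)$ have no representative in $\consts(q)$, so both literals fall into $\delta$), yet $\alpha_q(pc_d \land x \eq y)$ contains the $V$-equality $u \eq v$ while $\alpha_q(\alpha_q(pc_d)\land x \eq y)$ does not; so $\exba(d')$ is not matched by the unique successor of $c$, and no EUF lemma from the appendix will rescue this.

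The way to close the argument---and what the paper implicitly does, consistently with $\rho$ being defined only on $\Reach(\TS_P)$ and with \cref{thm:main} being stated for reachable configurations---is to take the witness $d \to d'$ of the abstract transition to be \emph{reachable}, i.e., to read $\exba(\TS_P)$ as the image of the reachable part of $\TS_P$. Then \cref{thm:main} applies to $d$ as well as to $c$: part (2) on $d$ gives $c_\sharp' = \exba(d') = \langle s', q', \alpha_{q'}(\alpha_q(pc_d)\land pc')\rangle$, which by $\alpha_q(pc_d)=\alpha_q(pc)$ and part (2) on $c$ equals $\exba(c')$ (modulo the bookkeeping of the fresh constant for assignments), while part (1) on $c$ gives $c \to c'$. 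In short, your decomposition is the right one, but the unreachable-witness case cannot be repaired by re-deriving the lemmas' hypotheses from the coherence of $c$; it has to be excluded by the reading of the abstract transition relation, not proved away.
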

Note that for an arbitrary UP,  $\alpha_{b}$ induces a simulation (since $\alpha_b$ only weakens path conditions).

By construction, for any configuration in an abstract system constructed using $\alpha_{b}$, the path condition will be at most depth-1. In \Cref{sec:char}, we use this
property to build a logical characterization of CUP and show that
reachability of CUP programs is decidable.


\section{Logical Characterization of CUP}
\label{sec:char}
In this section, we show that for any CUP program $P$, all reachable configurations of $P$ can be characterized using formulas in EUF, whose size is bounded by the number of program variables in $P$. 

\begin{theorem}[Logical Characterization of CUP]\label{th:lccup}
  For any CUP $P$, there exists an inductive assertion map $\inv$, ranging over EUF formulas of depth at most 1, that characterizes the reachable configurations of $P$.
\end{theorem}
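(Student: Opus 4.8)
The plan is to extend the proof of \cref{th:lcup} by inserting the base abstraction $\exba$ in front of the renaming and cover abstractions, following the chain of \cref{fig:cd}. First I would show that the composite abstraction $\alpha_\cover \circ \alpha_\rn \circ \exba$ induces a bisimulation from $\TS_P$ onto an abstract transition system: by \cref{cor:main}, $\exba$ induces a bisimulation from $\TS_P$ to $\exba(\TS_P)$; the system $\exba(\TS_P)$ is attentive (by \cref{def:vabst} the value of $\alpha_V$ is determined by the $V$-consequences of its argument), so \cref{thm:renaming-bisimilar} and then \cref{thm:cover-bisimilar} apply to it; and since a composition of bisimulations is a bisimulation, $\TS^\star \eqdef \alpha_\cover(\alpha_\rn(\exba(\TS_P)))$ is bisimilar to $\TS_P$.

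Second, I would nail down the shape of the reachable configurations of $\TS^\star$. By \cref{cor:main}, every path condition of $\exba(\TS_P)$ has the form $\beta = \beta_\eq \cup \beta_\deq \cup \beta_\funs = \alpha_q(pc)$, a conjunction of literals over $\consts(q) \cup W$ whose only function applications are the depth-one literals $v \eq f(\Vec{w})$ of $\beta_\funs$; in particular $\depth(\beta) \le 1$. Applying $\alpha_\rn$ renames $\consts(q)$ to $\consts_0$ without changing this structure, and $\alpha_\cover$ then removes every constant not in $\consts_0$ -- in particular all the fresh constants $W$. Hence every reachable path condition of $\TS^\star$ is an EUF formula over $\Sigma_0$, using only the finitely many function symbols of $P$. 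Since there are finitely many such formulas up to equivalence, finitely many locations in $\loc(P)$, and the state component is always $\qinit$, $\TS^\star$ has finitely many reachable configurations up to equivalence; i.e., the composite is a finite bisimulation of $\TS_P$.

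The step I expect to need the most care -- and the genuinely new ingredient relative to \cref{th:lcup} -- is checking that the cover step does not push the term depth above $1$. In general, eliminating constants can deepen terms (the cover of $\{a \eq f(w),\, w \eq g(b)\}$ with respect to $\{w\}$ is $a \eq f(g(b))$). This does not happen here because of the shape of $\beta$: by \cref{lm:pur}, $\consts(q)$ purifies $\beta$, so each constant $w \in W$ eliminated by $\alpha_\cover$ occurs in $\beta$ (and in $\alpha_\rn(\beta)$) only at argument positions of depth-one literals, never as the left-hand side of an equation and never equated to another term; hence it cannot chain into a deeper term, and the strongest quantifier-free consequence of $\beta$ over $\consts_0$ is again a Boolean combination of literals over depth-$\le 1$ terms. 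I would isolate this as a small lemma stating that $\alpha_\cover(\beta)$ has depth at most $1$ whenever $\consts(q)$ is a purifier of $\beta$.

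Finally, with a finite bisimulation of $\TS_P$ whose reachable path conditions are depth-$\le 1$ EUF formulas over $\Sigma_0$, I would conclude exactly as in \cref{th:lcup}: set $\inv(s) \eqdef \bigvee\{pc \mid \langle s, \qinit, pc\rangle \in \Reach(\TS^\star)\}$. The bisimulation makes $\inv$ an inductive assertion map that characterizes $\Reach(\TS_P)$, and since each disjunct has depth at most $1$ and a finite Boolean combination preserves this bound, $\inv$ ranges over EUF formulas of depth at most $1$, of size bounded in terms of the number of program variables of $P$.
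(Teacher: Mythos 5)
Your proposal is correct and follows essentially the same route as the paper: compose the base, renaming, and cover abstractions (bisimilar by \cref{cor:main}, \cref{thm:renaming-bisimilar}, \cref{thm:cover-bisimilar}), observe that the fresh constants $W$ occur only in argument positions of depth-one literals so that the cover stays at depth at most~1 over $\Sigma_0$, conclude finiteness, and define $\inv(s)$ as the disjunction of reachable abstract path conditions. The only cosmetic difference is that the paper first treats the unary (1-CUP) case separately -- where $\beta_{\funs_W}=\emptyset$ and no cover step is needed, yielding the extra PSPACE observation -- whereas you handle the general case directly, which subsumes it.
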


The first step in the proof is to compose the renaming abstraction~(\cref{def:rename-abstraction}) with the base abstraction~(\cref{def:alpha-abstraction}). We denote the composition with $\alpha_{b,\rn}$, i.e., $\alpha_{b,\rn} \eqdef \alpha_b \circ \alpha_\rn$. \Cref{cor:main} and \Cref{thm:renaming-bisimilar} ensures that $\alpha_{b, \rn}$ is sound and
complete for CUP. We split the rest of the proof into two cases:
CUPs restricted to unary functions, called 1-CUP, followed by arbitrary CUPs.

\begin{proof}[\Cref{th:lccup}, 1-CUP]
  Let $\Sigma^1$ be a signature containing function symbols of arity atmost $1$,
  $\Sigma^1 \eqdef (\consts, \funs^1, \{\eq, \deq\})$. Let $\Gamma$ be a set of
  literals in $\Sigma^1$ and $V$ be a set of constants. By the definition of
  $V$-base abstraction~(\Cref{def:vabst}), $\alpha_V(\Gamma) = \beta_\eq \land
  \beta_\deq \land \beta_\funs$. $\beta_\eq$ and $\beta_\deq$ are over constants
  in $V$. $\beta_\funs$ contains two types of literals: $\beta_{\funs_V}$ and
  $\beta_{\funs_W}$. $\beta_{\funs_V}$ are 1 depth literals over constants in
  $V$. $\beta_{\funs_W}$ are literals of the form $v \eq f(\Vec{w})$ where $v
  \in V$ and $\Vec{w}$ is a list of constants, at least one of which is in $V$:
  $\Vec{w}\cap V \neq \emptyset$ and $\Vec{w}\not\subseteq V$. Since $\Gamma$
  can only have unary functions, $\beta_{\funs_W} = \emptyset$. Therefore, all
  literals in $\alpha_V(\Gamma)$ are of depth at most 1 and only contain
  constants from $V$. Hence, there are only finitely many configurations in $\alpha_{b,
    \rn}(\TS_P)$.
  Therefore,
\[
  \inv(s) \eqdef \bigvee \{ pc \mid \langle s, q_0, pc \rangle \in
  \Reach(\alpha_{b,\rn}(\TS_P))\}
\]
is an inductive assertion map, ranging over formulas for depth at most 1, that characterizes the reachable configurations of
$P$. Moreover, the size
of each disjunct in $\inv(s)$ is polynomial in the number of program variables
and functions in $P$.
\end{proof}
An interesting consequence of the above proof is that, for
1-CUPs, $\alpha_b$ is efficiently computable (since, $\beta_{\funs_W} = \emptyset$). Thus, the transition system $\alpha_{b,\rn}(\mathcal{S}_P)$ is finite, and can be constructed on-the-fly. Hence, reachability of $1$-CUP is in PSPACE.


\begin{proof}[\Cref{th:lccup}, general case]
  In general, CUP programs can contain unary and non-unary functions. Therefore, the $V$-base abstraction~(\Cref{def:vabst}) may introduce fresh constants. We use the cover
  abstraction~(\Cref{def:cover-abs}) to eliminate these fresh constants. By
  \Cref{thm:cover-bisimilar}, $\alpha_\cover(\alpha_{b, \rn}(\TS_P))$ is
  bisimilar to $\alpha_{b, \rn}(\TS_P)$.
  Notice that all the fresh constants introduced by the $V$-base abstraction are arguments to function
  applications. Therefore, all consequences of eliminating the fresh constants
  are Horn clauses of the form $\bigwedge_i (x_i \eq y_i)\limp x \eq y$, where $x_i, y_i, x, y
  \in \const_0$. Since $V$-basis is of depth at most 1, cover of the $V$-basis 
  is also of depth at most 1. Since there are only finitely many formulas of depth at most~1 over
  $\consts_0$, $\alpha_\cover(\alpha_{b,
    \rn}(\TS_P))$ has only finitely many configurations. Hence,
\[
  \inv(s) \eqdef \bigvee \{ pc \mid \langle s, q_0, pc \rangle \in
  \Reach(\alpha_\cover(\alpha_{b,\rn}(\TS_P))\}
\]
is an inductive assertion map that characterizes the reachable configurations of
$P$ and ranges over depth-1 formulas.
\end{proof}
Consider the CUP shown in \Cref{fig:cup}. At line~9, the $\alpha_{b,\rn}$
abstraction produces the following abstract $pc$: $x_0 \eq f(a_0, w) \land y_0
\eq f(b_0, w) \land c_0 \eq d_0$. Using cover to eliminate the constant $w$
gives us $\cover w \cdot pc = (a_0 \eq b_0 \limp x_0 \eq y_0) \land c_0 \eq
d_0$, which is exactly the invariant assertion mapping $\inv(9)$ at line~9.

We have seen that all CUP programs have an inductive assertion map that  characterizes their reachable configurations and ranges over a finite set of formulas. Therefore,
\begin{corollary}\label{cor:cupdec}
  CUP reachability is decidable.
\end{corollary}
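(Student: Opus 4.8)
The statement to prove is Corollary~\ref{cor:cupdec}: CUP reachability is decidable. This follows almost immediately from \Cref{th:lccup} (Logical Characterization of CUP) together with the structure of inductive assertion maps established earlier. Let me sketch the plan.

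The plan is to reduce CUP reachability to the existence of a safe inductive assertion map drawn from a finite, effectively enumerable set of candidates, and then observe that checking any individual candidate is decidable because it reduces to finitely many EUF entailment queries (and EUF is decidable).

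First, I would recall from \Cref{th:lccup} that for any CUP $P$, the abstract transition system $\alpha_\cover(\alpha_{b,\rn}(\TS_P))$ (or $\alpha_{b,\rn}(\TS_P)$ in the 1-CUP case) is finite: all its reachable configurations have path conditions that are depth-1 EUF formulas over the fixed finite set of constants $\consts_0$, and there are only finitely many such formulas up to equivalence. Since $\alpha_{b,\rn}$ (resp. its composition with $\alpha_\cover$) induces a bisimulation on CUPs (by \Cref{cor:main} and \Cref{thm:renaming-bisimilar}, resp. also \Cref{thm:cover-bisimilar}), a configuration $\langle \textbf{skip}, q, pc\rangle$ is reachable in $\TS_P$ iff a corresponding configuration is reachable in the finite abstract system. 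Thus, in principle, one could simply compute the finite abstract transition system and check reachability of a $\textbf{skip}$-configuration by graph search.

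The main obstacle — and the reason the proof is not completely trivial — is computability of the abstractions, in particular $\alpha_b$. The $V$-base abstraction is only required to be well-defined, not a priori computable. For 1-CUP the excerpt already notes $\alpha_b$ is efficiently computable (since $\beta_{\funs_W} = \emptyset$), giving a PSPACE bound. For the general case, I would argue decidability without needing to compute $\alpha_b$ directly: instead, enumerate all candidate inductive assertion maps $\inv : \loc(P) \to \mathrm{EUF}(\Sigma_0)$ ranging over depth-1 formulas over $\consts_0$ (there are finitely many such maps, since $\loc(P)$ is finite and the formula space is finite up to equivalence), and for each candidate check the two conditions of \Cref{def:inductive} — that $\inv(P) = \top$ and that $\inv$ is preserved by every transition rule — each of which is a finite conjunction of EUF entailment queries and hence decidable. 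A location $s$ is unreachable iff some such $\inv$ has $\inv(s) = \bot$; \Cref{th:lccup} guarantees that the strongest such invariant (the one characterizing reachable configurations) is in the enumerated set, so this procedure is complete. For reachability per \Cref{def:reach}, one checks whether there exists a safe inductive assertion map assigning $\bot$ to all $\textbf{skip}$-locations; if yes, unreachable; if no, reachable.

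To summarize the steps in order: (1) invoke \Cref{th:lccup} to get that the reachable configurations of any CUP are characterized by an inductive assertion map over the finite set of depth-1 EUF formulas on $\consts_0$; (2) observe that this finite candidate space can be effectively enumerated; (3) observe that checking whether a given candidate map is a (safe) inductive assertion map reduces to finitely many EUF validity checks, which are decidable; (4) conclude that one can decide whether a safe inductive assertion map exists, and hence decide reachability. The only genuinely delicate point is ensuring the enumeration is effective and complete, which is exactly what \Cref{th:lccup} supplies; everything else is bookkeeping over the decidability of EUF.
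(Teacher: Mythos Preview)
Your proposal is correct and matches the paper's approach: the paper derives \Cref{cor:cupdec} directly from \Cref{th:lccup} by the same enumeration argument you give (explicitly spelled out in the introduction), namely that since the invariant ranges over a finite set of depth-1 EUF formulas over $\consts_0$, one can enumerate all candidate inductive assertion maps and check each by finitely many decidable EUF entailments. Your observation that computability of $\exba$ is an obstacle to the ``compute the finite abstraction and search'' route is apt; the paper postpones that issue to a separate subsection (on computing a finite abstraction via a normalization function), and, like you, does not rely on it for the decidability corollary itself.
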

\subsection{Relationship to \cite{DBLP:journals/pacmpl/MathurMV19}} In~\cite{DBLP:journals/pacmpl/MathurMV19}, \Cref{cor:cupdec} is proven by constructing a deterministic finite automaton that accepts all \emph{feasible} coherent executions.\footnote{
In our setting, feasible coherent executions correspond to paths in the transition system of any CUP.} 
However, the construction fails for the executions of the CUP in \Cref{fig:cup}: the  execution that reaches a terminal configuration is infeasible, but it is (wrongfully) accepted by the automaton. Intuitively, the reason is that the automaton is deterministic and its states are not sufficiently expressive. The states of the automaton keep track of equalities between program variables (which correspond to $\beta_\eq$ in our abstraction), disequalities between them ($\beta_\deq$ in our case), and partial function interpretations ($\beta_\funs$). However, the partial function interpretations are restricted to $\beta_{\funs_V}$, i.e., do not allow auxiliary constants that are not assigned to program variables. Thus, they are unable to keep track of $x_0 \eq f(a_0, w) \land y_0
\eq f(b_0, w) \land c_0 \eq d_0$ in line 9, which is essential for showing infeasibility of the execution. Eliminating the auxiliary constants, as we do in the cover abstraction, does not remedy the situation since it introduces a disjunction $(a_0 \deq b_0 \wedge c_0 \eq d_0) \lor (x_0 \eq y_0 \wedge c_0 \eq d_0)$, which 
the deterministic automaton does not capture. 

\subsection{Computing a Finite Abstraction}
\newcommand{\normal}{n}
\newcommand{\normalW}{nw}
We have shown that CUP programs are bisimilar to finite state systems. However, all our proofs depend on $\exba$, which was not assumed to be computable. In this section, we show how to implement $\exba$, and, thereby, show how to compute a finite state system that is bisimilar to a CUP program. Note that our prior results are independent of this section.

The main difficulty is in naming the fresh constants, which we always refer to as $W$, that are introduced by the base abstraction. Since we require that base abstraction is canonical, the naming has to be unique. Furthermore, we have to show that the number of such $W$ constants is bounded. We solve both of these problems by proposing a deterministic naming scheme. The scheme is determined by a normalization function $\normal_V$  that replaces all the fresh constants in a $V$-basis with canonical constants.


Let $\beta$ be a $V$-basis. We denote the auxiliary constants in $\beta$ ($\consts(\beta) \setminus V$) by $W = \{w_0, w_1, \ldots\}$, and by `$\hole$' some unused constant that we call a \emph{hole}. Recall that constants from $W$ may only appear in literals of the form $v \eq f(\Vec{w})$.
%
We define the set of $W$-templates as the set of all terms $f(\Vec{a})$, where each element in $\Vec{a}$ is either a hole or a constant in $W$. A term $t$ \emph{matches} a template $f(\Vec{a})$ if $t = f(\Vec{b})$, and $\Vec{a}$ and $\Vec{b}$ agree on all constants in $W$. For example, let $\xi$ be the template $f(\hole, w_1, \hole, w_2)$. The term $f(a, w_1, b, w_2)$ matches $\xi$, but $f(w_0, w_1, b, w_2)$ does not, because one of the holes is filled with $w_0 \in W$. We say that a literal $v \eq f(\Vec{b})$ matches a template $\xi$ if $f(\Vec{b})$ matches $\xi$. The $W$-context of a $W$-template $\xi$ in a set of literals $L$, denoted $\context_L(\xi)$, is the set $\context_L(\xi) \eqdef \{\ell[W\mapsto \hole] \mid \ell\in L \land \ell \text{ matches } \xi\}$, 
where $\ell[W\mapsto \hole]$ means that all occurrences of constants in $W$ are replaced with a hole. For example, let $\xi = f(\hole, w_1, w_2, \hole)$ and $L = \{v \eq f(a, w_1, w_2, b), u \eq f(c, w_1, w_2, a), w \eq f(x, w_1, w_2, b), x \eq g(x, w_1, w_2, b))\}$ then $\context_L(\xi) =\{v \eq f(a, \hole, \hole, b), u \eq f(c, \hole, \hole, a), w \eq f(x, \hole, \hole, b)\}$.

Since $V$ and $\funs$ are finite, the number of $W$-contexts is finite, independent of $W$. Let $w_Z$ be a fresh constant for context $Z$. 

\begin{definition}[Normalization Function]
\label{def:normfunc}
The normalization function $\normal_{V}(\beta)$ is defined as follows:
\begin{enumerate}[(1)]
    \item for each $t \in \terms(\Gamma)$ s.t. $\const(t)\cap W \neq \emptyset$, create a template $\xi$ by dropping all constants not in $W$. Let $\Xi$ denote the set of templates so obtained.
    \item Let $\mathit{Ctx} \eqdef \{\context_{\Gamma}(\xi) \mid \xi \in \Xi\}$.
    \item For each $\ell \in \Gamma$, if $\ell[W\mapsto \hole] \in Z$ for some $Z \in \mathit{Ctx}$, then replace all occurrences of $W$ in $\ell$ with $w_Z$.
\end{enumerate}
\end{definition}

The normalization preserves $V$-equivalence of $\beta$ because it renames local constants, while maintaining all consequences that are derivable through them. That is, $\normal_V(\beta) \equiv_V \beta$. Furthermore, $\normal_V(\beta)$ is cannonical.

Therefore, given a set of literals $\Gamma$, we use $\normal_V(\beta)$ as a computable implementation of the $V$-base abstraction, $\alpha_V$~(\cref{def:vabst}). That is, $\alpha_V(\Gamma) \eqdef \normal_V(\beta)$ where $\langle W, \beta, \delta\rangle \in \base(\Gamma, V)$. Even though $\normal_V(\beta)$ may not be a part of a $V$-basis for $\Gamma$, 
it satisfies all the properties used in the proof of \Cref{thm:main}.

%

We define the normalizing abstraction in the usual way:
\begin{definition}[Normalizing abstraction]
The normalizing abstraction function $\alpha_\normal: \Conf \to \Conf$ is defined by
\[
\alpha_\normal(\langle s, \qinit, pc\rangle) \eqdef \langle s,\qinit,\normal(pc)\rangle
\]
\end{definition}

Let $\alpha_{b,\rn, \normal} \eqdef \alpha_{b} \circ \alpha_\rn \circ \alpha_\normal$ be the composition of normalization abstraction with renaming and base abstraction where $\alpha_b$ is implemented using normalization. Notice that, for any state $c = \langle s, q, pc\rangle$, $\alpha_{b, \rn, \normal}(c)$ is computed by first computing \emph{any} $V$-basis of $pc$, applying $\normal_q$, renaming all $\const(q)$ constants to $\qinit$, and applying $\normal_\qinit$. The second normalization is required to ensure that the fresh constants are canonical with respect to $\qinit$. By definition $\alpha_{b, \rn, n}$ is computable. Hence, it can be used to compute the finite abstraction of any CUP.

\begin{theorem}
\label{thm:ttt}
For a CUP $P$, the finite abstract transition system $\alpha_{b', \rn, \normal}(\TS_P)$ is bisimilar to $P$ and is computable.
\end{theorem}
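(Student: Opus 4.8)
The plan is to establish the three assertions of the statement — computability of $\alpha_{b',\rn,\normal}$, finiteness of the reachable part of $\alpha_{b',\rn,\normal}(\TS_P)$, and bisimilarity with $\TS_P$ — building on \Cref{thm:main}, \Cref{cor:main}, and the abstraction results of \Cref{sec:abs}. Computability is immediate from the constructive recipe preceding the theorem: on a configuration $\langle s, q, pc\rangle$ occurring along a run of $\TS_P$ the path condition $pc$ is a conjunction of literals, so a congruence-closure procedure produces \emph{some} triple $\langle W, \beta, \delta\rangle\in\base(pc,\consts(q))$; one then applies $\normal_{\consts(q)}$ of \Cref{def:normfunc} to $\beta$, renames every constant of $\consts(q)$ to its counterpart in $\consts_0$, and applies $\normal_{\qinit}$. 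Every step is a terminating syntactic transformation, hence the composition is a total computable map on the configurations that actually arise along runs.

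For bisimilarity, the only new ingredient relative to \Cref{sec:extabase} is that $\alpha_b$ is now realized by the deterministic implementation $\Gamma\mapsto\normal_V(\beta)$ rather than by an arbitrary representative of a $V$-basis. The key lemma to prove is that this implementation still satisfies every property of $\alpha_V$ that the proof of \Cref{thm:main} relies on: that it is canonical, i.e., inputs with the same $V$-consequences are mapped to one and the same formula (as demanded by the second clause of \Cref{def:vabst}), which holds because the $W$-context of a template, and hence the canonical constant $w_Z$ assigned to it, depends only on the $V$-consequences of $\Gamma$; that $\normal_V(\beta)\equiv_V\beta$, since normalization only renames the auxiliary constants while keeping all literals that witness consequences through them; and that it is idempotent, is a purifier for its $V$-constants, and behaves under adding equality/disequality literals over $V$, under adding a fresh assignment literal $x'\eq y$ or $x'\eq f(\ldots)$, and under shrinking $V$, exactly as in Lemmas~\ref{lm:sprtrm}, \ref{lm:idem}, \ref{lm:pur}, \ref{lm:eqpres1v}, \ref{lm:deqpres1v}, \ref{lm:subset}, and \ref{lm:propequiv}. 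Granting this, \Cref{thm:main} and the analogue of \Cref{cor:main} carry over to the normalization-based $\alpha_b$, so $\{(c,\alpha_b(c))\mid c\in\Reach(\TS_P)\}$ is a bisimulation. The remaining abstractions are handled as in \Cref{sec:abs}: $\alpha_\rn$ induces a bisimulation on an attentive system because it only renames $\consts(q)$-constants (\Cref{thm:renaming-bisimilar}), and $\alpha_\normal=\normal_{\qinit}$ does so because $\normal_{\qinit}(pc)\equiv_{\qinit}pc$ and normalization is determined up to $\qinit$-equivalence — the same argument as for the cover abstraction (\Cref{thm:cover-bisimilar}). After checking that $\alpha_b(\TS_P)$ and $\alpha_\rn(\alpha_b(\TS_P))$ remain attentive, composing the three bisimulations (a composition of bisimulations is a bisimulation) yields bisimilarity of $\alpha_{b',\rn,\normal}(\TS_P)$ with $\TS_P$.

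For finiteness, after $\alpha_b$ every reachable path condition is a conjunction of literals of depth at most $1$ over $\consts(q)\cup W$, and by the construction of $\normal$ the auxiliary set $W$ is drawn from the fixed set $\{w_Z\mid Z\text{ a }W\text{-context over }\consts(q)\}$; renaming $\consts(q)$ to $\consts_0$ followed by the second normalization $\normal_{\qinit}$ then makes the auxiliary constants canonical with respect to $\qinit$, so they range over the absolute finite set $\{w_Z\mid Z\text{ a }W\text{-context over }\consts_0\}$, which is finite since $\consts_0$ and $\funs$ are fixed. Hence the path conditions of $\alpha_{b',\rn,\normal}(\TS_P)$ range over finitely many canonical depth-$\le 1$ formulas over a fixed finite vocabulary, and together with the finiteness of $\loc(P)$ and of the possible state components this shows $\Reach(\alpha_{b',\rn,\normal}(\TS_P))$ is finite. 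I expect the delicate part to be the key lemma of the bisimilarity step — showing that the explicit, deterministically-named $\normal_V(\beta)$ inherits all the algebraic properties of the idealized $\alpha_V$ invoked in \Cref{thm:main} — together with the subtlety that the renaming step destroys canonicality of the $W$-names, which is exactly why the second normalization $\normal_{\qinit}$ is applied and why it too must be shown to preserve $\equiv_{\qinit}$ and to produce a canonical result; the rest is bookkeeping on top of results already in hand.
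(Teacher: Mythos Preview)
Your proposal is correct and follows essentially the same route as the paper, which in fact does not give a standalone proof of \Cref{thm:ttt} but only the sketch preceding it: $\normal_V(\beta)$ is computable, canonical, $V$-equivalent to $\beta$, and ``satisfies all the properties used in the proof of \Cref{thm:main}'', whence \Cref{cor:main} and the bisimulation results for $\alpha_\rn$ carry over, and finiteness comes from the bounded vocabulary of $W$-contexts. You are more explicit than the paper about the need to verify attentiveness of the intermediate systems and about why the second normalization $\normal_{\qinit}$ is required, which are exactly the points the paper glosses over.
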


\Cref{thm:ttt} implies that any property that is decidable over a finite transition system is also decidable over CUPs. In particular, temporal logic model checking is decidable.


\section{Conclusion}
\label{sec:conclusion}

In this paper, we study theoretical properties of Coherent Uninterpreted Programs (CUPs) that have been recently proposed by Mathur et al.~\cite{DBLP:journals/pacmpl/MathurMV19}. We identify a bug in the original paper, and provide an alternative proof of decidability of the reachability problem for CUP. More significantly, we provide a logical characterization of CUP. First, we show that inductive invariant of CUP is describable by shallow formulas. Hence, the set of all candidate invariants can be effectively enumerated. Second, we show that CUPs are bisimilar to finite transition systems. Thus, while they are formally infinite state, they are not any more expressive than a finite state system. Third, we propose an algorithm to compute a finite transition system of a CUP. This lifts all existing results on finite state model checking to CUPs. 

In the paper, we have focused on the core result of Mathur et al, and have left out several interesting extensions. In~\cite{DBLP:journals/pacmpl/MathurMV19}, the notion of CUP is extended with $k$-coherence -- a UP $P$ is $k$-coherent if it is possible to transform $P$ into a CUP $\hat{P}$ by adding $k$ \emph{ghost} variables to $P$. This is an interesting extension since it makes potentially many more programs amenable to decidable verification. We observe that addition of \emph{ghost} variables is a form of abstraction. Thus, invariants of $\hat{P}$ can be translated to invariants of $P$ using techniques of Namjoshi et al.~\cite{DBLP:conf/sas/NamjoshiZ13,DBLP:conf/vmcai/Namjoshi03}. This essentially amounts to existentially eliminating ghost variables from the invariant of $\hat{P}$. Such elimination increases the depth of terms in the invariant at most by one for each variable eliminated. Thus, we conjecture that $k$-coherent programs are characterized by invariants with terms of depth at most $k$.

Mathur et al.~\cite{DBLP:journals/pacmpl/MathurMV19} extend their results to recursive UP programs (i.e., UP programs with recursive procedures). We believe our logical characterization results extend to this setting as well. In this case, both the invariants and procedure summaries (i.e., procedure pre- and post-conditions) are described using terms of depth at most 1. 

Our results also hold when CUPs are extended with  simple axiom schemes, as in~\cite{DBLP:conf/tacas/MathurM020}, while for most non-trivial axiom schemes CUPs become undecidable.

Perhaps most interestingly, our results suggest efficient verification algorithms for CUPs and 
interesting abstraction for UPs. Since the space of invariant candidates is finite, it can be enumerated, for example, using implicit predicate abstraction. For CUPs, this is a complete verification method. For UPs it is an abstraction. Most importantly, it does not require prior knowledge to whether an UP is a CUP! 



\subsubsection*{Acknowledgment}
The research leading to these results has received funding from the
European Research Council under the European Union's Horizon 2020 research and
innovation programme (grant agreement No [759102-SVIS]).
This research was partially supported by the United States-Israel Binational Science Foundation (BSF) grant No. 2016260, and the Israeli Science Foundation (ISF) grant No. 1810/18. We also acknowledge the support of the Natural Sciences and Engineering Research Council of Canada (NSERC).

\bibliographystyle{IEEEtranS}
\bibliography{ref}
\clearpage
\newpage
\appendices
\section{Additional Background on EUF}
\label{sec:euf_extra}

\begin{figure}[t]
  \begin{mathpar}
    \prftree[r]{\textsc{Refl}}{\Gamma \vdash x \eq x}\qquad
    \prftree[r]{\textsc{Symm}}{\Gamma \vdash x \eq y}{\Gamma \vdash y \eq x}\\
    \prftree[r]{\textsc{Trans}}{\Gamma \vdash x \eq y}{\Gamma \vdash y \eq
      z}{\Gamma \vdash x \eq z}\\
    \prftree[r]{\textsc{Cong}}{\Gamma \vdash x_1 \eq y_1 \quad\cdots\quad \Gamma \vdash x_n
      \eq y_n}{\Gamma \vdash f(x_1, \ldots, x_n) \eq f(y_1, \ldots, y_n)}\\
    \prftree[r]{\textsc{EqNeq}}{\Gamma\vdash x \eq y}{x \deq y \in
      \Gamma}{\Gamma\vdash \bot}
    \quad
    \prftree[r]{\textsc{PMod}}{\Gamma \vdash \ell}{\Gamma \vdash x \eq
      y}{\Gamma \vdash \ell[x \mapsto y]}
  \end{mathpar}
  \caption{Proof system $\peuf$.}
\label{fig:peuf}
\end{figure}

In this section, we formalize some of the concepts about EUF that are well known
and have been excluded from the main content of the paper due to space
limitations.

The proof rules of the proof system $\peuf$ for EUF shown in \Cref{fig:peuf}.
These are the usual rules. The exception is \textsc{PMod} that is a form of
paramodulation. It is used to derive new literals by substituing equal for
equal. While not typically included in the proof rules for EUF, \textsc{PMod} is
used implicitly in the congruence graph algorithms, and in interpolation over
EUF.

A deductive (\euf) closure, $\Gamma^*$, of a set of literals $\Gamma$ is defined
as: $\Gamma^* \eqdef \{\ell \mid \Gamma \vdash \ell\}$. A set $\Gamma$ is
deductively closed if $\Gamma = \Gamma^*$.

For a satisfiable set $\Gamma$ of EUF literals, and $a, b \in \terms(\Gamma)$:
\begin{enumerate}[(1)]
\item $\Gamma \models a \eq b$ iff $a \eq b \in \Gamma^*$
\item $\Gamma \models a \deq b$ iff $\bot \in (\Gamma \cup \{a \eq b\})^*$
\end{enumerate}

Note that $\Gamma \models a \deq b$ does not imply $\Gamma \vdash a \deq b$,
since $\peuf$ has no \textsc{Hyp} and \textsc{Contra} proof rules.

Depth of a term is formally defined as follows:
\[\depth(t) = \begin{cases}
    0  & \text{if $t \in \consts$}\\
    1 + \max_i(\depth(t_i)) & \text{if $t = f(t_0, \ldots, t_k)$}
  \end{cases}
\]

\section{Proofs}
\label{sec:proofs}
Given a set of literals $\Gamma$ and a set of constants $V$, let $\bbase(\Gamma,
V) \eqdef \{\beta \mid \exists W, \delta \cdot \langle W, \beta, \delta\rangle
\in \base(\Gamma, V)\}$.
\begin{alemma}\label{lm:alpheq}\hg{new}
  Let $\varphi_1$ and $\varphi_2$ be two sets of literals and $V$ be a set of
  constants. Then, the following three statements are equivalent:
  \begin{enumerate}[(1)]
  \item $\alpha_V(\varphi_1) = \alpha_V(\varphi_2)$
  \item $\bbase(\varphi_1, V) \cap \bbase(\varphi_2, V) \neq \emptyset$
    \item $\bbase(\varphi_1, V) = \bbase(\varphi_2, V)$
    \end{enumerate}
\end{alemma}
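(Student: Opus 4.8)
The plan is to prove the cycle $(1)\Rightarrow(3)\Rightarrow(2)\Rightarrow(1)$. Two of the three links are immediate from the definition of $\alpha_V$ (Def.~\ref{def:vabst}); the only real content is the step $(1)\Rightarrow(3)$, namely that knowing $\varphi_1$ and $\varphi_2$ share \emph{some} common element of their $\bbase$-sets forces the two sets to coincide. Throughout I assume $\alpha_V$ is defined on $\varphi_1$ and $\varphi_2$, i.e.\ $\base(\varphi_i,V)\neq\emptyset$ (this holds for every satisfiable set, by the standard congruence-closure construction underlying the notion of a $V$-basis); when $\base(\varphi_i,V)=\emptyset$ there is nothing to prove.

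For $(2)\Rightarrow(1)$: unfolding the definition of $\bbase$, statement (2) says there is a single $\beta$ with $\langle W_1,\beta,\delta_1\rangle\in\base(\varphi_1,V)$ and $\langle W_2,\beta,\delta_2\rangle\in\base(\varphi_2,V)$ — which is exactly the hypothesis of clause~(2) of Def.~\ref{def:vabst}, so $\alpha_V(\varphi_1)=\alpha_V(\varphi_2)$. For $(3)\Rightarrow(2)$: $\bbase(\varphi_1,V)\neq\emptyset$ by the assumption above, and it equals $\bbase(\varphi_2,V)$ by (3), so their intersection is nonempty.

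The work is in $(1)\Rightarrow(3)$. By clause~(1) of Def.~\ref{def:vabst}, $\alpha_V(\varphi_i)$ is \emph{itself} a member of $\bbase(\varphi_i,V)$; writing $\beta^\star$ for the common value $\alpha_V(\varphi_1)=\alpha_V(\varphi_2)$, we get $\beta^\star\in\bbase(\varphi_1,V)\cap\bbase(\varphi_2,V)$. It therefore suffices to prove the structural claim: \emph{for any set of literals $\Gamma$ and any $\beta\in\bbase(\Gamma,V)$, the set $\bbase(\Gamma,V)$ equals $[\beta]_V$, where $[\beta]_V$ is the set of all literal sets obtained from $\beta$ by a bijective renaming of the auxiliary constants $\consts(\beta)\setminus V$ composed with a re-choice of representatives for the $V$-indexed equivalence classes, and $[\beta]_V$ depends only on $\beta$ and $V$, not on $\Gamma$.} Granting the claim, $\bbase(\varphi_1,V)=[\beta^\star]_V=\bbase(\varphi_2,V)$, which is (3).

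To prove the structural claim I would use the uniqueness statement recorded just after the definition of a $V$-basis — a $V$-basis of a satisfiable set is unique up to renaming of the constants in $W$ and ordering of the $V$-equalities — which gives $\bbase(\Gamma,V)\subseteq[\beta]_V$; the reverse inclusion $[\beta]_V\subseteq\bbase(\Gamma,V)$ reduces to checking that the defining conditions (a)--(e) of a $V$-basis are preserved under renaming of auxiliary constants and under changing $V$-representatives (the shape constraints are invariant, (b) is preserved because $\exists W\cdot\beta\wedge\delta$ is logically equivalent to its renaming, and (d),(e) transfer verbatim). The one delicate point, and the main obstacle, is that the freshness requirement $W\cap\consts(\Gamma)=\emptyset$ is $\Gamma$-dependent, so literally the fresh constants used in a basis of $\varphi_1$ need not be fresh for $\varphi_2$; I would resolve this exactly as the paper tacitly does in making $\alpha_V$ well-defined (and later computable, via the canonical constants $w_Z$): draw auxiliary constants from a fixed pool disjoint from the constants of interest — equivalently, read $\bbase(\cdot,V)$ modulo renaming of auxiliary constants — so that ``fresh for $\varphi_1$'' and ``fresh for $\varphi_2$'' coincide and $[\beta]_V$ is genuinely $\Gamma$-independent.
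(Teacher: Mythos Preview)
The paper states this lemma without proof (it appears in the appendix with no accompanying \texttt{proof} environment), so there is no argument to compare yours against. Your cycle $(2)\Rightarrow(1)$, $(3)\Rightarrow(2)$, $(1)\Rightarrow(3)$ is the natural one; the two easy links are exactly as you say, and for $(1)\Rightarrow(3)$ your strategy---observe that $\alpha_V(\varphi_i)\in\bbase(\varphi_i,V)$ by clause~(1) of Def.~\ref{def:vabst}, then invoke the stated uniqueness of $V$-bases up to renaming of $W$ and choice of $V$-representatives to conclude that $\bbase(\Gamma,V)$ is the orbit of any one of its elements---is correct.

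Your remark about freshness is not mere pedantry; it pinpoints a genuine gap in the lemma \emph{as literally stated}. Take $V=\{v,a\}$, $\varphi_1=\{v\eq f(a,c)\}$, $\varphi_2=\{v\eq f(a,d)\}$ with $c,d\notin V$ distinct. Then $\{v\eq f(a,d)\}\in\bbase(\varphi_1,V)$ (choosing $W=\{d\}$, which is fresh for $\varphi_1$) but $\{v\eq f(a,d)\}\notin\bbase(\varphi_2,V)$ (since $d\in\consts(\varphi_2)$ cannot be in $W$). Yet (1) and (2) both hold, via any $\{v\eq f(a,e)\}$ with $e$ fresh for both. So (3) fails unless one adopts exactly the convention you propose: draw the auxiliary constants from a reserved pool disjoint from all constants under consideration (which the paper does implicitly, and makes explicit only later via the canonical constants $w_Z$ in Def.~\ref{def:normfunc}). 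Under that convention your proof is complete; without it the lemma is simply false, and you are right to flag this.
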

\begin{alemma}\label{lm:trmingam}
  Let $\Gamma$ be a set of literals, $v \in \const(\Gamma)$. If $\Gamma \vdash v \eq f(t_1,\ldots,t_n)$ for some term $f(t_1,\ldots,t_n) \in \terms(\Sigma)$ then there exists a term $f(t'_1,\ldots,t'_n) \in \terms(\Gamma)$ s.t. $\Gamma \vdash v \eq f(t'_1,\ldots,t'_n)$ and $\Gamma \vdash t_i \eq t_i'$ for all $1 \leq i \leq n$.
\end{alemma}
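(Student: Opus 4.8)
The claim is essentially a "congruence closure terms stay syntactic" fact: if $\Gamma$ derives $v \eq f(t_1,\dots,t_n)$ where $f(t_1,\dots,t_n)$ is an arbitrary term of $\Sigma$ (not necessarily occurring in $\Gamma$), then in fact $\Gamma$ already has a term $f(t_1',\dots,t_n')$ syntactically present, with each $t_i \eq t_i'$ derivable and $v \eq f(t_1',\dots,t_n')$ derivable. The plan is to prove this by induction on the length of a $\peuf$-derivation of $v \eq f(t_1,\dots,t_n)$, or, more cleanly, by analyzing the structure of the deductive closure $\Gamma^*$ relative to the congruence closure algorithm.

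Let me think about the cleaner route. The key observation is the following invariant about $\peuf$ derivations: every term that appears anywhere in a derivation from $\Gamma$ of a literal $\ell$ whose own terms all lie in $\terms(\Gamma)$ can be taken to lie in $\terms(\Gamma)$ — except that the \textsc{Cong} and \textsc{PMod} rules can manufacture new terms. So instead I would argue directly: suppose $\Gamma \vdash v \eq f(t_1,\dots,t_n)$. Since this literal is derived, and $v$ is a constant, consider the last rule application producing an equation with left side provably equal to $v$ and right side of the form $f(\dots)$. The crucial case is \textsc{Cong}: $f(t_1,\dots,t_n) \eq f(t_1'',\dots,t_n'')$ is derived from $\Gamma \vdash t_i \eq t_i''$ for all $i$. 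By the inductive hypothesis applied to the subderivations — or by a separate easy lemma that any term occurring in a derived equation that is "reachable" from $\Gamma$ is equal to a term in $\terms(\Gamma)$ — we can replace $f(t_1'',\dots,t_n'')$ by an honest term of $\Gamma$. I would set this up as: first prove the auxiliary claim that if $\Gamma \vdash s \eq t$ then there exist $s', t' \in \terms(\Gamma) \cup \consts(\Gamma)$ with $\Gamma \vdash s \eq s'$, $\Gamma \vdash t \eq t'$, $\Gamma \vdash s' \eq t'$ — this follows because $\Gamma$ is satisfiable (or can be handled even if not) and the congruence-closure model only identifies terms via chains passing through subterms of $\Gamma$; alternatively, reason that $\Gamma^*$ restricted to $\terms(\Gamma)$ is already congruence-closed, so any external term $f(t_1,\dots,t_n)$ that $\Gamma$-equals $v$ must have each $t_i$ $\Gamma$-equal to the corresponding argument of some $\Gamma$-term $f(t_1',\dots,t_n')$ in the same class as $v$.

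Concretely, I would proceed as follows. Induct on $\depth(f(t_1,\dots,t_n))$ together with derivation length. Since $\Gamma \vdash v \eq f(t_1,\dots,t_n)$ and $v$ is a constant in $\consts(\Gamma)$, by completeness of congruence closure, in the congruence-closure data structure of $\Gamma$ the class of $v$ must contain a term whose top symbol is $f$ and whose $i$-th argument class equals the class of $t_i$ for each $i$ (this is exactly how congruence closure decides such equalities: $v$'s class acquires an $f$-application only via \textsc{Cong} from an existing $f$-term of $\Gamma$). That $f$-term, call it $f(t_1',\dots,t_n')$, lies in $\terms(\Gamma)$ by construction of the congruence graph, and satisfies $\Gamma \vdash t_i \eq t_i'$ and $\Gamma \vdash v \eq f(t_1',\dots,t_n')$. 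That gives the conclusion directly.

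The main obstacle I anticipate is making the "congruence closure only creates $f$-terms from existing $f$-terms" argument rigorous purely at the level of the $\peuf$ proof system (which includes \textsc{PMod}, a paramodulation rule, making term-tracking slightly subtle) rather than appealing to the congruence graph algorithm. I would handle this by proving the auxiliary lemma — for any $\Gamma \vdash s \eq t$, both $s$ and $t$ are $\Gamma$-equal to terms of $\terms(\Gamma) \cup \consts(\Gamma)$ whenever $s$ or $t$ itself is — by induction on derivation length, checking each rule: \textsc{Refl}, \textsc{Symm}, \textsc{Trans} preserve the property trivially; \textsc{Cong} introduces $f(\vec x), f(\vec y)$ but each $x_i \eq y_i$ is derived, so by IH each $x_i, y_i$ is $\Gamma$-equal to a $\Gamma$-term, and then \textsc{Cong} on those witnesses yields a $\Gamma$-term $f$-application; \textsc{PMod} substitutes equals for equals in an already-derived literal, and again IH plus \textsc{PMod} on the witness terms closes it. Once that auxiliary fact is in hand, the statement of Lemma \ref{lm:trmingam} is an immediate specialization.
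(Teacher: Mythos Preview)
The paper states this lemma in the appendix without proof, treating it as an elementary property of EUF derivability (a standard consequence of the soundness and completeness of congruence closure). So there is no proof in the paper to compare against; your proposal stands on its own.

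Your congruence-closure argument is the right one and is essentially complete: since $v \in \consts(\Gamma)$, the class of $v$ in the congruence closure over $\terms(\Gamma)$ is already fixed; when the closure is extended by the subterms of $f(t_1,\ldots,t_n)$, the only way the fresh node $f(t_1,\ldots,t_n)$ can land in $v$'s class is via a congruence merge with some $f(t_1',\ldots,t_n')$ already in that class, and any such term belongs to $\terms(\Gamma)$. The side conditions $\Gamma \vdash t_i \eq t_i'$ are exactly the premises of that merge.

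One caution about your alternative, proof-theoretic route. The auxiliary lemma you formulate --- ``if $\Gamma \vdash s \eq t$ and one of $s,t$ lies in $\terms(\Gamma)$, then both are $\Gamma$-equal to terms of $\terms(\Gamma)$'' --- is true but too weak to yield \cref{lm:trmingam} as an ``immediate specialization'': applied to $v \eq f(\vec t)$ it only tells you that $f(\vec t)$ is $\Gamma$-equal to \emph{some} term of $\terms(\Gamma)$, which could simply be the constant $v$ again, with no guarantee of a top-level $f$. To make the $\peuf$-induction go through you would need a stronger invariant that tracks the head symbol, e.g., ``if $\Gamma \vdash u \eq f(\vec t)$ with $u \in \terms(\Gamma)$, then the $\Gamma$-class of $u$ (restricted to $\terms(\Gamma)$) contains a term with head $f$ and arguments componentwise $\Gamma$-equal to the $t_i$''. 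This is not hard, but it is not the statement you wrote. Your first argument via the congruence graph already encodes exactly this stronger invariant and is the cleaner choice.
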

\begin{alemma}\label{lm:trmingamdeq}
  Let $\Gamma$ be a set of literals, $v \in \const(\Gamma)$. If $\Gamma \vdash v \deq f(t_1,\ldots,t_n)$ for some term $f(t_1,\ldots,t_n) \in \terms(\Sigma)$ then there exists a term $f(t'_1,\ldots,t'_n) \in \terms(\Gamma)$ s.t. $\Gamma \vdash v \deq f(t'_1,\ldots,t'_n)$ and $\Gamma \vdash t_i \eq t_i'$ for all $1 \leq i \leq n$.
\end{alemma}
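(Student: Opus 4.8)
The plan is to reduce the disequality statement to its equality counterpart, Lemma~\ref{lm:trmingam}, via an analysis of how disequalities can be derived in $\peuf$. First I would establish, by induction on the length of the derivation of $\Gamma \vdash v \deq f(t_1,\ldots,t_n)$, the following structural fact: there are terms $a,b$ with $(a \deq b) \in \Gamma$, $\Gamma \vdash a \eq v$, and $\Gamma \vdash b \eq f(t_1,\ldots,t_n)$ (up to swapping $a$ and $b$ by \textsc{Symm}). The key observation is that among the rules of $\peuf$ (Fig.~\ref{fig:peuf}) only \textsc{PMod} can produce a disequality literal, and only from a disequality premise; hence a derivation whose conclusion is a disequality is a chain of \textsc{PMod} steps starting from a disequality literal of $\Gamma$, each step accompanied by an equality sub-derivation $\Gamma \vdash x \eq y$. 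Since replacing an occurrence of $x$ by $y$ inside a term $t$ yields a term $t'$ with $\Gamma \vdash t \eq t'$ (by \textsc{Refl}/\textsc{Cong}/\textsc{Trans} along the surrounding context), chaining these equalities independently along the two sides of the disequality gives $\Gamma \vdash a \eq v$ and $\Gamma \vdash b \eq f(t_1,\ldots,t_n)$, where $a \deq b$ is the starting literal.

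Next, since $b \in \terms(\Gamma)$ and $\Gamma \vdash b \eq f(t_1,\ldots,t_n)$, I would extract a matching term syntactically present in $\Gamma$. If $b$ is a constant this is exactly Lemma~\ref{lm:trmingam}. For the general case (where $b$ is a function application), I would introduce a fresh constant $z \notin \consts(\Gamma)$ and apply Lemma~\ref{lm:trmingam} to $\Gamma \cup \{z \eq b\}$ and the constant $z$: monotonicity of $\vdash$ gives $\Gamma \cup \{z \eq b\} \vdash z \eq f(t_1,\ldots,t_n)$, so there is $f(t'_1,\ldots,t'_n) \in \terms(\Gamma \cup \{z \eq b\})$ with $\Gamma \cup \{z \eq b\} \vdash t_i \eq t'_i$ for all $i$ and $\Gamma \cup \{z \eq b\} \vdash z \eq f(t'_1,\ldots,t'_n)$. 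Because $f(t'_1,\ldots,t'_n)$ is a function application it cannot be $z$, so it lies in $\terms(\Gamma)$; in particular $z$ occurs neither in it nor in any $t'_i$. Substituting $b$ for $z$ throughout the derivations (which turns the axiom $z \eq b$ into $b \eq b$, derivable by \textsc{Refl}, and leaves every other step valid over $\Gamma$) yields $\Gamma \vdash t_i \eq t'_i$ and $\Gamma \vdash b \eq f(t'_1,\ldots,t'_n)$ with $f(t'_1,\ldots,t'_n) \in \terms(\Gamma)$.

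Finally, I would combine the pieces by two applications of \textsc{PMod} to the axiom $\Gamma \vdash a \deq b$: substitute $v$ for $a$ using $\Gamma \vdash a \eq v$, and then $f(t'_1,\ldots,t'_n)$ for $b$ using $\Gamma \vdash b \eq f(t'_1,\ldots,t'_n)$, obtaining $\Gamma \vdash v \deq f(t'_1,\ldots,t'_n)$. Together with $\Gamma \vdash t_i \eq t'_i$ and $f(t'_1,\ldots,t'_n) \in \terms(\Gamma)$ this is precisely the claim. The whole argument parallels the proof of Lemma~\ref{lm:trmingam} (which itself is proved by the same style of induction on the derivation, tracking the congruence closure of $\Gamma$), so I would expect the write-up to reuse its machinery rather than repeat it.

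I expect the main obstacle to be the bookkeeping in the first step: making precise the statement ``derivations of a disequality are \textsc{PMod}-chains from a disequality of $\Gamma$'', and handling carefully which occurrences each \textsc{PMod} step rewrites so that the induced side equalities $\Gamma \vdash a \eq v$ and $\Gamma \vdash b \eq f(t_1,\ldots,t_n)$ genuinely follow. A secondary, routine subtlety is the conservativity of the fresh constant in the second step (derivations over $\Gamma \cup \{z \eq b\}$ with $z$-free conclusions descend to $\Gamma$ by substituting $b$ for $z$), which is worth stating explicitly.
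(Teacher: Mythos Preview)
The paper states this lemma (like its equality counterpart, Lemma~\ref{lm:trmingam}) without proof, treating both as routine structural facts about $\peuf$. There is thus no proof in the paper to compare against; your plan stands on its own, and it is correct.

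Your reduction to Lemma~\ref{lm:trmingam} is the natural route. Step~(i) goes through by a straightforward induction on the derivation: \textsc{PMod} is indeed the only rule of $\peuf$ whose conclusion can be a disequality, and it preserves the left/right split of the literal, so one obtains $(a\deq b)\in\Gamma$ with $\Gamma\vdash a\eq v$ and $\Gamma\vdash b\eq f(t_1,\ldots,t_n)$. Your parenthetical ``up to swapping $a$ and $b$ by \textsc{Symm}'' is both unnecessary and unavailable---\textsc{Symm} in Fig.~\ref{fig:peuf} is stated only for $\eq$---but also not needed, since \textsc{PMod} never exchanges the two sides. In step~(ii), since $(a\deq b)\in\Gamma$ gives $b\in\terms(\Gamma)$ already, one has $\terms(\Gamma\cup\{z\eq b\})=\terms(\Gamma)\cup\{z\}$, and your fresh-constant conservativity argument is clean. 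In step~(iii), note that \textsc{PMod} as written substitutes \emph{all} occurrences, so applying it to $a\deq b$ with $a\eq v$ may rewrite inside $b$ as well; this is harmless because the resulting right-hand side $b'=b[a\mapsto v]$ still satisfies $\Gamma\vdash b'\eq f(t'_1,\ldots,t'_n)$ (via $\Gamma\vdash b\eq b'$ from congruence), and since $v$ is a constant the second \textsc{PMod} cannot disturb the left-hand side.
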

\setcounter{lemma}{1}
\begin{lemma}
  Let $\Gamma$ be a set of literals,  $x$ and $y$ be two constants in
  $\consts(\Gamma)$, $V \subseteq \consts(\Gamma)$ be
  a purifier for $\{x, y\} \subseteq V$, and $\beta =
  \alpha_V(\Gamma)$. Then, for any $u, v \in V$
\[
  (\Gamma \land x \eq y \vdash u \eq v)
  \iff
 ( \beta \land x \eq y \vdash u \eq v)
\]
\end{lemma}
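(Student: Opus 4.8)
The plan is to fix a $V$-basis $\langle W,\beta,\delta\rangle\in\base(\Gamma,V)$, so that $\beta=\alpha_V(\Gamma)$, the constants of $W$ are fresh, and $\Gamma\equiv(\exists W\cdot\beta\wedge\delta)$. The structural fact I would establish up front is that $\beta$ is \emph{conservative over $V$}: it entails exactly the equalities and disequalities among constants of $V$ that $\Gamma$ does, and $\beta_\funs$ contains, up to $\Gamma$-equivalence of arguments, every function literal $v\eq f(\vec w)$ with $v\in V$, $\vec w\subseteq V\cup W$ and $\vec w\cap V\neq\emptyset$ that $\Gamma$ entails. This is a property of the congruence-closure construction behind a $V$-basis --- every $\Gamma$-class meeting $V$ has a $V$-representative, and the shape constraints on $\delta$ together with conditions (d)--(e) of the basis prevent $\delta$ from introducing $V$-level equalities or from collapsing the auxiliary constants --- and \Cref{lm:alpheq} can be used to see that it does not depend on the particular basis chosen.

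For the direction $\beta\wedge x\eq y\vdash u\eq v\ \Rightarrow\ \Gamma\wedge x\eq y\vdash u\eq v$, note that $\beta$ is a logical weakening of $\Gamma$ once $W$ is Skolemized: every model of $\Gamma\wedge x\eq y$ expands to a model of $\beta\wedge\delta\wedge x\eq y$, hence of $\beta\wedge x\eq y$. So $\beta\wedge x\eq y\models u\eq v$ gives $\Gamma\wedge x\eq y\models u\eq v$; if $\Gamma\wedge x\eq y$ is satisfiable this lifts to $\vdash$ by completeness of $\peuf$ over satisfiable sets (App.~\ref{sec:euf_extra}, item~(1)), and if it is inconsistent the conclusion is immediate.

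The converse is the heart of the argument. Assume $\Gamma\wedge x\eq y\vdash u\eq v$. If $\Gamma\vdash u\eq v$ already, conservativity gives $\beta\vdash u\eq v$; so assume $\Gamma\not\vdash u\eq v$ and apply \Cref{lm:sprtrm} with $t_1=u$, $t_2=v$, $a=x$, $b=y$. This yields superterms $s_1[x]$ of $x$ and $s_2[y]$ of $y$ with $\Gamma\vdash u\eq s_1[x]$, $\Gamma\vdash v\eq s_2[y]$ and $\Gamma\wedge x\eq y\vdash s_1[x]\eq s_2[y]$. I would then descend through the term structure of $s_1[x]$ (symmetrically $s_2[y]$): by \Cref{def:purifier}, every subterm of $s_1[x]$ that is itself a superterm of $x$ carries a $V$-name, and by peeling one function layer at a time with \Cref{lm:trmingam} (to pass from abstract function applications to congruent $\Gamma$-terms) conservativity places the matching link into $\beta_\funs$ --- argument positions that are not superterms of $x$ being filled by the $W$-names of their classes, consistently by conditions (d)--(e). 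Induction on the depth of $s_1[x]$ as a superterm of $x$ (base case $x\in V$ itself) then gives $\beta\vdash u\eq s_1[x]$ with $s_1[x]$ rewritten over $V\cup W$, and likewise $\beta\vdash v\eq s_2[y]$. Adding $x\eq y$ and propagating congruence through precisely these $\beta_\funs$-links --- invoking the purifier again for the superterms of $y$ that appear once $x$ has been rewritten to $y$ --- reconstructs the $\Gamma$-derivation of $s_1[x]\eq s_2[y]$ inside $\beta\wedge x\eq y$, hence $\beta\wedge x\eq y\vdash u\eq v$. Running the same analysis with $s_1[x]\eq s_2[y]$ played off a $\Gamma$-disequality shows that inconsistency of $\Gamma\wedge x\eq y$ transfers to $\beta\wedge x\eq y$, which is what the inconsistent case of the first direction needs.

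The step I expect to fight with is the descent: one must keep precise track of which argument positions of the superterms are themselves superterms of $x$ (hence $V$-named via the purifier) and which become auxiliary $W$-constants, and check that after the substitution $x\mapsto y$ the congruence propagation stays inside the $\beta_\funs$ fragment rather than escaping into the discarded $\delta$. A second, more routine, point is to pin down the conservativity of $\beta$ rigorously from the basis conditions, i.e.\ that no depth-$1$ consequence over $V$ is silently dropped.
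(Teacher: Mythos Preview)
Your proposal uses the same ingredients as the paper --- \Cref{lm:sprtrm} to produce superterms $s_1[x]$ and $s_2[y]$, the purifier condition to supply $V$-names for every intermediate superterm, \Cref{lm:trmingam} to land the function applications back inside $\terms(\Gamma)$, and induction on depth --- so the overall strategy is right. The organization differs, and the paper's organization is cleaner in a way that directly addresses the ``fighting step'' you anticipate. You propose a two-phase argument: first rewrite $s_1[x]$ and $s_2[y]$ separately into $\beta$-land (establishing $\beta\vdash u\eq s_1'[x]$ and $\beta\vdash v\eq s_2'[y]$ over $V\cup W$), and then add $x\eq y$ and push congruence through. The delicate point in your phase two is exactly that the $W$-constants filling the non-superterm positions of $s_1'$ and $s_2'$ must coincide, and you have not said why they do. The paper avoids this by inducting directly on the statement of the lemma: in the inductive step it writes $s_1=f(t_1[x],\vr)$ and $s_2=f(t_2[y],\vr)$ with the \emph{same} residual arguments $\vr$ (this is what $\Gamma\wedge x\eq y\vdash s_1\eq s_2$ forces, up to $\Gamma$-equivalence), uses the purifier to get $x',y'\in V$ with $\Gamma\vdash\{x'\eq t_1,\ y'\eq t_2\}$, and observes that the basis construction then yields $\beta\vdash\{u\eq f(x',\vw),\ v\eq f(y',\vw)\}$ with a \emph{common} $\vw$. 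The inductive hypothesis is applied to $(x',y')$ in place of $(u,v)$, giving $\beta\wedge x\eq y\vdash x'\eq y'$, and congruence closes the gap in one step. So rather than tracking $W$-matching through an arbitrarily deep rewritten term, the paper peels one layer, gets the $\vw$-match for free at that layer, and recurses. Your plan can be made to work, but you will essentially rediscover this one-layer reduction once you try to justify why the auxiliary constants line up.
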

\begin{proof}
  By the definition of $\beta$, $(\Gamma \vdash u \eq v)
  \iff (\beta \vdash u \eq v)$. Thus, assume that $\Gamma \not \vdash u \eq v$.

  The only-if direction is trivial since $\beta$ is weaker than $\Gamma$.

  For the if-direction, By Lemma~\ref{lm:sprtrm}, there are superterms $s_1[x]$
  and $s_2[y]$ of $x$ and $y$, respectively, s.t. $\Gamma \vdash \{u \eq s_1[x],
  v \eq s_2[y]\}$, and $(\Gamma \land x \eq y) \vdash (s_1[x] \eq s_2[y])$. The
  proof proceeds by induction on the maximum depth of $s_1$ and $s_2$. The base
  case, $s_1 = x$ and $s_2 = y$, is trivial.

  For the inductive case, we show one sub-cases, others are similar. Assume
  $s_1 = f(t_1[x], \vr)$ and $s_2 = f(t_2[y], \vr)$, for some terms $t_1[x]$,
  $t_2[y]$, $\vr$, and a function $f$. Furthermore, $(\Gamma \land x \eq
  y) \vdash t_1[x] \eq t_2[y]$. Since $\Gamma \vdash \{u \eq f(t_1[x], \vr), v \eq f(t_2[y], \vr)\}$, by Lemma~\ref{lm:trmingam}, there exists terms $f(t_1', \vr_1), f(t_2', \vr_2) \in \terms(\Gamma)$ s.t. $\Gamma \vdash \{u \eq f(t_1', \vr_1), t_1[x] \eq t_1', \vr \eq \vr_1, v \eq f(t_2', \vr_2), t_2[y] \eq t_2', \vr \eq \vr_2 \}$.

  Since $V$ is a purifier for $\{x, y\}$, there are $x', y' \in V$ s.t.
  $\Gamma \vdash \{x' \eq t_1', y' \eq t_2'\}$, and $\Gamma \vdash \{u \eq
  f(x', \vr), v \eq f(y', \vr)\}$. By construction, $\beta \vdash \{u \eq
  f(x', \vw), v \eq f(y', \vw)\}$, for some constants $\vw \in \consts(\beta)$. By IH,
  $(\beta \land x \eq y) \vdash x' \eq y'$. Hence, by congruence, $(\beta \land
  x \eq y) \vdash v \eq u$.
  \end{proof}
\begin{lemma}
  Let $\Gamma$ be a set of literals, $x$ and $y$ be two constants in
  $\consts(\Gamma)$, $V \subseteq \consts(\Gamma)$ be a purifier for $\{x, y\}
  \subseteq V$, and $\beta = \alpha_V(\Gamma)$. Then, for any $u, v \in V$
  \[
    (\Gamma \land x \eq y \vdash u \deq v)
    \iff
    ( \beta \land x \eq y \vdash u \deq v)
  \]
\end{lemma}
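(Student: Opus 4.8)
The plan is to mirror the structure of the proof of Lemma~\ref{lm:Visenough} (the equality version just proved), reducing the disequality case to the equality case via the \textsc{EqNeq} rule of $\peuf$. First I would dispose of the easy directions. Since $\beta = \alpha_V(\Gamma)$ is weaker than $\Gamma$ (because $(\exists W\cdot\beta\land\delta)\equiv\Gamma$, so $\Gamma\vdash\beta$), the implication $(\beta\land x\eq y\vdash u\deq v)\Rightarrow(\Gamma\land x\eq y\vdash u\deq v)$ is immediate. Also, if $\Gamma\vdash u\deq v$ already, then by the characterization of $\alpha_V$ on disequalities between constants of $V$ (these are exactly the $\beta_\deq$ literals retained in the basis), $\beta\vdash u\deq v$, and we are done; so I may assume $\Gamma\not\vdash u\deq v$, and symmetrically that the disequality is genuinely ``new''.

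For the remaining direction, suppose $\Gamma\land x\eq y\vdash u\deq v$. Unfolding what derivability of a disequality means in $\peuf$: by refutational completeness (App.~\ref{sec:euf_extra}), $\Gamma\land x\eq y\vdash u\deq v$ holds iff $\Gamma\land x\eq y\land u\eq v\vdash\bot$, which by the \textsc{EqNeq} rule means there is a disequality $p\deq p'\in\Gamma$ (note: a syntactic literal of $\Gamma$, not merely a consequence) such that $\Gamma\land x\eq y\land u\eq v\vdash p\eq p'$. Now I apply Lemma~\ref{lm:Visenough}-style reasoning to this \emph{equality} $p\eq p'$: the hypotheses add $u\eq v$ with $u,v\in V$ and $x\eq y$ with $x,y\in V$, so the accumulated set of equations added to $\Gamma$ is between constants of $V$. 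Since $p\deq p'$ is an actual literal of $\Gamma$, and by the $V$-basis structure every disequality literal of $\Gamma$ that survives must already be ``anchored'' at $V$-representatives — more precisely, $p$ and $p'$ are each $\Gamma$-equivalent to constants $\hat p,\hat p'\in V$ (here is where I need either the purifier hypothesis or a direct argument from Def.~\ref{def:vabst}, since $\delta_\deq$ allows disequalities touching $W$), so $p\eq p'$ is derivable from the hypotheses iff $\hat p\eq\hat p'$ is. Then I invoke Lemma~\ref{lm:Visenough} (possibly iterated, since we have added two equalities $x\eq y$ and $u\eq v$; I can fold $u\eq v$ into $\Gamma'\eqdef\Gamma\land u\eq v$, check that $V$ is still a purifier for $\{x,y\}$ there, and that $\alpha_V(\Gamma')$ relates correctly — or more cleanly, argue with a single application noting $u,v\in V$) to get $\beta\land x\eq y\land u\eq v\vdash\hat p\eq\hat p'$, hence $\beta\land x\eq y\land u\eq v\vdash p\eq p'$. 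Finally, since $p\deq p'$ (or its $V$-anchored image $\hat p\deq\hat p'$) is retained in $\beta$ as a $\beta_\deq$ literal, \textsc{EqNeq} gives $\beta\land x\eq y\land u\eq v\vdash\bot$, i.e. $\beta\land x\eq y\vdash u\deq v$.

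The main obstacle I anticipate is the careful handling of the disequality literals of $\Gamma$ that the $V$-basis keeps: Def.~\ref{def:vabst} allows $\delta_\deq\subseteq\{u\deq w\mid u\in W, w\in W\cup V\}$, so a ``witnessing'' disequality $p\deq p'$ need not lie over $V$. I would need to argue that any such $W$-touching disequality is either irrelevant to deriving a contradiction over $V$-equalities, or can be replaced by its image under the $\delta_\eq$ equations $w\eq u$ that tie $W$-constants to $\Gamma$-constants, ultimately reducing to a disequality retained in $\beta_\deq$. Getting the bookkeeping of the $W$/$V$ split right — essentially re-deriving the relevant parts of the congruence-closure invariants encoded in conditions (d) and (e) of the $V$-basis definition — is the genuinely technical part; everything else is a routine transcription of the equality-case argument through the \textsc{EqNeq} rule.
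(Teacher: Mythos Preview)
Your reduction to refutation—adding $u \eq v$ and hunting for a witnessing literal $p \deq p' \in \Gamma$—is a different route from the paper's, and the obstacle you flag at the end is more than bookkeeping. The real problem is not just that $p \deq p'$ may land in $\delta_{\deq}$ rather than $\beta_{\deq}$; it is that once you have \emph{two} added equalities, $x \eq y$ and $u \eq v$, you can no longer invoke Lemma~\ref{lm:Visenough}. That lemma needs $V$ to purify the pair being equated, and the hypothesis supplies this only for $\{x,y\}$, not for $\{u,v\}$ (membership $u,v \in V$ is not the same as $V$ purifying $\{u,v\}$). Your suggestion to fold $u \eq v$ into $\Gamma' \eqdef \Gamma \land u \eq v$ and check that $V$ still purifies $\{x,y\}$ there is not obviously true—merging the classes of $u$ and $v$ can make terms of $\Gamma$ equivalent to superterms of $x$ or $y$ for which no $V$-representative exists—and even if it were, you would also need $\alpha_V(\Gamma \land u \eq v)$ to agree with $\alpha_V(\beta \land u \eq v)$, which is Lemma~\ref{lm:eqpres1v} for $\{u,v\}$ and again requires a purifier you do not have.

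The paper avoids all of this by never adding $u \eq v$. It factors the derivation of $u \deq v$ from $\Gamma \land x \eq y$ as: there is a term $t$ with $\Gamma \vdash u \deq t$ (the disequality side is \emph{already} in $\Gamma$) and $\Gamma \land x \eq y \vdash v \eq t$ (only the equality side is new). Lemma~\ref{lm:sprtrm}, applied to this single new equality, pins $t$ to a superterm of $x$ or $y$; now the purifier hypothesis—for $\{x,y\}$, which is exactly what is given—produces a representative $y' \in V$ with $\Gamma \vdash y' \eq t$. Then $\Gamma \vdash u \deq y'$ yields $\beta \vdash u \deq y'$ directly (both constants in $V$), and one application of Lemma~\ref{lm:Visenough} gives $\beta \land x \eq y \vdash v \eq y'$. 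The idea you are missing is that the intermediate term can be chosen so that only the \emph{equality} half of the decomposition depends on $x \eq y$; this keeps the entire argument within a single use of Lemma~\ref{lm:Visenough} and within the scope of the only purifier you actually have.
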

\begin{proof}
  By the definition of $\beta$, $(\Gamma \vdash u \deq v) \iff (\beta \vdash u \deq
  v)$. Assume $\Gamma \not \vdash u \deq v$. Then, there is a term $t \in
  \terms(\Sigma)$, s.t. $\Gamma \vdash u \deq t$ and $(\Gamma \land x \eq y)
  \vdash v \eq t$. By Lemma~\ref{lm:sprtrm}, $\Gamma \vdash t \eq s[y]$. We case split on whether $s[y]$ is $y$ itself or some superterm of $y$.
 \begin{itemize}
     \item case $s[y] = s$, Since $\Gamma \vdash t \eq y$, $(\Gamma \land x \eq y) \vdash v \eq y$ and $\Gamma \vdash u \deq y$. By Lemma~\ref{lm:Visenough}, $(\beta \land x \eq y) \vdash v \eq y$. By definition $\beta \vdash u \deq y$. Therefore, $(\beta \land x \eq y) \vdash u \deq v$.

     \item case $s[y] = f(t_1,\ldots, t_n)$, where at least one $t_i$ is a superterm of $y$. Since $\Gamma \vdash t \eq f(t_1,\ldots, t_n)$, $\Gamma \vdash u \deq f(t_1,\ldots, t_n)$. By Lemma~\ref{lm:trmingamdeq}, there exists a term $f(t_1',\ldots, t_n') \in \terms(\Gamma)$ s.t., $\Gamma \vdash \{ u \deq f(t_1',\ldots, t_n'), t_1' \eq t_1,\ldots, t_n' \eq t_n\}$. Since $f(t_1',\ldots, t_n') \in \terms(\Gamma)$, $\Gamma \vdash f(t_1',\ldots, t_n') \eq s[y]$, and $V$ is a purifier for $y$ in $\Gamma$, there exists a constant $y' \in V$ s.t. $\Gamma \vdash y' \eq f(t_1',\ldots, t_n')$. Therefore, $(\Gamma \land x \eq y) \vdash v \eq y'$ and $\Gamma \vdash u \deq y'$. By Lemma~\ref{lm:Visenough}, $(\beta \land x \eq y) \vdash v \eq y'$. By the definition of $\beta$, $\beta \vdash u \deq y'$. Therefore, $(\beta \land x \eq y) \vdash u \deq v$.

 \end{itemize} 
\end{proof}
\setcounter{lemma}{5}
\begin{lemma}
Let $V$ be a set of constants, $\varphi_1$ and $\varphi_2$ be two sets of literals s.t. $\alpha_V(\varphi_1) = \alpha_V(\varphi_2)$, and $V$ be a purifier for $\{x, y\}$ in both $\varphi_1$ and $\varphi_2$. Then, $\alpha_V(\varphi_1 \land x \eq y) = \alpha_V(\varphi_2 \land x \eq y)$
\end{lemma}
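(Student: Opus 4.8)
The plan is to reduce the two-formula statement to a one-formula identity about the $V$-base abstraction, and then prove that identity by working with $V$-bases directly.

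\emph{Reduction.} Put $\beta = \alpha_V(\varphi_1) = \alpha_V(\varphi_2)$. By \Cref{lm:pur}, $V$ is also a purifier for $\{x,y\}$ in $\beta$. Hence it suffices to prove the self-contained claim: \emph{for every set of literals $\varphi$ for which $V$ purifies $\{x,y\}$, one has $\alpha_V(\varphi \land x \eq y) = \alpha_V(\alpha_V(\varphi) \land x \eq y)$.} Applying this once with $\varphi = \varphi_1$ and once with $\varphi = \varphi_2$, and using $\alpha_V(\varphi_1) = \alpha_V(\varphi_2) = \beta$, yields $\alpha_V(\varphi_1 \land x \eq y) = \alpha_V(\beta \land x \eq y) = \alpha_V(\varphi_2 \land x \eq y)$, which is the lemma. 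The degenerate case where $\varphi \land x \eq y$ is unsatisfiable is handled separately: by \Cref{lm:Visenough} and \Cref{lm:Visenoughdeq} this happens iff $\alpha_V(\varphi) \land x \eq y$ is unsatisfiable, so both sides collapse to $\bot$; from now on assume satisfiability.

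\emph{Reformulation via bases.} By \Cref{lm:alpheq}, the claimed identity is equivalent to exhibiting a \emph{common} $\beta$-component, i.e.\ $\bbase(\varphi \land x\eq y, V) \cap \bbase(\alpha_V(\varphi) \land x\eq y, V) \neq \emptyset$. I would fix a $V$-basis $\langle W, \beta, \delta\rangle \in \base(\varphi, V)$, so that $\varphi \land x \eq y \equiv \exists W \cdot (\beta \land x \eq y) \land \delta$, and a $V$-basis $\langle W_0, \beta^+, \delta_0\rangle \in \base(\beta \land x \eq y, V)$ with $W_0$ chosen disjoint from $\const(\varphi)$. Since $\beta^+ \in \bbase(\beta \land x\eq y, V)$ by construction, everything reduces to showing $\beta^+ \in \bbase(\varphi \land x\eq y, V)$, i.e.\ that gluing $\delta$ onto $\langle W_0, \beta^+, \delta_0\rangle$ (and eliminating the auxiliary constants that $\delta_0$ identifies with $W$) produces a genuine $V$-basis of $\varphi \land x \eq y$ whose $\beta$-component is exactly $\beta^+$.

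\emph{The crux.} The structural fact that drives the whole argument is that adding the equality $x \eq y$ to $\varphi$ disturbs only the $V$-represented part of the congruence structure: it creates no new equivalence class lacking a $V$-representative, merges no two distinct such classes, and promotes none of them to having a $V$-representative. This is where \Cref{lm:sprtrm} and the purifier hypothesis are used: any equality newly derivable after adding $x \eq y$ is witnessed by terms $t_1, t_2 \in \terms(\varphi)$ that are $\varphi$-equivalent to superterms $s_1[x]$, $s_2[y]$ of $x$, $y$; the purifier property then gives $V$-constants equal to $t_1, t_2$, and an induction up the term structure shows every newly merged class is already $V$-represented. Consequently the basis $\langle W_0, \beta^+, \delta_0\rangle$ renames the $\beta$-visible auxiliary constants of $\beta$ bijectively (via $\delta_0$), leaving the remaining auxiliary constants of $\varphi$ — whose internal structure $\alpha_V$ forgets anyway — intact, and the glued triple satisfies conditions (a)--(e) of \Cref{def:vabst}/the $V$-basis definition, with $\varphi$-level matching of the relevant $V$-literals supplied by \Cref{lm:Visenough} and \Cref{lm:Visenoughdeq} (and \Cref{lm:idem} used to normalize intermediate abstractions).

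\emph{Main obstacle.} I expect the congruence-theoretic claim above to be the conceptually essential but not the hardest part. The real work is the auxiliary-constant bookkeeping in the gluing step: verifying that combining the two bases introduces no spurious equality between a $V$-constant and an auxiliary constant, nor between two distinct auxiliary constants, and — crucially — that the resulting $\beta$-component is \emph{literally} independent of whether one started from $\varphi_1$ or $\varphi_2$. This is precisely the place the purifier hypothesis is indispensable, and it is why, as the authors remark about \Cref{thm:main}, the reasoning is tedious even though no single step is deep.
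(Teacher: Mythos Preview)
Your plan is correct, and at its core it is the same argument as the paper's: reduce via \Cref{lm:alpheq} to exhibiting a common element of $\bbase(\varphi_1 \land x\eq y,V)$ and $\bbase(\varphi_2 \land x\eq y,V)$, then use the purifier hypothesis together with \Cref{lm:sprtrm}, \Cref{lm:Visenough}, and \Cref{lm:Visenoughdeq} to show that everything new after adding $x\eq y$ lives in the $V$-represented part of the congruence structure. Your explicit reduction to the one-formula identity $\alpha_V(\varphi\land x\eq y)=\alpha_V(\alpha_V(\varphi)\land x\eq y)$ is a clean way to factor out the symmetry between $\varphi_1$ and $\varphi_2$.

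Where you diverge is in how you manufacture the common basis element. You build it ``bottom-up'': take a $V$-basis $\langle W_0,\beta^+,\delta_0\rangle$ of $\beta\land x\eq y$ and glue $\delta$ back on to lift it to a basis of $\varphi\land x\eq y$. The paper instead works ``top-down'': it picks $\beta'\in\bbase(\varphi_1\land x\eq y,V)$ with $\beta\subseteq\beta'$, writes $\beta'=\beta\cup L_{\eq}\cup L_{\deq}\cup L_{\funs}$, and shows every literal in the increments is already derivable from $\beta\land x\eq y$ (the $L_{\eq},L_{\deq}$ parts fall out of \Cref{lm:Visenough}/\Cref{lm:Visenoughdeq}; the $L_{\funs}$ part is a case analysis on the shape of the superterm witnesses supplied by \Cref{lm:sprtrm}). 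Since $\beta\in\bbase(\varphi_2,V)$ as well, the same $\beta'$ then lies in $\bbase(\varphi_2\land x\eq y,V)$. This sidesteps the gluing bookkeeping you flagged as the main obstacle: there is no need to track how two sets of auxiliary constants interact, because the paper never takes a basis of $\beta\land x\eq y$ at all. Your route works (the observation that $W$-constants occur only as function arguments, so congruence cannot merge them, is what makes the gluing sound), but the paper's direction is shorter precisely because it avoids that step.
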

\begin{proof}
  Let $\beta \in \bbase(\varphi, V)$. Let $\beta' \in \bbase(\varphi_1 \land x
  \eq y, V)$ s.t. $\beta \subseteq \beta'$. Let $L_{\eq}$ be a set of equalities
  between constants in $V$, $L_{\deq}$ be a set of disequalities between
  constants in $V$, and $L_{\funs}$ is a set of equalities of the form $v \eq
  f(\Vec{w})$ where $v \in V$, and $\Vec{w}$ is a set of constants, some of
  which are in $V$, and the rest are not in
  $\const(\varphi_1)\cup\const(\varphi_2)$. Let $\beta' = \beta \cup L_{\eq}\cup
  L_{\funs} \cup L_{\deq}$.

  By Lemma~\ref{lm:Visenough} and Lemma~\ref{lm:Visenoughdeq}, $\beta \land x
  \eq y \vdash \ell$ for all $\ell \in L_{\eq}\cup L_{\deq}$.

  Next, we prove that for all $\ell \in L_\funs$, $\beta \land x \eq y \vdash
  \ell$. In the following, assume that $u, v \in V$ and $w \not \in
  \const(\Gamma)\cup V$. We assume that $\ell = v \eq f(u, w)$. All other cases
  are similar. We have $\beta' \vdash v \eq f(u, w)$ iff $(\varphi_1 \land x \eq
  y) \vdash v \eq f(u, t)$ for some term $t \in \terms(\Sigma)$ and there no $v'
  \in V$ s.t. $(\varphi_1 \land x \eq y) \vdash v' \eq t$. If $\varphi_1 \vdash
  v \eq f(u, t)$ then $\beta \vdash v \eq f(u, w)$ by definition. Assume that
  $\varphi_1 \nvdash v \eq f(u, t)$. By Lemma~\ref{lm:sprtrm}, we have
  $\varphi_1 \vdash v \eq s_1[x]$ and $\varphi_1 \vdash f(u, t) \eq s_2[y]$ and
  $\varphi_1 \wedge x \eq y \vdash s_1[x] \eq s_2[y]$. We case split on
  $s_2[y]$.
\begin{enumerate}
\item case $s_2[y] = y$. We have $\varphi_1 \vdash f(u, t) \eq y$. From
  $(\varphi_1 \land x \eq y) \vdash v \eq f(u, t)$ and $\varphi_1 \vdash f(u, t)
  \eq y$, we have $(\varphi_1 \land x \eq y) \vdash v \eq y$. From
  \Cref{lm:Visenough}, we have $(\beta \land x \eq y) \vdash v \eq y$. Since
  $\varphi_1 \vdash f(u, t) \eq y$, $\beta \vdash f(u, w) \eq y$ by definition.
  Hence, $(\beta \land x \eq y) \vdash v \eq f(u, w)$.
\item case $s_2[y] = g(b_1,\ldots,b_n)$, where, $b_i$ is a superterm of $y$ for
  at least one $i$. We have, $\varphi_1 \vdash f(u, t) \eq g(b_1,\ldots,b_n)$.
  It is either the case that there exist a term $t'$ s.t. $g(b_1,\ldots, b_n)
  \eq t' \in \varphi_1$ and $\varphi_1 \vdash t' \eq f(u, t)$, or $g = f$ and
  $\varphi_1 \vdash\{u \eq b_1, t \eq b_2\}$.
        \begin{enumerate}
        \item case there exist a term $t'$ s.t. $g(b_1,\ldots, b_n) \eq t' \in
          \varphi_1$ and $\varphi_1 \vdash t' \eq f(u, t)$. Since $g(b_1,\ldots,
          b_n) \in \terms(\varphi_1)$ and $V$ is a purifier for $y$ in
          $\varphi_1$, there exists a $y'\in V$ s.t. $\varphi_1 \vdash y' \eq
          g(b_1,\ldots,b_n)$. Therefore, $\varphi_1\vdash y' \eq f(u, t)$ and
          $(\varphi_1 \land x \eq y) \vdash v \eq y'$. By definition, $\beta
          \vdash y' \eq f(u, w)$. By \Cref{lm:Visenough}, we have $(\beta \land
          x \eq y) \vdash v \eq y'$. Hence, $(\beta \land x \eq y) \vdash v \eq
          f(u, w)$.
        \item case $g = f$ and $\varphi_1 \vdash\{u \eq b_1, t \eq b_2\}$.
          It has to be the case that $s_1[x] = f(a_1, a_2)$. We have $\varphi_1 \vdash v \eq f(a_1, a_2)$ where $a_1$ or
          $a_2$ is a superterm of $x$. By \Cref{lm:trmingam}, we have a
          term $f(a_1', a_2') \in \terms(\varphi_1)$ s.t. $\varphi_1 \vdash \{v
          \eq f(a_1', a_2'), a_1' \eq a_1, a_2' \eq a_2\}$. We case split on
          whether $a_1$ or $a_2$ is a superterm of $x$:
            \begin{enumerate}
            \item case $a_1$ is a superterm of $x$. We have, $(\varphi_1 \land x
              \eq y) \vdash a_1 \eq b_1$. Since $\varphi_1 \vdash a_1' \eq a_1$,
              $a_1' \in \terms(\varphi_1)$, and $V$ is a purifier for $x$ in
              $\varphi_1$, there must exists a constant $x' \in V$ s.t.
              $\varphi_1 \vdash x' \eq a_1'$. Since $(\varphi_1 \land x \eq y)
              \vdash a_1 \eq b_1$, $(\varphi_1 \land x \eq y) \vdash x' \eq
              b_1$. From $\varphi_1 \vdash u \eq b_1$, we have $\varphi_1 \land
              x \eq y \vdash x' \eq u$. By \Cref{lm:Visenough}, we have $\beta
              \land x \eq y \vdash x' \eq u$. Since $\varphi \wedge v \eq
              f(a_1', a_2')$ and $\varphi \wedge x' \eq a_1'$, we have
              $\varphi_1 \vdash v \eq f(x', a_2')$ and hence $\beta \vdash v \eq
              f(x', w)$ be definition. Since $\beta \vdash v \eq f(x', w)$ and
              $\beta \land x \eq y \vdash x' \eq u$, $\beta \land
              x \eq y \vdash v \eq f(u, w)$.
            \item case $a_2$ is a superterm of $x$. We have $(\varphi \land x
              \eq y) \vdash a_2 \eq b_2$. Since $\varphi_1 \vdash a_2' \eq a_2$,
              $a_2' \in \terms(\varphi_1)$, and $V$ is a purifier for $x$ in
              $\varphi_1$, there must exists a constant $x' \in V$ s.t.
              $\varphi_1 \vdash x' \eq a_2'$. Since $(\varphi_1 \land x \eq y)
              \vdash a_2 \eq b_2$, $(\varphi_1 \land x \eq y) \vdash x' \eq
              b_2$. However, $\varphi \vdash b_2 \eq t$ and hence $(\varphi_1
              \land x \eq y) \vdash t \eq x'$ which contradicts our assumption
              that there is no $v' \in V$ such that $(\varphi_1 \land x \eq y)
              \vdash v' \eq t$ .
            \end{enumerate}
        \end{enumerate}
\end{enumerate}

Since $\alpha_V(\varphi_1) = \alpha_V(\varphi_2)$, by \Cref{lm:alpheq} we have
$\beta \in \bbase(\varphi_2, V)$. Therefore, $\beta' \in \bbase(\varphi_2 \land
x \eq y, V)$ as well. Since $\bbase(\varphi_1 \land x \eq y, V) \cap
\bbase(\varphi_2 \land x \eq y, V) \neq \emptyset$, by \Cref{lm:alpheq},
$\alpha_V(\varphi_1\land x \eq y) = \alpha_V(\varphi_2 \land x \eq y)$.
\end{proof}
\setcounter{lemma}{6}
\begin{lemma}
  Let $V$ be a set of constants, $\varphi_1$, $\varphi_2$ be two sets of
  literals s.t. $\alpha_V(\varphi_1) = \alpha_V(\varphi_2)$. Then, for any $U
  \subseteq V$, $\alpha_U(\varphi_1) = \alpha_U(\varphi_2)$.
\end{lemma}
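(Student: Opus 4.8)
The plan is to push the statement down to the level of $V$-bases, where \Cref{lm:alpheq} does most of the bookkeeping. By \Cref{lm:alpheq}, $\alpha_U(\varphi_1) = \alpha_U(\varphi_2)$ holds as soon as $\bbase(\varphi_1, U) \cap \bbase(\varphi_2, U) \neq \emptyset$, so it suffices to exhibit a single set of literals $\beta'$ that is a legal $U$-base component of both $\varphi_1$ and $\varphi_2$. Abstractly this amounts to a \emph{compositionality} property of the base abstraction, $\alpha_U(\Gamma) = \alpha_U(\alpha_V(\Gamma))$ for $U \subseteq V$: intuitively the $U$-base component of $\varphi_i$ is already reconstructible from its $V$-base component, so having the latter in common forces the former to coincide.

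Concretely, I would first apply \Cref{lm:alpheq} to the hypothesis $\alpha_V(\varphi_1) = \alpha_V(\varphi_2)$ to get a common $\beta$ together with witnesses $\langle W_i, \beta, \delta_i \rangle \in \base(\varphi_i, V)$ for $i = 1, 2$. I would then build a candidate $U$-base component $\beta'$ from $\beta$ and $U$ alone by \emph{re-truncating} $\beta$ at $U$. Partition $V \setminus U$ by the equalities entailed by $\beta$, and fix a substitution $\sigma$ that is the identity on $U$ and on all constants outside $V$, sends each $c \in V \setminus U$ whose $\beta$-class meets $U$ to a chosen constant of $U$ in that class, and sends each $c \in V \setminus U$ whose $\beta$-class misses $U$ to a fresh constant $w_{[c]}$ indexed by that $\beta$-class (so $\beta$-equivalent constants receive the same fresh name). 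Let $\beta'$ be $\sigma(\beta)$ with trivial literals removed and with every literal that no longer matches the syntactic shape of a $U$-base component (a function literal whose left-hand side, or all of whose arguments, have become auxiliary, or a (dis)equality now mentioning some $w_{[c]}$) set aside; correspondingly let $\delta_i' \eqdef \sigma(\delta_i) \wedge \{\sigma(c) \eq c \mid c \in V \setminus U\}$ conjoined with those set-aside literals, and let $W_i^+$ be $W_i$ augmented with the fresh constants $w_{[c]}$. The key point is that $\beta'$ depends only on $\beta$ and $U$, hence is literally the same for $i = 1$ and $i = 2$.

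What remains is to verify that $\langle W_i^+, \beta', \delta_i' \rangle$ meets conditions (a)--(e) for being a $U$-basis of $\varphi_i$. Conditions (a) and (c) hold by construction: the $w_{[c]}$ are chosen fresh, and literals were routed by their shape into the correct $\beta'$- or $\delta_i'$-slot. Conditions (d) and (e) hold for the old auxiliaries $W_i$ because they held for $\langle W_i, \beta, \delta_i \rangle$, and for each new $w_{[c]}$ by the very case split that assigned it -- here I use the standing fact, already exploited in the proofs of \Cref{lm:Visenough} and \Cref{lm:Visenoughdeq}, that for $a, b \in V$ one has $\Gamma \vdash a \bowtie b$ iff $\beta \vdash a \bowtie b$, which in particular makes ``$c$'s class meets $U$'' the same question for $\varphi_1$ and $\varphi_2$. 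The heart of the argument is condition (b), $(\exists\, W_i^+ \cdot \beta' \wedge \delta_i') \equiv \varphi_i$: I would observe that, up to the deleted trivialities, $\beta' \wedge \delta_i'$ is $\sigma(\beta \wedge \delta_i) \wedge \{\sigma(c) \eq c \mid c \in V \setminus U\}$, and that existentially quantifying the fresh $w_{[c]}$ paired with the pinning literals $w_{[c]} \eq c$ exactly undoes $\sigma$, while for a $c$ re-pointed to some $u \in U$ the literal $u \eq c$ is already entailed by $\beta$; hence $(\exists\, W_i^+ \cdot \beta' \wedge \delta_i') \equiv \exists W_i \, (\beta \wedge \delta_i) \equiv \varphi_i$. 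I expect this equivalence check, together with the fully explicit case analysis tracking which of the six $\beta'$/$\delta_i'$ slots each image literal $\sigma(\ell)$ lands in, to be the main obstacle -- tedious rather than deep, as is usual when arguing directly from the $V$-basis definition. Once it is done, $\beta'$ is a common element of $\bbase(\varphi_1, U)$ and $\bbase(\varphi_2, U)$, so \Cref{lm:alpheq} yields $\alpha_U(\varphi_1) = \alpha_U(\varphi_2)$.
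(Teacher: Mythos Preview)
Your proposal is correct, and in fact you already name the paper's entire argument in your first paragraph: the paper's proof is the two-line observation that $\alpha_U(\alpha_V(\varphi_1)) = \alpha_U(\alpha_V(\varphi_2))$ (trivially, from the hypothesis) together with the compositionality identity $\alpha_U(\alpha_V(\varphi_i)) = \alpha_U(\varphi_i)$ for $U \subseteq V$. The paper simply asserts the latter, presumably as an easy extension of idempotence (\Cref{lm:idem}).

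What you do differently is that, rather than invoking compositionality as a black box, you unfold it: you take a common $\beta \in \bbase(\varphi_1,V) \cap \bbase(\varphi_2,V)$ and explicitly build a common $\beta' \in \bbase(\varphi_1,U) \cap \bbase(\varphi_2,U)$ by re-truncating $\beta$ at $U$, then close with \Cref{lm:alpheq}. This is exactly the work one would have to do to justify the paper's compositionality step, so your route is more self-contained at the price of being considerably longer. The construction you sketch (the substitution $\sigma$, the routing of literals into $\beta'$ versus $\delta_i'$, the pinning equalities $\sigma(c) \eq c$) is the right one, and your identification of condition~(b) as the only non-mechanical check is accurate. If you want a proof of the paper's length, simply stop after your first paragraph and cite compositionality as a lemma in its own right; if you want to be thorough, your explicit construction is what that lemma's proof would look like.
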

\begin{proof}
  Follows from $\alpha_U(\alpha_V(\varphi_1)) = \alpha_U(\alpha_V(\varphi_2))$,
  and $\alpha_U(\alpha_V(\varphi_i)) = \alpha_U(\varphi_i)$, for $i \in \{1,
  2\}$.
\end{proof}
\setcounter{lemma}{7}
\begin{lemma}
  Let $V$ be a set of constants s.t. $x, y \in V$, $\varphi_1$ and $\varphi_2$
  be two sets of literals s.t. $\alpha_V(\varphi_1) = \alpha_V(\varphi_2)$.
  Then, $\alpha_V(\varphi_1 \land x \deq y) = \alpha_V(\varphi_2 \land x \deq
  y)$
\end{lemma}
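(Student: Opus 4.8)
The plan is to reduce the claim, via Lemma~\ref{lm:alpheq}, to exhibiting a single $V$-basis $\beta^\ast$ with $\beta^\ast \in \bbase(\varphi_1 \land x \deq y, V) \cap \bbase(\varphi_2 \land x \deq y, V)$. From the hypothesis $\alpha_V(\varphi_1) = \alpha_V(\varphi_2)$, Lemma~\ref{lm:alpheq} supplies a common $\beta$ with $\langle W, \beta, \delta_1\rangle \in \base(\varphi_1, V)$ and $\langle W, \beta, \delta_2\rangle \in \base(\varphi_2, V)$ for suitable $W, \delta_1, \delta_2$. We may assume $\varphi_1 \land x \deq y$ is satisfiable: otherwise $\varphi_1 \vdash x \eq y$, hence $\beta \vdash x \eq y$ (as $x, y \in V$), hence $\varphi_2 \vdash x \eq y$ and $\varphi_2 \land x \deq y$ is unsatisfiable as well, a case that does not occur for the (always satisfiable) path conditions to which the lemma is applied.

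The key step is the observation that adjoining the disequality $x \deq y$ merges no equivalence classes. The only rules of $\peuf$ with a disequality premise are \textsc{EqNeq}, which yields $\bot$ (excluded by satisfiability), and \textsc{PMod}, which rewrites a literal along an already-derivable equality; so the deductive closure of $\varphi_1 \land x \deq y$ is that of $\varphi_1$ together with the \textsc{PMod}-closure of $\{x \deq y\}$ under the equalities of $\varphi_1$, that is, disequalities $z \deq z'$ with $\varphi_1 \vdash x \eq z$ and $\varphi_1 \vdash y \eq z'$ (up to symmetry). Since $x, y \in V$, basis condition~(d) for $\langle W, \beta, \delta_1\rangle$ forbids $x$ and $y$ from being equated with any constant of $W$, so every new disequality between basis constants lies between constants of $V$. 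Hence $\langle W, \beta \cup N_1, \delta_1 \rangle \in \base(\varphi_1 \land x \deq y, V)$, where $N_1$ is the set of all disequalities $u \deq v$ with $u, v \in V$ in the deductive closure of $\varphi_1 \land x \deq y$, and analogously for $\varphi_2$ with a set $N_2$.

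It then suffices to show $N_1 = N_2$. By the description above, $u \deq v \in N_1$ iff $u \deq v$ is already a consequence of $\varphi_1$, or ($\varphi_1 \vdash x \eq u$ and $\varphi_1 \vdash y \eq v$), or ($\varphi_1 \vdash x \eq v$ and $\varphi_1 \vdash y \eq u$). Each condition concerns only literals over constants of $V$, so by the defining property of the $V$-base abstraction ($\varphi_i \vdash \ell \iff \beta \vdash \ell$ for literals $\ell$ over $V$, for $i = 1, 2$) each is equivalent to the corresponding condition on $\beta$; hence $N_1$ and $N_2$ coincide with a single set $N$ determined by $\beta$. Taking $\beta^\ast \eqdef \beta \cup N$ then gives $\beta^\ast \in \bbase(\varphi_1 \land x \deq y, V) \cap \bbase(\varphi_2 \land x \deq y, V)$, and Lemma~\ref{lm:alpheq} yields $\alpha_V(\varphi_1 \land x \deq y) = \alpha_V(\varphi_2 \land x \deq y)$.

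I expect the main obstacle to be the ``no class merging'' step: one must argue carefully, at the level of $\peuf$, that once $x \deq y$ is available the only new derivable facts are the \textsc{PMod}-closure of $x \deq y$ --- so that $\beta_{\eq}$, $\beta_{\funs}$, $\delta_{\eq}$, $\delta_{\funs}$ and $W$ all carry over unchanged --- and that this closure cannot reach a $W$-labelled class, which is exactly what basis condition~(d) provides and the reason no purifier hypothesis is needed here, in contrast to Lemmas~\ref{lm:Visenough}--\ref{lm:eqpres1v}.
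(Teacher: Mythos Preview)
Your proof is correct and follows essentially the same approach as the paper: obtain a common $\beta \in \bbase(\varphi_1, V) \cap \bbase(\varphi_2, V)$ via \Cref{lm:alpheq}, augment it with the new disequalities over $V$ induced by $x \deq y$ (a set determined entirely by $\beta$, hence identical for $\varphi_1$ and $\varphi_2$), and conclude again via \Cref{lm:alpheq}. The paper's version is terser---it takes the augmenting set to be $L = \{x \deq u \mid (y \eq u) \in \beta,\ u \in V\}$ and simply asserts $\beta \land L \in \bbase(\varphi_i \land x \deq y, V)$ without spelling out the ``no class merging'' justification or the role of basis condition~(d)---but the structure is identical.
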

\begin{proof}
  Let $\beta \in \bbase(\varphi_1, V)$. Then, $\beta \land L \in
  \bbase(\varphi_1\land x \deq y, V)$, where, $L = \{x \deq u \mid y \eq u \in
  \beta, u \in V\}$. Since $\alpha_V(\varphi_1)\eq\alpha_V(\varphi_2)$,
  by~\Cref{lm:alpheq}, $\beta \in \bbase(\varphi_2, V)$. Therefore, $\beta \land
  L \in \bbase(\varphi_2\land x\deq y, V)$. Since, $\bbase(\varphi_1\land x \deq
  y, V)\cap \bbase(\varphi_1\land x \deq y, V)\neq \emptyset$, by
  \Cref{lm:alpheq}, $\alpha_V(\varphi_1 \land x \deq y) = \alpha_V(\varphi_2
  \land x \deq y)$.
\end{proof}
\setcounter{lemma}{8}
\begin{lemma}\hg{modified statement}
  Let $V$ be a set of constants, $\varphi_1$, $\varphi_2$ be two sets of
  literals s.t. $\alpha_V(\varphi_1) = \alpha_V(\varphi_2)$, $y \in V,
  \Vec{y}\subseteq V$, $x'$ be a constant s.t. $x' \not \in \const(\varphi_1)
  \cup \const(\varphi_2)$, $V' = V \cup\{x'\}$, and $f(\Vec{y})$ be a term s.t
  there does not exists a term $t \in \terms(\varphi_1) \cup \terms(\varphi_2)$
  s.t. $\varphi_1 \vdash t \eq f(\Vec{y})$ or $\varphi_2 \vdash t \eq
  f(\Vec{y})$. Then,
\begin{enumerate}[(1)]
    \item $\alpha_{V'}(\varphi_1 \land x' \eq y) = \alpha_{V'}(\varphi_2 \land x' \eq y)$
    \item $\alpha_{V'}(\varphi_1 \land x' \eq f(\Vec{y})) = \alpha_{V'}(\varphi_2 \land x' \eq f(\Vec{y}))$
\end{enumerate}
\end{lemma}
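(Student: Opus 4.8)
The plan is to reduce both equalities to a claim about common $\beta$-parts of bases and then invoke \Cref{lm:alpheq}. Write $\psi$ for the literal being conjoined ($x'\eq y$ in part~(1), $x'\eq f(\Vec{y})$ in part~(2)). Since $\alpha_V(\varphi_1)=\alpha_V(\varphi_2)$, \Cref{lm:alpheq} produces a common $\beta\in\bbase(\varphi_1,V)=\bbase(\varphi_2,V)$; fix $V$-bases $\langle W_i,\beta,\delta_i\rangle$ of $\varphi_i$ for $i\in\{1,2\}$ and, using uniqueness of bases up to renaming of auxiliary constants, choose $W_1,W_2$ to avoid $x'$. I will show that $\langle W_i,\beta\cup\{\psi\},\delta_i\rangle$ is a $V'$-basis of $\varphi_i\land\psi$ for both $i$; since its $\beta$-part $\beta\cup\{\psi\}$ is then common to $\bbase(\varphi_1\land\psi,V')$ and $\bbase(\varphi_2\land\psi,V')$, \Cref{lm:alpheq} yields $\alpha_{V'}(\varphi_1\land\psi)=\alpha_{V'}(\varphi_2\land\psi)$, which is the claim.

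Verifying that $\langle W_i,\beta\cup\{\psi\},\delta_i\rangle$ is a $V'$-basis is mostly bookkeeping exploiting the freshness of $x'$. The literal $\psi$ has the shape required of a $\beta$-part: $x'\eq y$ is an equality between two constants of $V'$, while $x'\eq f(\Vec{y})$ has $x'\in V'$ and $\Vec{y}\subseteq V\subseteq V'$ with $\Vec{y}\cap V'\neq\emptyset$; all remaining shape constraints are inherited since $V\subseteq V'$ and $x'$ occurs neither in $\beta$, nor in $\delta_i$, nor in $W_i$. The equivalence requirement $(\exists W_i\cdot(\beta\cup\{\psi\})\land\delta_i)\equiv\varphi_i\land\psi$ holds because the constants of $\psi$ avoid $W_i$, so the quantifier commutes with the new conjunct and we may use $(\exists W_i\cdot\beta\land\delta_i)\equiv\varphi_i$. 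The two remaining ``no collapse'' conditions --- no derivation of an equality between a constant of $V'$ and one of $W_i$, and none between distinct constants of $W_i$ --- are where the hypotheses enter. For part~(1) this is immediate: conjoining $x'\eq y$ can produce a new derivation only of the form $x'\eq w$, and only when $\beta\land\delta_i\vdash y\eq w$, which is forbidden by the corresponding condition of the original basis because $y\in V$.

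For part~(2) the delicate point, and the main obstacle, is to rule out that $x'\eq f(\Vec{y})$ triggers a congruence collapse inside $\beta\land\delta_i$. It suffices to establish
\[
(\star)\qquad \text{no term } t\in\terms(\beta\land\delta_i)\text{ satisfies }\beta\land\delta_i\vdash t\eq f(\Vec{y}),
\]
for then the new class $\{x',f(\Vec{y})\}$ is disjoint from all existing classes, so conjoining $x'\eq f(\Vec{y})$ adds no equality between old constants --- in particular none involving $W_i$ --- and conditions (b), (d), (e) go through just as in part~(1). I will prove $(\star)$ by transferring the ``fresh term'' hypothesis from $\varphi_i$ across the basis. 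If $(\star)$ failed then, since every term in a $V$-basis is equated to a constant, some constant $v$ would satisfy $\beta\land\delta_i\vdash v\eq f(\Vec{y})$; by \Cref{lm:trmingam} there would be $f(\Vec{z})\in\terms(\beta\land\delta_i)$ with $\beta\land\delta_i\vdash z_j\eq y_j$ for all $j$. Since each $y_j\in V$ and the original basis never derives an equality between a constant of $V$ and one of $W_i$, every $z_j$ lies in $V$, so $f(\Vec{z})$ is a depth-one term over $V$ occurring in $\beta\land\delta_i$; using $(\exists W_i\cdot\beta\land\delta_i)\equiv\varphi_i$ and \Cref{lm:trmingam} once more, $\varphi_i$ would contain a term $f(\Vec{z}')$ with $\varphi_i\vdash z_j'\eq z_j$, and $\beta\land\delta_i\vdash z_j\eq y_j$ transfers to $\varphi_i\vdash z_j\eq y_j$ for these $V$-constants, so by congruence $\varphi_i\vdash f(\Vec{z}')\eq f(\Vec{y})$ with $f(\Vec{z}')\in\terms(\varphi_i)$ --- contradicting the hypothesis. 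The real effort in the proof is exactly this term-structure transfer between a formula and its $V$-basis; everything else is routine verification of the basis conditions, after which \Cref{lm:alpheq} closes both parts.
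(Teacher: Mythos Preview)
Your strategy coincides with the paper's: obtain a common $\beta$ via \Cref{lm:alpheq}, extend it to a common $\beta$-part of a $V'$-basis for $\varphi_i\land\psi$, and apply \Cref{lm:alpheq} again. You add only $\{\psi\}$ to $\beta$, whereas the paper adds the richer sets $L=\{\ell\mid\ell[x'\mapsto y]\in\beta\}$ in part~(1) and $X_{def}=\{x'\bowtie t\mid\beta\vdash f(\Vec{y})\bowtie t,\ \depth(t)\le 1,\ \const(t)\cap\const(\beta)\subseteq V\}$ in part~(2); the paper's proof is terser than yours and does not spell out the basis-condition checks you attempt.

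There is one soft spot in your $(\star)$ argument. To invoke \Cref{lm:trmingam} on $\varphi_i$ you need a constant $c\in\const(\varphi_i)$ with $\varphi_i\vdash c\eq f(\Vec{z})$, but the literal in $\beta\land\delta_i$ that witnesses $f(\Vec{z})$ may have the form $w\eq f(\Vec{z})$ with $w\in W_i$; the equivalence $(\exists W_i\cdot\beta\land\delta_i)\equiv\varphi_i$ then quantifies $w$ away and gives you no such $c$. The formal clauses~(a)--(e) of the basis definition do not forbid such ``spurious'' function literals in $\delta_i$: for instance $\langle\{w\},\{a\eq b\},\{w\eq f(a)\}\rangle$ satisfies every clause for $\Gamma=\{a\eq b\}$ and $V=\{a,b\}$, yet $f(a)$ corresponds to nothing in $\Gamma$. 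The repair is straightforward---appeal to the paper's stated uniqueness of $V$-bases up to renaming of $W$, i.e., work with the canonical basis coming from congruence closure, in which every function literal of $\delta_i$ names an actual subterm class of $\varphi_i$; then a $W_i$-representative $w$ of $f(\Vec{z})$ does correspond to a term of $\varphi_i$ and your transfer goes through.
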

\begin{proof}\hg{new}
  \begin{enumerate}[(1)]
  \item Let $\beta \in \bbase(\varphi_1, V)$. By definition of basis, $\beta
    \land L \in \bbase(\varphi_1 \land x' \eq y, V')$, where $L = \{\ell \mid
    \ell[x'\mapsto y] \in \beta\}$. Since $\alpha_V(\varphi_1) =
    \alpha_V(\varphi_2)$, by \Cref{lm:alpheq}, $\beta \in \bbase(\varphi_2, V)$.
    Therefore, $\beta \land L \in \bbase(\varphi_2 \land x' \eq y, V')$ as well.
    Hence $\bbase(\varphi_1 \land x' \eq y, V')\cap \bbase(\varphi_2 \land x'
    \eq y, V') \neq \emptyset$. By \Cref{lm:alpheq}, $\alpha_{V'}(\varphi_1
    \land x' \eq y) = \alpha_{V'}(\varphi_2 \land x' \eq y)$.

  \item Let $\beta \in \bbase(\varphi_1, V)$. By definition of basis, $\beta
    \land X_{def} \in \bbase(\varphi_1 \land x' \eq f(\Vec{y}), V')$, where $X_{def}$ is
    \begin{multline*}
      \{x' \bowtie t \mid \beta \vdash f(\Vec{y}) \bowtie t, \depth(t) = 1,
      \const(t) \cap \const(\beta) \subseteq V\}
    \end{multline*}
    Since $\alpha_V(\varphi_1) = \alpha_V(\varphi_2)$, by \Cref{lm:alpheq},
    $\beta \in \bbase(\varphi_2, V)$. Therefore, $\beta \land X_{def} \in
    \bbase(\varphi_2 \land x' \eq f(\Vec{y}), V')$ as well. Hence
    $\bbase(\varphi_1 \land x' \eq f(\Vec{y}), V')\cap \bbase(\varphi_2 \land x'
    \eq f(\Vec{y}), V') \neq \emptyset$. By \Cref{lm:alpheq},
    $\alpha_{V'}(\varphi_1 \land x' \eq f(\Vec{y})) = \alpha_{V'}(\varphi_2
    \land x' \eq f(\Vec{y}))$.
  \end{enumerate}
\end{proof}
\end{document}